\newtheorem{thm}{Theorem}[section]
\newtheorem{prop}[thm]{Proposition}
\theoremstyle{remark}
\newtheorem*{remark}{Remark}
\numberwithin{equation}{section}
\tikzset{->-/.style={decoration={
  markings,
  mark=at position .68 with {\arrow{Latex}}},postaction={decorate}}}
\tikzset{->>-/.style={decoration={
  markings,
  mark=at position .69 with {\arrow{Latex[sep=-10pt]Latex}}},postaction={decorate}}}
\tikzset{-<-/.style={decoration={
  markings,
  mark=at position .56 with {\arrow{Latex[reversed]}}},postaction={decorate}}}
\tikzset{-<<-/.style={decoration={
  markings,
  mark=at position .57 with {\arrow{Latex[reversed,sep=-10pt]Latex[reversed]}}},postaction={decorate}}}
\tikzset{-|-/.style={decoration={
  markings,
  mark=at position .51 with {\arrow{Bar}}},postaction={decorate}}}
\tikzset{-||-/.style={decoration={
  markings,
  mark=at position .49 with {\arrow{Bar[sep=-5pt] Bar}}},postaction={decorate}}}
\tikzset{-!-/.style={decoration={
  markings,
  mark=at position .51 with {\arrow{Tee Barb[length=4pt]}}},postaction={decorate}}}
\tikzset{-!!-/.style={decoration={
  markings,
  mark=at position .51 with {\arrow{Tee Barb[sep=1pt,length=4pt] Tee Barb[length=4pt]}}},postaction={decorate}}}
\newcommand{\ds}{\displaystyle}
\renewcommand{\author}[1]{\large\rm #1\\ \bigskip}
\newcommand{\address}[1]{{\normalsize\it #1\\}\bigskip}
\renewcommand{\title}[1]{\bigskip\bigskip\Large\bf #1\bigskip\bigskip\\}
\newcommand{\Bigpsi}[3]{\phantom{\Psi}_2 \kern -.05em
\Psi_2\left(\genfrac{}{}{0pt}{}{#1}{#2}\biggl|#3\right)}
\newcommand{\bea}{\begin{eqnarray}}
\newcommand{\eea}{\end{eqnarray}}
\newcommand{\olamh}{\hat{\olamh}}
\def\EXP{\textrm{{\large e}}}
\newcommand{\al}{{\bm{\alpha}}}
\newcommand{\bt}{{\bm{\beta}}}
\newcommand{\gm}{{\bm{\gamma}}}
\newcommand{\ccy}{x_a}
\newcommand{\ccya}{z_w}
\newcommand{\ccyb}{x}
\newcommand{\ccyc}{z_n}
\newcommand{\ccz}{y_d}
\newcommand{\ccza}{z_e}
\newcommand{\cczb}{y}
\newcommand{\cczc}{z_s}
\newcommand{\ccx}{x}
\newcommand{\cca}{y_c}
\newcommand{\ccb}{x_c}
\newcommand{\ccc}{x_d}
\newcommand{\ccd}{x_b}
\newcommand{\cce}{y_b}
\newcommand{\ccf}{y_a}
\newcommand{\ccpa}{\alpha_1}
\newcommand{\ccpb}{\alpha_2}
\newcommand{\ccqa}{\beta_1}
\newcommand{\ccqb}{\beta_2}
\newcommand{\ccra}{\gamma_1}
\newcommand{\ccrb}{\gamma_2}
\newcommand{\ccpp}{{\al}}
\newcommand{\ccqq}{{\bt}}
\newcommand{\ccrr}{{\gm}}
\renewcommand{\L}{L}
\newcommand{\LL}{L_1}
\newcommand{\M}{L_2}
\newcommand{\Lb}{L_4}
\newcommand{\Mb}{L_3}
\newcommand{\Lxa}{\L_{x^2}}
\newcommand{\Lxb}{\L_{x^1}}
\newcommand{\Lxc}{\L_{x^0}}
\newcommand{\Dxa}{\Delta_{x^2}}
\newcommand{\Dxb}{\Delta_{x^1}}
\newcommand{\Dxc}{\Delta_{x^0}}
\newcommand{\znb}{\overline{z}_n}
\newcommand{\A}[7]{A(#1;#2,#3,#4,#5;#6,#7)}
\newcommand{\B}[7]{B(#1;#2,#3,#4,#5;#6,#7)}
\newcommand{\C}[7]{C(#1;#2,#3,#4,#5;#6,#7)}
\newcommand{\D}[8]{A_{#8}(#1;#2,#3,#4,#5;#6,#7)}
\def\EXP{\textrm{{\large e}}}
\newcounter{app}
\newcounter{sapp}[app]
\begin{document}

\vglue 2cm

\begin{center}

\title{Lax matrices for lattice equations which satisfy consistency-around-a-face-centered-cube}
\author{Andrew P.~Kels}
\address{Scuola Internazionale Superiore di Studi Avanzati,\\ Via Bonomea 265, 34136 Trieste, Italy}

\end{center}

\begin{abstract}

There is a recently discovered formulation of the multidimensional consistency integrability condition for lattice equations, called consistency-around-a-face-centered-cube (CAFCC), which is applicable to equations defined on a vertex and its four nearest neighbours on the square lattice.  This paper introduces a method of deriving Lax matrices for the equations which satisfy CAFCC.  This method gives novel Lax matrices for such equations, which include previously known equations of discrete Toda-, or Laplace-type, as well as newer equations which have only appeared in the context of CAFCC.

\end{abstract}





\section{Introduction}


In modern times, one of the important characteristics that is associated to integrability of lattice equations is the property of multidimensional consistency \cite{nijhoffwalker,BobSurQuadGraphs,ABS}. One of the notable applications of multidimensional consistency is that this property (with appropriate assumptions) can be used to almost algorithmically derive Lax pairs for equations  \cite{NijhoffQ4Lax,BobSurQuadGraphs,BHQKLax,HietarintaNEWCAC}, whereas generally this is otherwise known to be a difficult problem.  Probably the most well known examples of multidimensionally consistent equations are the two-dimensional scalar lattice (quad) equations in the Adler-Bobenko-Suris (ABS) list \cite{ABS,ABS2}.  These are equations defined on four vertices of a face of the square lattice, for which multidimensional consistency takes the form of consistency-around-a-cube (CAC).   Outside of the ABS list there are also several other known types of CAC equations, including Boussinesq-type equations \cite{TongasNijhoffBoussinesq,HietarintaBoussinesq} and other multi-component equations \cite{Kels:2018qzx,KNPT2020,ZhangKampZhang}, as well as equations satisfying formulations of multidimensional consistency which are different from CAC \cite{JoshiNakazonoCUBO,Kels:2020zjn}.

This paper is concerned with a recently discovered form of multidimensional consistency, which is known as consistency-around-a-face-centered-cube (CAFCC) \cite{Kels:2020zjn}.  The CAFCC property is applicable to equations defined on a vertex and its four nearest neighbours in the square lattice, and is formulated similarly to CAC, but instead of the regular cube the equations are required to be consistent when defined on a face-centered cube (face-centered cubic unit cell).  The reason for this is that CAFCC requires to take into account additional variables on faces, and this necessarily leads to the introduction of eight corner equations, in addition to the usual six face equations, which are naturally defined on the face-centered cube.  Overall, CAFCC requires consistency of an overdetermined system of fourteen equations on the face-centered cube for eight unknowns, while in comparison CAC requires consistency of an overdetermined system of six equations on the cube for four unknowns.

The motivation for the formulation of CAFCC came from a correspondence between discrete integrable equations and integrable lattice models of statistical mechanics, where multidimensionally consistent equations are found to be equivalent to the equations for the critical point in an asymptotic (quasi-classical) expansion of the Yang-Baxter equation.  For example, when the Yang-Baxter equation takes the form of the star-triangle relation, the equations for the critical point are known to be equivalent to ABS equations \cite{Bazhanov:2007mh,Bazhanov:2010kz,Bazhanov:2016ajm,Kels:2018xge}.  Furthermore, for the star-triangle relations which are equivalent to hypergeometric beta-type integrals, there are known to be counterpart star-triangle relations for each ABS equation.  Such star-triangle relations are constructed from products of three Boltzmann weights, whose leading order asymptotics may be written as a sum of three Lagrangian functions.  On the other hand, the face-centered quad equations which satisfy CAFCC are obtained from products of four Boltzmann weights, whose leading asymptotics are given by a sum of four Lagrangian functions.   A comparison of these parallel constructions of face-centered quad equations and ABS equations from the integrals of Boltzmann weights for statistical mechanics is outlined in Figure \ref{figintro}.

\begin{figure}[htb!]
\centering

\begin{tikzpicture}[scale=1.35]

\draw[-,thick] (-2,0)--(-2,1);
\draw[-,thick] (-2,0)--(-2.87,-0.5);
\draw[-,thick] (-2,0)--(-1.13,-0.5);
\fill (-2,0) circle (1.5pt)
node[below=3.5pt]{\color{black} $c$};
\filldraw[fill=black,draw=black] (-2,1) circle (1.5pt)
node[above=3.5pt] {\color{black} $a$};
\filldraw[fill=black,draw=black] (-2.87,-0.5) circle (1.5pt)
node[left=3.5pt] {\color{black} $d$};
\filldraw[fill=black,draw=black] (-1.13,-0.5) circle (1.5pt)
node[right=3.5pt] {\color{black} $b$};

\fill (-3.2,1.65) circle (0.01pt)
node[left=1pt]{\color{black} $(a)$};

\fill (-3.2,1.65) circle (0.01pt)
node[right=1pt]{\color{black} $\int dc\, W(c,a)W(c,b)W(d,c)$};

\fill (-3.2,-1.35) circle (0.01pt)
node[left=1pt]{\color{black} $(b)$};

\fill (-3.2,-1.35) circle (0.01pt)
node[right=1pt]{\color{black} $\substack{\ds A(a,b,c,d) =\phantom{xx} \\[0.1cm] \ds L(c,a)+L(c,b)+L(d,c)}$};

\fill (1.4,0) circle (0.01pt)
node[above=1pt]{\color{black} $\substack{\ds W(a,b)=  \ds \EXP^{\frac{L(a,b)}{\hbar}+O(1)}, \\[0.1cm] \ds \hbar\to0}$};
\draw[-latex,thick] (1.4,0)--(1.4,-1.3);


\begin{scope}[xshift=130pt,yshift=5pt]

\begin{scope}[xshift=10pt]
\draw[-,thick] (-0.8,-0.8)--(0.8,0.8); \draw[-,thick] (-0.8,0.8)--(0.8,-0.8);

\fill (-0.8,-0.8) circle (1.7pt)
node[left=3.5pt]{\color{black} $c$};
\fill (0.8,0.8) circle (1.7pt)
node[right=3.5pt]{\color{black} $b$};
\fill (-0.8,0.8) circle (1.7pt)
node[left=3.5pt]{\color{black} $a$};
\fill (0.8,-0.8) circle (1.7pt)
node[right=3.5pt]{\color{black} $d$};
\fill (0,0) circle (1.7pt)
node[below=3.5pt]{\color{black} $e$};
\end{scope}

\fill (-1.5,1.5) circle (0.01pt)
node[left=1pt]{\color{black} $(a)$};

\fill (-1.5,1.5) circle (0.01pt)
node[right=1pt]{\color{black} $\int de\, W(a,e)W(b,e)W(c,e)W(d,e)$};

\fill (-1.5,-1.5) circle (0.01pt)
node[left=1pt]{\color{black} $(b)$};

\fill (-1.5,-1.5) circle (0.01pt)
node[right=1pt]{\color{black} $\substack{\ds A(e;a,b,c,d)=\phantom{xxxx} \\[0.1cm] \ds L(a,e)+L(b,e)+L(c,e)+L(d,e)}$};


\end{scope}

\begin{scope}[xshift=-55,yshift=-80pt,scale=0.9]

\draw[-,dashed,thick] (-0.8,-0.8)--(-0.8,0.8)--(0.8,0.8)--(0.8,-0.8)--(-0.8,-0.8);
\draw[-,thick] (-0.8,-0.8)--(0.8,0.8);\draw[-,thick] (-0.8,0.8)--(-0.8,-0.8)--(0.8,-0.8);

\fill (-0.8,-0.8) circle (1.9pt)
node[left=3.5pt]{\color{black} $c$};
\fill (0.8,0.8) circle (1.9pt)
node[right=3.5pt]{\color{black} $b$};
\fill (-0.8,0.8) circle (1.9pt)
node[left=3.5pt]{\color{black} $a$};
\fill (0.8,-0.8) circle (1.9pt)
node[right=3.5pt]{\color{black} $d$};

\fill (-1.4,-2.0) circle (0.01pt)
node[left=1pt]{\color{black} $(c)$};

\fill (-1.55,-1.9) circle (0.01pt)
node[right=1pt]{\color{black} $\substack{\mbox{ABS quad equation} \\[0.3cm] \ds \frac{\partial A(a,b,c,d)}{\partial c}=0}$};

\end{scope}

\begin{scope}[xshift=132pt,yshift=-80pt,scale=1.0]

\begin{scope}[xshift=10pt]
\draw[-,thick] (-0.8,-0.8)--(0.8,0.8); \draw[-,thick] (-0.8,0.8)--(0.8,-0.8);
\draw[-,dashed,thick] (-0.8,-0.8)--(-0.8,0.8)--(0.8,0.8)--(0.8,-0.8)--(-0.8,-0.8);

\fill (-0.8,-0.8) circle (1.8pt)
node[left=3.5pt]{\color{black} $c$};
\fill (0.8,0.8) circle (1.8pt)
node[right=3.5pt]{\color{black} $b$};
\fill (-0.8,0.8) circle (1.8pt)
node[left=3.5pt]{\color{black} $a$};
\fill (0.8,-0.8) circle (1.8pt)
node[right=3.5pt]{\color{black} $d$};
\fill (0,0) circle (1.8pt)
node[below=3.5pt]{\color{black} $e$};
\end{scope}

\fill (-1.5,-1.85) circle (0.01pt)
node[left=1pt]{\color{black} $(c)$};

\fill (-1.43,-1.75) circle (0.01pt)
node[right=1pt]{\color{black} $\substack{\mbox{Face-centered quad equation} \\[0.3cm] \ds \frac{\partial A(e;a,b,c,d)}{\partial e}=0}$};

\end{scope}

\end{tikzpicture}
\caption{Outline of Yang-Baxter/multidimensional consistency correspondence (for simplicity, parameter dependences not shown): (a) An integral of a product of Boltzmann weights $W(a,b)$ which satisfies a form of the Yang-Baxter equation.  (b) A sum of Lagrangian functions $L(a,b)$ which arise in an asymptotic (quasi-classical) expansion $\hbar\to0$ of (a). (c) An integrable (multidimensionally consistent) quad equation, which arises as the equation for the critical/saddle point of the integral of (a).  The combination of three Boltzmann weights on the left results in ABS quad equations, and the combination four Boltzmann weights on the right results in face-centered quad equations. 
Note that a combination of two Boltzmann weights (but not in an integral) also has an interpretation in terms of discrete integrability, namely as one of the components for two-component Yang-Baxter maps \cite{Kels:2019ktt}.}
\label{figintro}
\end{figure}
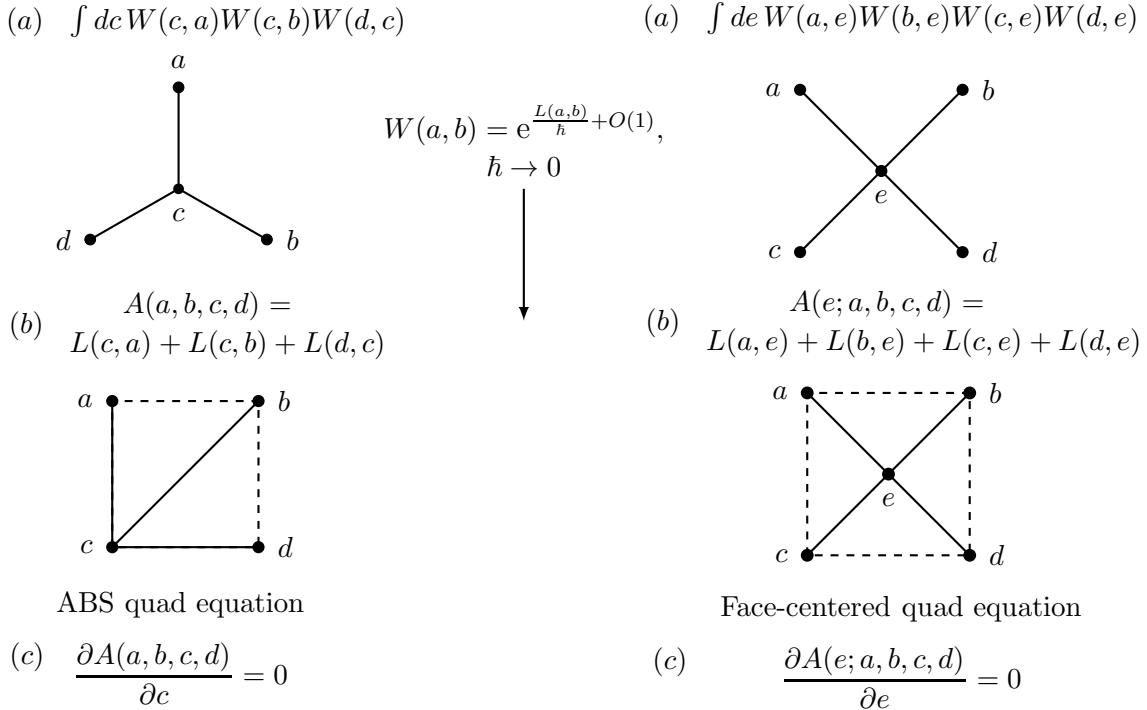

In the approach of Figure \ref{figintro}, the Boltzmann weights can be seen to be the fundamental objects from which both types of multidimensionally consistent equations can be derived.  The CAFCC equations obtained in this way include equations that have previously appeared in the literature as discrete Laplace-type equations \cite{AdlerPlanarGraphs,BobSurQuadGraphs,MR2467378,BG11,Suris_DiscreteTimeToda}, which may be identified with expressions for CAFCC equations which were denoted as type-A in \cite{Kels:2020zjn}.   However, the construction outlined in Figure \ref{figintro} is more general than that for discrete Laplace-type equations (because the equations on the right are considered independently of the ABS equations on the left), and this leads also to new equations, called type-B and type-C, that have not been previously considered outside of CAFCC.

The purpose of this paper is to present a method to derive Lax pairs for the face-centered quad equations from the property of CAFCC, in analogy with the method used to obtain Lax pairs from CAC for regular quad equations.  However, because of the differences in the formulations of CAC and CAFCC, the method of deriving Lax pairs from the former does not extend to the latter.  The main obstacle is the additional face variables for the face-centered quad equations which have no analogue for CAC, and need to be considered in an evolution around the face-centered cube.   To overcome this, a more suitable evolution on the face-centered cube will be chosen, where instead of evolving from a corner vertex to a corner vertex, the Lax matrices are obtained from different evolutions from face vertices to face vertices.  Such an alternative approach will be seen to yield the desired Lax matrices for both the type-A and type-B CAFCC equations.  The main results of this paper are both the new method to derive Lax matrices, and the resulting expressions for the Lax matrices themselves.

The paper's layout is as follows.  In Section \ref{sec:FCC}, an overview of the CAFCC property and equations will be given.  In Section \ref{sec:laxmethods}, the details will be given on how to derive the Lax matrices and compatibility conditions from the property of CAFCC, for both type-A and type-B equations.  In Section \ref{sec:laxexamples}, explicit examples of the Lax matrices that are obtained through the methods of Section \ref{sec:laxmethods} will be given, using the expressions for the CAFCC equations which are listed in  Appendix \ref{app:equations}.

\section{Face-centered quad equations and CAFCC}\label{sec:FCC}

The concept of face-centered quad equations, and their integrability in terms of multidimensional consistency formulated as consistency-around-a-face-centered-cube (CAFCC), was previously given by the author \cite{Kels:2020zjn}.
A face-centered quad equation may be written as 
\begin{align}\label{afflin}
\A{x}{x_a}{x_b}{x_c}{x_d}{\ccpp}{\ccqq}=0,
\end{align}
where $A$ is a multivariate polynomial of five variables $x,x_a,x_b,x_c,x_d$.  There is no restriction on the degree of the face variable $x$, but the expression \eqref{afflin} should be degree 1 in each of the four corner variables $x_a,x_b,x_c,x_d$. This is typically referred to as the affine-linear, or the multilinear property, which is a natural requirement for the equations to define a unique evolution in the lattice.  In \eqref{afflin}, the $\ccpp$ and $\ccqq$ are parameters which each have two components, as
\begin{align}
\label{pardefs}
\ccpp=(\ccpa,\ccpb),\qquad\ccqq=(\ccqa,\ccqb).
\end{align}

The face-centered quad equations which were found to satisfy CAFCC, also have a typical form
\begin{align}
\label{4legtemp}
\frac{a(x;x_a;\alpha_2,\beta_1)a(x;x_d;\alpha_1,\beta_2)}{a(x;x_b;\alpha_2,\beta_2)a(x;x_c;\alpha_1,\beta_1)}=1,
\end{align}
where $a(x_a;x_b;\alpha,\beta)$, is a ratio of polynomials of at most degree 1 in $x_b$.  In this form, the face-centered quad equation may be regarded as an equation on the vertices and solid edges shown in Figure \ref{fig-face}.  This is useful for a graphical presentation of the assignment of parameters to the face-centered cube for CAFCC, which will be used in the following.

 \begin{figure}[tbh]
\centering

\begin{tikzpicture}[scale=0.7]

\begin{scope}[xshift=300pt]

\draw[-,gray,very thin,dashed] (5,-1)--(5,3)--(1,3)--(1,-1)--(5,-1);
\draw[-,thick] (5,3)--(1,-1);
\draw[-,thick] (5,-1)--(1,3);
\fill (0.8,-0.0) circle (0.1pt)
node[left=0.5pt]{\color{black}\small $\ccpa$};
\fill (5.2,-0.0) circle (0.1pt)
node[right=0.5pt]{\color{black}\small $\ccpa$};
\fill (4,-1.2) circle (0.1pt)
node[below=0.5pt]{\color{black}\small $\ccqb$};
\fill (4,3.2) circle (0.1pt)
node[above=0.5pt]{\color{black}\small $\ccqb$};
\fill (0.8,2.0) circle (0.1pt)
node[left=0.5pt]{\color{black}\small $\ccpb$};
\fill (5.2,2.0) circle (0.1pt)
node[right=0.5pt]{\color{black}\small $\ccpb$};
\fill (2,-1.2) circle (0.1pt)
node[below=0.5pt]{\color{black}\small $\ccqa$};
\fill (2,3.2) circle (0.1pt)
node[above=0.5pt]{\color{black}\small $\ccqa$};
\fill (3,1) circle (4.1pt)
node[left=2.5pt]{\color{black} $\ccx$};
\fill (1,-1) circle (4.1pt)
node[left=1.5pt]{\color{black} $x_c$};
\filldraw[fill=black,draw=black] (1,3) circle (4.1pt)
node[left=1.5pt]{\color{black} $x_a$};
\fill (5,3) circle (4.1pt)
node[right=1.5pt]{\color{black} $x_b$};
\filldraw[fill=black,draw=black] (5,-1) circle (4.1pt)
node[right=1.5pt]{\color{black} $x_d$};

\draw[-,dotted,thick] (2,-1.2)--(2,3.2);\draw[-,dotted,thick] (4,-1.2)--(4,3.2);
\draw[-,dotted,thick] (0.8,-0.0)--(5.2,-0.0);\draw[-,dotted,thick] (0.8,2.0)--(5.2,2.0);

\end{scope}

\end{tikzpicture}

\caption{Variables and parameters on the vertices and edges of a face of the face-centered cube for the face-centered quad equation \eqref{4legtemp}.}  
\label{fig-face}
\end{figure}
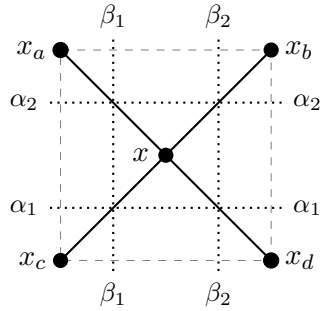

\subsection{Consistency-around-a-face-centered-cube}\label{sec:CAFCC}

CAFCC can be defined in terms of fourteen face-centered quad equations of the form \eqref{afflin}, which are centered at fourteen vertices of the two halves of the face-centered cube shown in Figure \ref{CAFCCcube}.  The face-centered cube is presented in this way to more clearly see the configurations of edges and associated parameters, while this also resembles the form of the Yang-Baxter equation from which CAFCC was originally derived.  In Figure \ref{CAFCCcube}, there are three parameters $\al$, $\bt$, $\gm$, associated to three orthogonal lattice directions, while the components of $\gm$ are always exchanged at an edge where two faces which meet orthogonally.  

The fourteen equations on the face-centered cube will be denoted here by
\begin{align}\label{14total}
    \D{x}{x_a}{x_b}{x_c}{x_d}{\al}{\bt}{i}=0,\qquad i=1,\ldots,14.
\end{align}  
Then the following six equations
\begin{align}\label{6face}
\begin{array}{rrr}
\D{\ccyb}{\ccy}{\ccd}{\ccb}{\ccc}{\ccpp}{\ccqq}{1}=0, \hspace{-0.17cm} & \D{\ccya}{\ccf}{\ccy}{\cca}{\ccb}{\ccpp}{\ccrr}{2}=0,\hspace{-0.17cm}
&\D{\ccyc}{\ccf}{\cce}{\ccy}{\ccd}{\ccrr}{\ccqq}{3}=0, \\[0.1cm]
\D{\cczb}{\ccf}{\cce}{\cca}{\ccz}{\ccpp}{\ccqq}{4}=0, \hspace{-0.17cm} & \D{\ccza}{\cce}{\ccd}{\ccz}{\ccc}{\ccpp}{\ccrr}{5}=0,\hspace{-0.17cm} &\D{\cczc}{\cca}{\ccz}{\ccb}{\ccc}{\ccrr}{\ccqq}{6}=0,
\end{array}
\end{align}
are centered at the six face vertices $\ccyb,\ccya,\ccyc,\cczb,\ccza,\cczc$, respectively, and the following eight equations
\begin{align}\label{8corner}
\begin{array}{rr}
\ds\D{\ccy}{\ccya}{\ccf}{\ccyb}{\ccyc}{(\ccqa,\ccrb)}{(\ccpb,\ccra)}{7}=0, \hspace{0.05cm}
&\D{\ccd}{\ccza}{\cce}{\ccyb}{\ccyc}{(\ccqb,\ccrb)}{(\ccpb,\ccra)}{8}=0, \\[0.1cm]
\D{\ccb}{\ccya}{\cca}{\ccyb}{\cczc}{(\ccqa,\ccrb)}{(\ccpa,\ccra)}{9}=0, \hspace{0.05cm}
&\D{\ccc}{\ccza}{\ccz}{\ccyb}{\cczc}{(\ccqb,\ccrb)}{(\ccpa,\ccra)}{10}=0, \\[0.1cm]
\D{\ccf}{\ccya}{\ccy}{\cczb}{\ccyc}{(\ccqa,\ccra)}{(\ccpb,\ccrb)}{11}=0, \hspace{0.05cm}
&\D{\cce}{\ccza}{\ccd}{\cczb}{\ccyc}{(\ccqb,\ccra)}{(\ccpb,\ccrb)}{12}=0, \\[0.1cm]
\D{\cca}{\ccya}{\ccb}{\cczb}{\cczc}{(\ccqa,\ccra)}{(\ccpa,\ccrb)}{13}=0, \hspace{0.05cm}
&\D{\ccz}{\ccza}{\ccc}{\cczb}{\cczc}{(\ccqb,\ccra)}{(\ccpa,\ccrb)}{14}=0, 
\end{array}
\end{align}
are centered at the eight corner vertices $\ccy,\ccd,\ccb,\ccc,\ccf,\cce,\cca,\ccz$, respectively, of Figure \ref{CAFCCcube}.

\begin{figure}[hbt!]
\centering
\begin{tikzpicture}

\begin{scope}[scale=0.92]

\draw[-,gray!60!white,very thin,dashed] (-3,3)--(-1,4)--(3,4)--(3,0);
\draw[-,gray!60!white,very thin,dashed] (-3,-1)--(-1,0)--(3,0)--(1,-1)--(-3,-1);

\draw[-,very thick] (-3,-1)--(3,0);\draw[-,very thick] (-1,0)--(1,-1);
\draw[-,very thick] (-3,-1)--(-1,4);\draw[-,very thick] (-3,3)--(-1,0);
\draw[-,very thick] (-1,0)--(3,4);\draw[-,very thick] (-1,4)--(3,0);
\draw[-,very thick] (-2.98,-1)--(-2.98,3);\draw[-,very thick] (-0.98,0)--(-0.98,4);
\draw[-,very thick] (-3.01,-1)--(-3.01,3);\draw[-,very thick] (-1.01,0)--(-1.01,4);
\draw[-,black,thick,dotted] (-1.9,3.75)--(-1.5,3.75)--(-1.5,-0.25)--(3.25,-0.25);
\draw[-,black,thick,dotted] (-2.5,-1.25)--(0,0)--(0,4)--(0.4,4.2);
\draw[-,black,thick,dotted] (-2.9,3.25)--(-2.5,3.25)--(-2.5,-0.75)--(2.25,-0.75);
\draw[-,black,thick,dotted] (-0.5,-1.25)--(2,0)--(2,4)--(2.4,4.2);
\draw[-,black,thick,dotted] (-3.5,0.2) .. controls (-3.25,0.4) ..(-3,1) .. controls (-2.75,1.95) .. (-2.5,2.25) .. controls (-2.2,2.45) and (-1.8,2.65) .. (-1.5,2.75) .. controls (-1.25,2.5) .. (-1,2) .. controls (-0.5,1.2) .. (0,1) -- (3.3,1); 
\draw[-,black,thick,dotted] (-3.5,1.8) .. controls (-3.25,1.6) .. (-3,1) .. controls (-2.75,0.5) .. (-2.5,0.25) .. controls (-2.2,0.25) and (-1.8,0.45) .. (-1.5,0.75) .. controls (-1.25,1) .. (-1,2) .. controls (-0.5,2.8) .. (0,3) -- (3.3,3); 

\filldraw[fill=black,draw=black] (1,-1) circle (3.1pt)
node[below=1.5pt]{\small $\ccc$};
\filldraw[fill=black,draw=black] (0,-0.5) circle (3.1pt)
node[above=1.5pt]{\small $\ccyb$};
\filldraw[fill=black,draw=black] (-1,0) circle (3.1pt);
\fill[black!] (-1,0.1) circle (0.01pt)
node[right=4pt]{\color{black}\small $\ccy$};
\filldraw[fill=black,draw=black] (-3,-1) circle (3.1pt)
node[below=1.5pt]{\small $\ccb$};
\filldraw[fill=black,draw=black] (3,0) circle (3.1pt)
node[right=1.5pt]{\color{black}\small $\ccd$};
\filldraw[fill=black,draw=black] (-3,3) circle (3.1pt)
node[left=1.5pt]{\color{black}\small $\cca$};
\filldraw[fill=black,draw=black] (-1,4) circle (3.1pt)
node[above=1.5pt]{\small $\ccf$};
\filldraw[fill=black,draw=black] (-2,1.5) circle (3.1pt);
\fill[black!] (-1.95,1.5) circle (0.01pt)
node[below=7.5pt]{\color{black}\small $\ccya$};
\filldraw[fill=black,draw=black] (1,2) circle (3.1pt)
node[above=3.5pt]{\small $\ccyc$};
\filldraw[fill=black,draw=black] (3,4) circle (3.1pt)
node[above=1.5pt]{\small $\cce$};

\draw[black] (-1.9,3.75) circle (0.01pt)
node[above=1.5pt]{\color{black}\small $\ccpb$};
\draw[black] (-2.9,3.25) circle (0.01pt)
node[above=1.5pt]{\color{black}\small $\ccpa$};
\draw[black] (-2.5,-1.35) circle (0.01pt)
node[below=1.5pt]{\color{black}\small $\ccqa$};
\draw[black] (-0.5,-1.35) circle (0.01pt)
node[below=1.5pt]{\color{black}\small $\ccqb$};
\draw[black] (-3.5,1.8) circle (0.01pt)
node[left=1.5pt]{\color{black}\small $\ccrb$};
\draw[black] (-3.5,0.2) circle (0.01pt)
node[left=1.5pt]{\color{black}\small $\ccra$};

\begin{scope}[xshift=250,yshift=0,rotate=0]

\draw[-,gray!60!white,very thin,dashed] (-3,3)--(-1,4)--(3,4)--(1,3)--(-3,3);
\draw[-,gray!60!white,very thin,dashed] (3,4)--(3,0)--(1,-1)--(-3,-1)--(-3,3);

\draw[-,very thick] (-3,-1)--(1,3);\draw[-,very thick] (-3,3)--(1,-1);
\draw[-,very thick] (1,-1)--(3,4);\draw[-,very thick] (1,3)--(3,0);
\draw[-,very thick] (-3,3)--(3,4);\draw[-,very thick] (-1,4)--(1,3);
\draw[-,very thick] (0.98,-1)--(0.98,3);\draw[-,very thick] (2.98,0)--(2.98,4);
\draw[-,very thick] (1.01,-1)--(1.01,3);\draw[-,very thick] (3.01,0)--(3.01,4);
\draw[-,black,thick,dotted] (-2.9,3.25)--(1.5,3.25)--(1.5,-0.75)--(1.9,-0.75);
\draw[-,black,thick,dotted] (-0.5,-1.25)--(0,-1)--(0,3)--(2,4)--(2.4,4.2);
\draw[-,black,thick,dotted] (-3.4,0)--(-2,0)--(0,0) .. controls (0.5,0.2) .. (1,1) .. controls (1.25,1.95) .. (1.5,2.25) .. controls (1.8,2.45) and (2.2,2.65) .. (2.5,2.75) .. controls (2.75,2.5) .. (3,2) .. controls (3.25,1.5) .. (3.5,1.25);
\draw[-,black,thick,dotted] (-1.9,3.75)--(2.5,3.75)--(2.5,-0.25)--(2.9,-0.25);
\draw[-,black,thick,dotted] (-2.5,-1.25)--(-2,-1)--(-2,3)--(0,4)--(0.4,4.2);
\draw[-,black,thick,dotted] (-3.5,2)--(-2,2)--(0,2) .. controls (0.5,1.8) .. (1,1) .. controls (1.25,0.5) .. (1.5,0.25) .. controls (1.8,0.25) and (2.2,0.45) .. (2.5,0.75) .. controls (2.75,1) .. (3,2) .. controls (3.25,2.95) .. (3.5,3.25);

\filldraw[fill=black,draw=black] (-3,3) circle (3.1pt);
\fill[black!] (-3.2,3) circle (0.01pt)
node[below=1.5pt]{\color{black}\small $\cca$};
\filldraw[fill=black,draw=black] (1,3) circle (3.1pt)
node[right=1.5pt]{\small $\ccz$};
\filldraw[fill=black,draw=black] (0,3.5) circle (3.1pt)
node[above=1.5pt]{\small $\cczb$};
\filldraw[fill=black,draw=black] (-1,4) circle (3.1pt)
node[above=1.5pt]{\small $\ccf$};
\filldraw[fill=black,draw=black] (3,4) circle (3.1pt)
node[above=1.5pt]{\small $\cce$};
\filldraw[fill=black,draw=black] (3,0) circle (3.1pt)
node[right=1.5pt]{\color{black}\small $\ccd$};
\filldraw[fill=black,draw=black] (-3,-1) circle (3.1pt)
node[left=1.5pt]{\small $\ccb$};
\filldraw[fill=black,draw=black] (-1,1) circle (3.1pt)
node[right=2.5pt]{\small $\cczc$};
\filldraw[fill=black,draw=black] (1,-1) circle (3.1pt)
node[below=1.5pt]{\small $\ccc$};
\filldraw[fill=black,draw=black] (2,1.5) circle (3.1pt)
node[right=2.5pt]{\small $\ccza$};

\draw[black] (-1.9,3.75) circle (0.01pt)
node[left=1.5pt]{\color{black}\small $\ccpb$};
\draw[black] (-2.9,3.25) circle (0.01pt)
node[left=1.5pt]{\color{black}\small $\ccpa$};
\draw[black] (-2.5,-1.35) circle (0.01pt)
node[below=1.5pt]{\color{black}\small $\ccqa$};
\draw[black] (-0.5,-1.35) circle (0.01pt)
node[below=1.5pt]{\color{black}\small $\ccqb$};
\draw[black] (-3.4,2) circle (0.01pt)
node[left=1.5pt]{\color{black}\small $\ccrb$};
\draw[black] (-3.4,0) circle (0.01pt)
node[left=1.5pt]{\color{black}\small $\ccra$};
\end{scope}

\end{scope}

\end{tikzpicture}
\caption{Labelling of vertices and edges for consistency-around-a-face-centered cube.} 
\label{CAFCCcube}
\end{figure}
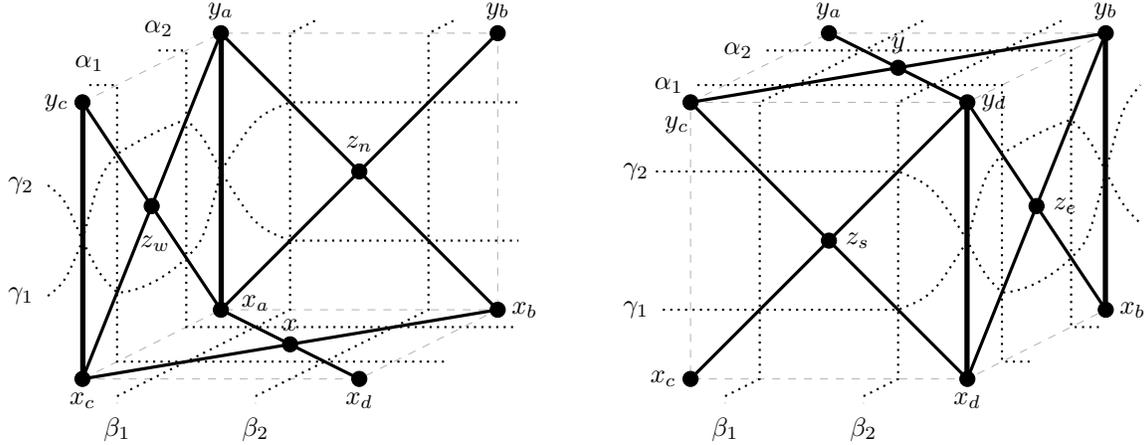

\subsubsection{CAFCC algorithm}

CAFCC for the fourteen equations \eqref{6face}, \eqref{8corner}, can be formulated as follows.

The six components of the parameters 
\begin{align}
\al=(\ccpa,\ccpb),\qquad\bt=(\ccqa,\ccqb),\qquad\gm=(\ccra,\ccrb),
\end{align}
are fixed, while 
\begin{align}
\ccyb,\ccy,\ccd,\ccb,\ccyc,\ccya,
\end{align}
are chosen as initial variables.  There remain a total of eight undetermined variables associated to the vertices of the face-centered cube, and the fourteen equations \eqref{6face}, \eqref{8corner}, to be satisfied.  

For the above initial conditions, the CAFCC property can be checked with the following six steps:

\begin{enumerate}

\item
The following two equations centered at $\ccy$ and $\ccyb$,
\begin{align}
\begin{split}
\D{\ccy}{\ccya}{\ccf}{\ccyb}{\ccyc}{(\ccqa,\ccrb)}{(\ccpb,\ccra)}{7}=0,\\
\D{\ccyb}{\ccy}{\ccd}{\ccb}{\ccc}{\ccpp}{\ccqq}{1}=0,
\end{split}
\end{align}
may be solved respectively, to uniquely determine the two variables $\ccf$, and $\ccc$.

\item
The following three equations centered at $\ccya,\ccyc,\ccf,$
\begin{align}
\begin{split}
\D{\ccya}{\ccf}{\ccy}{\cca}{\ccb}{\ccpp}{\ccrr}{2}=0, \\
\D{\ccyc}{\ccf}{\cce}{\ccy}{\ccd}{\ccrr}{\ccqq}{3}=0, \\
\D{\ccf}{\ccya}{\ccy}{\cczb}{\ccyc}{(\ccqa,\ccra)}{(\ccpb,\ccrb)}{11}=0,
\end{split}
\end{align}
may be solved respectively, to uniquely determine the three variables $\cca$, $\cce$, and $\cczb$.

\item 
For the first consistency check, both of the following two equations
\begin{align}
\begin{split}
\D{\ccd}{\ccza}{\cce}{\ccyb}{\ccyc}{(\ccqb,\ccrb)}{(\ccpb,\ccra)}{8}=0, \\
\D{\cce}{\ccza}{\ccd}{\cczb}{\ccyc}{(\ccqb,\ccra)}{(\ccpb,\ccrb)}{12}=0,
\end{split}
\end{align}
 may be used to solve for the variable $\ccza$, and the two solutions must be in agreement.

\item 
For the second consistency check, both of the following two equations
\begin{align}
\begin{split}
\D{\cca}{\ccya}{\ccb}{\cczb}{\cczc}{(\ccqa,\ccra)}{(\ccpa,\ccrb)}{13}=0,\\
\D{\ccb}{\ccya}{\cca}{\ccyb}{\cczc}{(\ccqa,\ccrb)}{(\ccpa,\ccra)}{9}=0,
\end{split}
\end{align}
 may be used to solve for the variable $\cczc$, and the two solutions must be in agreement.

\item 
For the third consistency check, each of the following four equations
\begin{align}
\begin{split}
\D{\ccza}{\cce}{\ccd}{\ccz}{\ccc}{\ccpp}{\ccrr}{5}=0, \\
\D{\cczb}{\ccf}{\cce}{\cca}{\ccz}{\ccpp}{\ccqq}{6}=0, \\
\D{\cczc}{\cca}{\ccz}{\ccb}{\ccc}{\ccrr}{\ccqq}{4}=0, \\
\D{\ccc}{\ccza}{\ccz}{\ccyb}{\cczc}{(\ccqb,\ccrb)}{(\ccpa,\ccra)}{10}=0,
\end{split}
\end{align}
may be used to solve for the final variable $\ccz$, and the four solutions must be agreement.

\item 
For the final consistency check, the remaining equation centered at $\ccz$,
\begin{align}
\begin{split}
\D{\ccz}{\ccza}{\ccc}{\cczb}{\cczc}{(\ccqb,\ccra)}{(\ccpa,\ccrb)}{14}=0,
\end{split}
\end{align}
must be satisfied by the variables that have been determined in the previous steps.

\end{enumerate}

If the above six steps are satisfied then the equations \eqref{14total} satisfy CAFCC.

\subsection{Type-A -B and -C CAFCC equations}

In addition to the CAFCC property, fifteen sets of CAFCC equations were introduced in \cite{Kels:2020zjn}, which can be grouped into two types based on the different configurations of equations on the face-centered cube.   The first type is when all of the equations of \eqref{14total} are the same, and thus may be written in terms of a single polynomial $A$, as
\begin{align}\label{CAFCCeqsA}
    \D{x}{x_a}{x_b}{x_c}{x_d}{\al}{\bt}{i}=\A{x}{x_a}{x_b}{x_c}{x_d}{\al}{\bt},\qquad i=1,\ldots,14.
\end{align}

The second type is when the equations of \eqref{14total} are given in terms of three different polynomials, as
\begin{align}\label{CAFCCeqsBC}
    \D{x}{x_a}{x_b}{x_c}{x_d}{\al}{\bt}{i}=\left\{\begin{array}{rl}
    \A{x}{x_a}{x_b}{x_c}{x_d}{\al}{\bt}, &\;\; i=2,5, \\[0.1cm]
    \B{x}{x_a}{x_b}{x_c}{x_d}{\al}{\bt}, &\;\; i=1,4,3,6, \\[0.1cm]
    \C{x}{x_a}{x_b}{x_c}{x_d}{\al}{\bt}, &\;\; i=7,\ldots,14.
    \end{array}\right.
\end{align}
Here, the face equations for $i=2,5$, are the same type-A equations which satisfy CAFCC in the form \eqref{CAFCCeqsA}, while the face equations for $i=1,4,3,6$, are denoted as type-B equations, and the remaining eight corner equations are denoted as type-C equations.  The equations which are known to satisfy CAFCC in the respective forms \eqref{CAFCCeqsA} and \eqref{CAFCCeqsBC}, are listed in Appendix \ref{app:equations}.

All three types of equations in Appendix \ref{app:equations} satisfy the symmetry (reflection on line bisecting $\bt$ edges of Figure \ref{fig-face})
\begin{align}\label{refsym}
    \D{x}{x_a}{x_b}{x_c}{x_d}{\al}{\bt}{i}=-\D{x}{x_b}{x_a}{x_d}{x_c}{\al}{\hat{\bt}}{i},
\end{align}
where $\hat{\bt}$ represents $\bt$ with the components exchanged (and similarly for $\hat{\al}$ and $\al$), {\it i.e.},
\begin{align}
    \hat{\al}=(\alpha_2,\alpha_1),\qquad\hat{\bt}=(\beta_2,\beta_1).
\end{align}
Type-A and type-B polynomials also satisfy the symmetry (reflection on line bisecting $\al$ edges of Figure \ref{fig-face})
\begin{align}\label{symAB}
   \D{x}{x_a}{x_b}{x_c}{x_d}{\al}{\bt}{i}=-\D{x}{x_c}{x_d}{x_a}{x_b}{\hat{\al}}{\bt}{i},
\end{align}
while only type-A polynomials satisfy the symmetry (reflection on $x_b$ $x_c$ diagonal of Figure \ref{fig-face})
\begin{align}\label{symA}
    \D{x}{x_a}{x_b}{x_c}{x_d}{\al}{\bt}{i}=-\D{x}{x_d}{x_b}{x_c}{x_a}{\bt}{\al}{i}.
\end{align}

For the three types of equations in \eqref{CAFCCeqsA}, \eqref{CAFCCeqsBC}, that are given in Appendix \ref{app:equations}, the four-leg expressions of the form \eqref{4legtemp} are respectively given by
\begin{align}
\label{4leg}
\mbox{Type-A: }\hspace{-0.05cm}\quad \frac{a(x;x_a;\alpha_2,\beta_1)a(x;x_d;\alpha_1,\beta_2)}{a(x;x_b;\alpha_2,\beta_2)a(x;x_c;\alpha_1,\beta_1)}=1, \\ \label{4legb}
\mbox{Type-B: }\quad \frac{b(x;x_a;\alpha_2,\beta_1)b(x;x_d;\alpha_1,\beta_2)}{b(x;x_b;\alpha_2,\beta_2)b(x;x_c;\alpha_1,\beta_1)}=1, \\ \label{4legc}
\mbox{Type-C: }\hspace{-0.02cm}\quad \frac{a(x;x_a;\alpha_2,\beta_1)c(x;x_d;\alpha_1,\beta_2)}{a(x;x_b;\alpha_2,\beta_2)c(x;x_c;\alpha_1,\beta_1)}=1,
\end{align}
where the $a(x_a;x_b;\alpha,\beta)$, $b(x_a;x_b;\alpha,\beta)$, $c(x_a;x_b;\alpha,\beta)$, are ratios of polynomials of at most degree 1 in $x_b$, and where $a(x_a;x_b;\alpha,\beta)$ additionally satisfies
\begin{align}
    a(x_a;x_b;\alpha,\beta)a(x_a;x_b;\beta,\alpha)=1.
\end{align}
Each of the expressions \eqref{4leg}--\eqref{4legc}, are related to the derivation of face-centered quad equations from the asymptotics of combinations of four Boltzmann weights which satisfy the Yang-Baxter equation, details of which were given in \cite{Kels:2020zjn}.  Through this connection, the expressions \eqref{4leg}--\eqref{4legc} may be associated to the combinations of edges and vertices shown in Figure \ref{3fig4quad}.  Note that although $b(x_a;x_b;\alpha,\beta)$ and $c(x_a;x_b;\alpha,\beta)$ are generally different, they are associated to the same double lines in Figure \ref{3fig4quad}.  This is due to a subtlety in the way these equations were originally derived from non-symmetric Lagrangian functions on edges of the face-centered cube of Figure \ref{CAFCCcube}, where the derivatives of a Lagrangian function with respect to one of the two variables associated to an edge, is different to the derivative with respect to the other variable.  The two different derivatives then give the respective expressions for $b(x_a;x_b;\alpha,\beta)$ and $c(x_a;x_b;\alpha,\beta)$.  Explicit examples of the $a(x_a;x_b;\alpha,\beta)$, $b(x_a;x_b;\alpha,\beta)$, $c(x_a;x_b;\alpha,\beta)$, for CAFCC equations, are given in Appendix \ref{app:equations}, where the expressions of \eqref{4leg} with the $a(x_a;x_b;\alpha,\beta)$ from Table \ref{table-A}, may be identified with expressions for discrete Laplace-type equations associated to type-Q ABS equations \cite{AdlerPlanarGraphs,BobSurQuadGraphs}.

\begin{figure}[htb!]
\centering
\begin{tikzpicture}[scale=0.70]

\draw[-,gray,very thin,dashed] (5,-1)--(5,3)--(1,3)--(1,-1)--(5,-1);
\draw[-,thick] (5,3)--(1,-1);
\draw[-,thick] (5,-1)--(1,3);
\fill (0.8,-0.0) circle (0.1pt)
node[left=0.5pt]{\color{black}\small $\ccpa$};
\fill (5.2,-0.0) circle (0.1pt)
node[right=0.5pt]{\color{black}\small $\ccpa$};
\fill (4,-1.2) circle (0.1pt)
node[below=0.5pt]{\color{black}\small $\ccqb$};
\fill (4,3.2) circle (0.1pt)
node[above=0.5pt]{\color{black}\small $\ccqb$};
\fill (0.8,2.0) circle (0.1pt)
node[left=0.5pt]{\color{black}\small $\ccpb$};
\fill (5.2,2.0) circle (0.1pt)
node[right=0.5pt]{\color{black}\small $\ccpb$};
\fill (2,-1.2) circle (0.1pt)
node[below=0.5pt]{\color{black}\small $\ccqa$};
\fill (2,3.2) circle (0.1pt)
node[above=0.5pt]{\color{black}\small $\ccqa$};
\fill (3,1) circle (4.1pt)
node[left=2.5pt]{\color{black} $\ccx$};
\fill (1,-1) circle (4.1pt)
node[left=1.5pt]{\color{black} $x_c$};
\filldraw[fill=black,draw=black] (1,3) circle (4.1pt)
node[left=1.5pt]{\color{black} $x_a$};
\fill (5,3) circle (4.1pt)
node[right=1.5pt]{\color{black} $x_b$};
\filldraw[fill=black,draw=black] (5,-1) circle (4.1pt)
node[right=1.5pt]{\color{black} $x_d$};

\draw[-,dotted,thick] (2,-1.2)--(2,3.2);\draw[-,dotted,thick] (4,-1.2)--(4,3.2);
\draw[-,dotted,thick] (0.8,-0.0)--(5.2,-0.0);\draw[-,dotted,thick] (0.8,2.0)--(5.2,2.0);

\fill (3,-2) circle (0.01pt)
node[below=0.5pt]{\color{black}\small $\A{\ccx}{x_a}{x_b}{x_c}{x_d}{\ccpp}{\ccqq} $};


\begin{scope}[xshift=210pt]

\draw[-,gray,very thin,dashed] (5,-1)--(5,3)--(1,3)--(1,-1)--(5,-1);
\draw[-,double,thick] (5,3)--(1,-1);
\draw[-,double,thick] (5,-1)--(1,3);
\fill (0.8,-0.0) circle (0.1pt)
node[left=0.5pt]{\color{black}\small $\ccpa$};
\fill (5.2,-0.0) circle (0.1pt)
node[right=0.5pt]{\color{black}\small $\ccpa$};
\fill (4,-1.2) circle (0.1pt)
node[below=0.5pt]{\color{black}\small $\ccqb$};
\fill (4,3.2) circle (0.1pt)
node[above=0.5pt]{\color{black}\small $\ccqb$};
\fill (0.8,2.0) circle (0.1pt)
node[left=0.5pt]{\color{black}\small $\ccpb$};
\fill (5.2,2.0) circle (0.1pt)
node[right=0.5pt]{\color{black}\small $\ccpb$};
\fill (2,-1.2) circle (0.1pt)
node[below=0.5pt]{\color{black}\small $\ccqa$};
\fill (2,3.2) circle (0.1pt)
node[above=0.5pt]{\color{black}\small $\ccqa$};
\fill (3,1) circle (4.1pt)
node[left=2.5pt]{\color{black} $\ccx$};
\fill (1,-1) circle (4.1pt)
node[left=1.5pt]{\color{black} $x_c$};
\filldraw[fill=black,draw=black] (1,3) circle (4.1pt)
node[left=1.5pt]{\color{black} $x_a$};
\fill (5,3) circle (4.1pt)
node[right=1.5pt]{\color{black} $x_b$};
\filldraw[fill=black,draw=black] (5,-1) circle (4.1pt)
node[right=1.5pt]{\color{black} $x_d$};

\draw[-,dotted,thick] (2,-1.2)--(2,3.2);\draw[-,dotted,thick] (4,-1.2)--(4,3.2);
\draw[-,dotted,thick] (0.8,-0.0)--(5.2,-0.0);\draw[-,dotted,thick] (0.8,2.0)--(5.2,2.0);

\fill (3,-2) circle (0.01pt)
node[below=0.5pt]{\color{black}\small $\B{\ccx}{x_a}{x_b}{x_c}{x_d}{\ccpp}{\ccqq} $};

\end{scope}

\begin{scope}[xshift=420pt]

\draw[-,gray,very thin,dashed] (5,-1)--(5,3)--(1,3)--(1,-1)--(5,-1);
\draw[-,thick] (5,3)--(3,1)--(1,3);
\draw[-,double,thick] (1,-1)--(3,1)--(5,-1);
\fill (0.8,-0.0) circle (0.1pt)
node[left=0.5pt]{\color{black}\small $\ccpa$};
\fill (5.2,-0.0) circle (0.1pt)
node[right=0.5pt]{\color{black}\small $\ccpa$};
\fill (4,-1.2) circle (0.1pt)
node[below=0.5pt]{\color{black}\small $\ccqb$};
\fill (4,3.2) circle (0.1pt)
node[above=0.5pt]{\color{black}\small $\ccqb$};
\fill (0.8,2.0) circle (0.1pt)
node[left=0.5pt]{\color{black}\small $\ccpb$};
\fill (5.2,2.0) circle (0.1pt)
node[right=0.5pt]{\color{black}\small $\ccpb$};
\fill (2,-1.2) circle (0.1pt)
node[below=0.5pt]{\color{black}\small $\ccqa$};
\fill (2,3.2) circle (0.1pt)
node[above=0.5pt]{\color{black}\small $\ccqa$};
\fill (3,1) circle (4.1pt)
node[left=2.5pt]{\color{black} $\ccx$};
\fill (1,-1) circle (4.1pt)
node[left=1.5pt]{\color{black} $x_c$};
\filldraw[fill=black,draw=black] (1,3) circle (4.1pt)
node[left=1.5pt]{\color{black} $x_a$};
\fill (5,3) circle (4.1pt)
node[right=1.5pt]{\color{black} $x_b$};
\filldraw[fill=black,draw=black] (5,-1) circle (4.1pt)
node[right=1.5pt]{\color{black} $x_d$};

\draw[-,dotted,thick] (2,-1.2)--(2,3.2);\draw[-,dotted,thick] (4,-1.2)--(4,3.2);
\draw[-,dotted,thick] (0.8,-0.0)--(5.2,-0.0);\draw[-,dotted,thick] (0.8,2.0)--(5.2,2.0);

\fill (3,-2) circle (0.01pt)
node[below=0.5pt]{\color{black}\small $\C{\ccx}{x_a}{x_b}{x_c}{x_d}{\ccpp}{\ccqq} $};

\end{scope}

\end{tikzpicture}
\caption{A graphical representation of  \eqref{4leg}, \eqref{4legb}, \eqref{4legc}, for the face-centered quad equations of type-A, type-B, and type-C, respectively.}
\label{3fig4quad}
\end{figure}
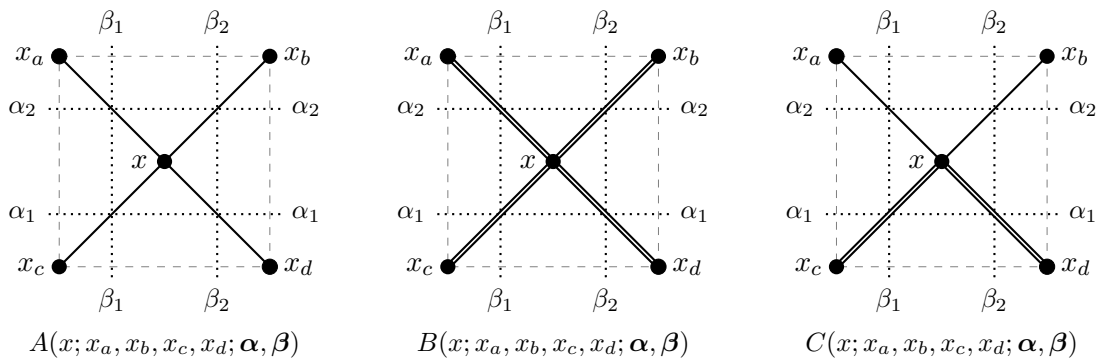


Finally, note that from \eqref{CAFCCeqsA}, and \eqref{CAFCCeqsBC}, only type-A and type-B equations appear centered at face vertices of the face-centered cube of Figure \ref{CAFCCcube}, while type-C equations only appear centered at corner vertices.  This means that only type-A or type-B equations should be regarded as systems of multidimensionally consistent equations in the lattice, while type-C equations essentially only serve as a link between type-A and type-B equations which meet at the intersection of either two or three orthogonal two-dimensional sublattices of an $n$-dimensional lattice (for $n>2$).  

\section{Lax Matrices from CAFCC}\label{sec:laxmethods}

One of the important properties of multidimensional consistency, is that a Lax pair for a multidimensional consistent equation may be derived in an almost algorithmic manner.  For regular quad equations which satisfy consistency-around-a-cube (CAC), a standard way to do this is to choose two particular vertices on the cube, {\it e.g.} those corresponding to $\ccf$ and $\ccz$, in Figure \ref{CAFCCcube}, and two different evolutions are taken from $\ccf$ to $\ccz$, namely, one that involves taking $\ccf\to\cce\to\ccz$, and another that involves taking $\ccf\to\cca\to\ccz$ (in the notation of Figure \ref{CAFCCcube}).   Due to the CAC property, the two different paths should agree for the final variable $\ccz$, and this consistency of the equations can be recast as the compatibility of appropriately defined Lax matrices.  A genuine Lax pair should be compatible only on solutions of the original CAC equation, while the resulting Lax equations are also known to be closely related to the B\"acklund transformations for the equations \cite{BobSurQuadGraphs,HietarintaNEWCAC}.

However, such a procedure as outlined above cannot be applied to CAFCC, because there is no suitable analogue for the paths that can be taken between the two vertices $\ccf$ and $\ccz$, in order to define the Lax matrices.  This is basically because the evolution on the face-centered cube also involves the eight corner equations \eqref{8corner}, which don't have a counterpart for CAC.  Besides evolution between two corner vertices, another possibility on the face-centered cube is to choose two different evolutions  between two opposite face vertices.  In this section, such an approach between face vertices will be considered, and will be seen to result in the desired compatible Lax matrices for both the type-A and type-B CAFCC equations, respectively.

\subsection{Face-centered quad equation as Lax matrix (approach 1)}\label{sec:laxmethod1}

Recall that a general face-centered quad equation \eqref{afflin} is written as
\begin{align}\label{refeq}
    \A{x}{x_a}{x_b}{x_c}{x_d}{\al}{\bt}=0.
\end{align}
The two variables $x_c$ and $x_d$, of \eqref{refeq}, will be reinterpreted here as vectors, and the equation  \eqref{refeq} itself will be reinterpreted as a matrix taking the vector associated to $x_c$, to the vector associated to $x_d$.  This may be done by first solving the equation \eqref{refeq} for $x_d$, which, due to linearity in this variable, may be written as
\begin{align}\label{xdsol}
    x_d=\frac{\A{x}{x_a}{x_b}{x_c}{0}{\al}{\bt}}
              {\sum_{j=0}^1(1-2j)\A{x}{x_a}{x_b}{x_c}{j}{\al}{\bt}}.
\end{align}
Then following the substitutions
\begin{align}\label{eqtolaxsubs}
    x_c=\frac{f}{g}, \qquad x_d=\frac{f_L}{g_L},
\end{align}
\eqref{xdsol} may be written in a matrix form
\begin{align}\label{mateq1}
    \psi_{L}=\L(x;x_a,x_b;\al,\bt)\psi,
\end{align}
where
\begin{align}
    \psi=\left(\!\!\begin{array}{c} f \\ g \end{array}\!\!\right)\!,\qquad 
    \psi_L=\left(\!\!\begin{array}{c} f_L \\ g_L \end{array}\!\!\right)\!,
\end{align}
and the $\L(x;x_a,x_b;\al,\bt)$ is a $2\times2$ matrix given by 
\begin{align}\label{Lentries}
\begin{split}
\L&(x;x_a,x_b;\al,\bt) \\
&=D_\L\!\left(\!\begin{array}{cc}
  \ds\sum_{j=0}^1(2j-1)\A{x}{x_a}{x_b}{j}{0}{\al}{\bt} &
  \ds\A{x}{x_a}{x_b}{0}{0}{\al}{\bt} \\
  \ds\sum_{j,k=0}^1(2j-1)(1-2k)\A{x}{x_a}{x_b}{j}{k}{\al}{\bt} &
  \ds\sum_{j=0}^1(1-2j)\A{x}{x_a}{x_b}{0}{j}{\al}{\bt}
 \end{array}\!\right)\!,
 \end{split}
\end{align}
where $D_\L$ is an as yet unspecified normalisation factor.

Such a reinterpretation of the face-centered quad equation \eqref{refeq} in terms of a matrix equation \eqref{mateq1}, 
can be represented diagrammatically as shown in Figure \ref{laxfig1}.  The diagram on the left of Figure \ref{laxfig1} is a direct interpretation of the construction given above for obtaining \eqref{Lentries} from \eqref{refeq}, while the diagram on the right is an equivalent interpretation which comes from using the symmetry \eqref{refsym}.  Note also that due to the symmetry \eqref{refsym}, the matrix $\L(x;x_b,x_a;\al,\hat{\bt})$ is proportional to the inverse of $\L(x;x_a,x_b;\al,\bt)$, as can be seen directly from the definition \eqref{Lentries}.  The same inverse matrix can be derived from the above procedure, by exchanging the roles of $x_d$ and $x_c$ from the beginning.   


\begin{figure}[htb!]
\centering

\begin{tikzpicture}[scale=0.7]

\fill (0,2.7) circle (0.01pt)
node[above=1pt]{$\A{x}{x_a}{x_b}{x_c}{x_d}{\al}{\bt}$};

\draw[-,very thick] (-4.0,0.0)--(4.0,0.0); \draw[-,very thick] (0.0,0.0)--(-2.0,2.0); \draw[-,very thick] (0.0,0.0)--(2.0,2.0);

\fill (-4,0) circle (3.5pt)
node[left=1pt]{$x_c$};
\fill(-4.2,-0.2) circle (0.01pt)
node[below=10pt]{$\psi$};
\fill (4,0) circle (3.5pt)
node[right=1pt]{$x_d$};
\fill(4.2,-0.2) circle (0.01pt)
node[below=10pt]{$\psi_\L$};
\fill (0,0) circle (3.5pt)
node[below=1pt]{$x$};
\fill (-2,2) circle (3.5pt)
node[left=1pt]{$x_a$};
\fill (2,2) circle (3.5pt)
node[right=1pt]{$x_b$};

\fill (0,-1.2) circle (0.01pt)
node[below=1pt]{$\L(x;x_a,x_b;\al,\bt)$};

\draw[-latex,very thick] (-3.7,-1.0)--(3.7,-1.0);




\begin{scope}[xshift=350pt]

\fill (0,2.7) circle (0.01pt)
node[above=1pt]{$\A{x}{x_b}{x_a}{x_d}{x_c}{\al}{\hat{\bt}}$};

\draw[-,very thick] (-4.0,0.0)--(4.0,0.0); \draw[-,very thick] (0.0,0.0)--(-2.0,2.0); \draw[-,very thick] (0.0,0.0)--(2.0,2.0);

\fill (-4,0) circle (3.5pt)
node[left=1pt]{$x_d$};
\fill(-4.2,-0.2) circle (0.01pt)
node[below=10pt]{$\psi_\L$};
\fill (4,0) circle (3.5pt)
node[right=1pt]{$x_c$};
\fill(4.2,-0.2) circle (0.01pt)
node[below=10pt]{$\psi$};
\fill (0,0) circle (3.5pt)
node[below=1pt]{$x$};
\fill (-2,2) circle (3.5pt)
node[left=1pt]{$x_b$};
\fill (2,2) circle (3.5pt)
node[right=1pt]{$x_a$};

\fill (0,-1.2) circle (0.01pt)
node[below=1pt]{$\L(x;x_a,x_b;\al,\bt)$};

\draw[latex-,very thick] (-3.7,-1.0)--(3.7,-1.0);

\end{scope}

\end{tikzpicture}

\caption{The face-centered quad equation \eqref{refeq}, reinterpreted as equation \eqref{mateq1} for the matrix \eqref{Lentries}.  The diagrams on the left and right are equivalent, due to the symmetry \eqref{refsym}.}  
\label{laxfig1}
\end{figure}
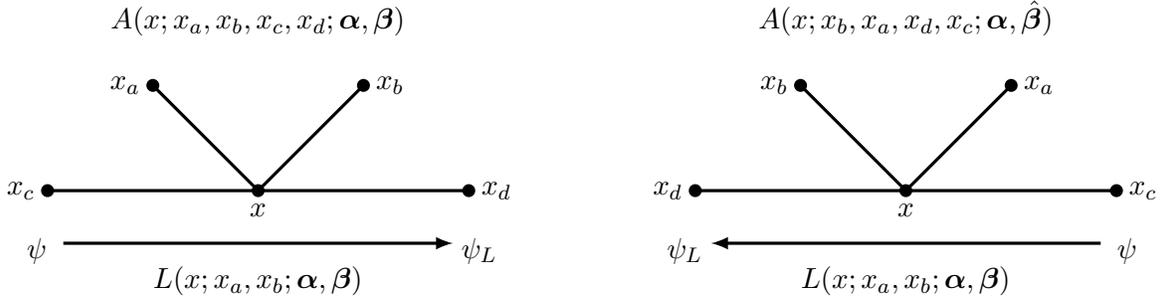

\subsubsection{CAFCC and compatibility of Lax matrices for type-A equations}

The Lax matrix \eqref{Lentries} that has been obtained from the face-centered quad equation \eqref{refeq}, can be used to define a set of compatible Lax matrices, which are derived from the CAFCC equations \eqref{14total} appearing on the face-centered cube of Figure \ref{CAFCCcube}.  As mentioned at the start of this section, this will be done by considering two different evolutions between variables on faces.  Specifically, the initial variable will be chosen as $\ccyb$ (the bottom face variable), and the final variable will be chosen as $\cczb$ (the top face variable).  Then there are four different pairs equations from \eqref{8corner} that are centered at corner variables $(x_i,y_i)$, $i\in\{a,b,c,d\}$, in Figure \ref{CAFCCcube}, which can be used to take two different evolutions for face variables, as $\ccyb\to z_k\to\cczb$, and $\ccyb\to z_l\to\cczb$, $k,l\in\{n,e,s,w\}$, $k\neq l$, respectively.  For example, the equations centered at $\ccy$ and $\ccf$ can be chosen for one evolution, taking $\ccyb\to \ccyc\to \cczb$, and the equations centered at $\ccb$ and $\cca$ can be used for the second evolution, taking $\ccyb\to \cczc\to \cczb$.

Specifically, this corresponds to the following four equations from \eqref{8corner}
\begin{align}\label{laxiniteqs}
\begin{array}{rrr}
    &\D{\ccy}{\ccya}{\ccf}{\ccyb}{\ccyc}{(\ccqa,\ccrb)}{(\ccpb,\ccra)}{7}=0,\quad 
    &\D{\ccb}{\cca}{\ccya}{\cczc}{\ccyb}{(\ccqa,\ccrb)}{(\ccra,\ccpa)}{9}=0, \\[0.1cm]
    &\D{\ccf}{\ccy}{\ccya}{\ccyc}{\cczb}{(\ccqa,\ccra)}{(\ccrb,\ccpb)}{11}=0, \quad
    &\D{\cca}{\ccya}{\ccb}{\cczb}{\cczc}{(\ccqa,\ccra)}{(\ccpa,\ccrb)}{13}=0,
\end{array}
\end{align}
which are respectively used to define four matrices for transitions associated to $\ccyb\to\ccyc$, $\ccyb\to\cczc$, $\ccyc\to\cczb$, $\cczc\to\cczb$.  The resulting matrices are respectively given in terms of \eqref{Lentries} by
\begin{align}\label{laxmatA}
\begin{array}{rrr}
&\LL=\L(\ccy;\ccya,\ccf;(\ccqa,\ccrb),(\ccpb,\ccra)), \qquad
&\M=\L(\ccb;\ccya,\cca;(\ccqa,\ccrb),(\ccpa,\ccra)), \\[0.1cm]
&\Mb=\L(\ccf;\ccy,\ccya;(\ccqa,\ccra),(\ccrb,\ccpb)), \qquad
&\Lb=\L(\cca;\ccb,\ccya;(\ccqa,\ccra),(\ccrb,\ccpa)).
\end{array}
\end{align}
The configuration of equations \eqref{laxiniteqs} and matrices \eqref{laxmatA} is shown diagrammatically in Figure \ref{LaxfigA}, in terms of the matrices and equations of Figure \ref{laxfig1}.

\begin{figure}[tbh]
\centering

\begin{tikzpicture}[scale=0.7]

\draw[-,very thick] (0,3)--(3,0)--(0,-3)--(-3,0)--(0,3);
\draw[-,very thick] (1.5,1.5)--(1.5,-1.5)--(-1.5,1.5)--(-1.5,-1.5)--(1.5,1.5);

\fill (0,3) circle (3.7pt)
node[above=1pt]{$\cczb$};
\fill (3,0) circle (3.7pt)
node[right=1pt]{$\ccyc$}
node[right=25pt]{$\psi_1$};
\fill (0,-3) circle (3.7pt)
node[below=1pt]{$\ccyb$}
node[below=15pt]{$\psi$};
\fill (-3,0) circle (3.7pt)
node[left=1pt]{$\cczc$}
node[left=25pt]{$\psi_2$};
\fill (0,0) circle (3.7pt)
node[below=5pt]{$\ccya$};
\fill (1.5,1.6) circle (0.01pt)
node[right=1pt]{$\ccf$};
\fill (1.5,1.5) circle (3.7pt);
\fill (1.5,-1.6) circle (0.01pt)
node[right=1pt]{$\ccy$};
\fill (1.5,-1.5) circle (3.7pt);
\fill (-1.5,1.6) circle (0.01pt)
node[left=1pt]{$\cca$};
\fill (-1.5,1.5) circle (3.7pt);
\fill (-1.5,-1.6) circle (0.01pt)
node[left=1pt]{$\ccb$};
\fill (-1.5,-1.5) circle (3.7pt);

\fill (-0.8,3.9) circle (0.01pt)
node[above=1pt]{$\psi_{24}$};
\fill (0.8,3.9) circle (0.01pt)
node[above=1pt]{$\psi_{13}$};

\fill (-2.4,-2.6) circle (0.01pt)
node[left=1pt]{$\M$};
\fill (2.4,-2.6) circle (0.01pt)
node[right=1pt]{$\LL$};
\fill (-2.4,2.6) circle (0.01pt)
node[left=1pt]{$\Lb$};
\fill (2.4,2.6) circle (0.01pt)
node[right=1pt]{$\Mb$};

\draw[-latex,very thick] (-0.7,-3.9)--(-4.2,-0.4);
\draw[-latex,very thick] (0.7,-3.9)--(4.2,-0.4);
\draw[-latex,very thick] (-4.2,0.4)--(-0.8,3.8);
\draw[-latex,very thick] (4.2,0.4)--(0.8,3.8);


\end{tikzpicture}

\caption{The four equations \eqref{laxiniteqs} on the face-centered cube, centered at $\ccy$, $\ccb$, $\ccf$, $\cca$, are reinterpreted in \eqref{laxmatA} as four matrices $\LL$, $\M$, $\Mb$, $\Lb$, respectively, using the graphical interpretation given in Figure \ref{laxfig1}.  The compatibility condition is $\psi_{13}\doteq\psi_{24}$, implying for the matrices $\Lb\M-\Mb\L\doteq 0$, where $\doteq$ indicates equality on solutions of the equation \eqref{laxeq} centered at $\ccya$.}  
\label{LaxfigA}
\end{figure}

Then denoting
\begin{align}
    \psi_{13}=\Mb\psi_1=\Mb\LL\psi,\qquad \psi_{24}=\Lb\psi_2=\Lb\M\psi,
\end{align}
the compatibility condition is that these two actions are consistent, {\it i.e},
\begin{align}\label{lc}
    \psi_{13}\doteq\psi_{24},
\end{align}
where $\doteq$ indicates equality on solutions of the equation centered at $\ccya$ in Figure \ref{LaxfigA}, corresponding to 
\begin{align}\label{laxeq}
    \D{\ccya}{\ccf}{\ccy}{\cca}{\ccb}{\ccpp}{\ccrr}{2}=0,
\end{align}
from \eqref{6face}.  Note that according to \eqref{CAFCCeqsA} and \eqref{CAFCCeqsBC}, \eqref{laxeq} is always a type-A equation.  Finally, in terms of the matrices \eqref{laxmatA} the compatibility condition \eqref{lc} may be written as
\begin{align}\label{laxcomp}
\Lb\M-\Mb\LL\doteq 0.
\end{align}
This final equation represents a reinterpretation of the CAFCC property of the equations \eqref{laxiniteqs}, in terms of the compatibility of the Lax matrix $\L(x;x_a,x_b;\al,\bt)$ defined in \eqref{Lentries}.  In the literature the equation of the form \eqref{laxcomp} is sometimes referred to as a Lax equation, compatibility relation, or a discrete zero curvature condition {\it e.g.} \cite{NijhoffQ4Lax,BobSurQuadGraphs,BHQKLax}.  Note that the equation \eqref{laxeq} is independent of the parameter $\ccqa$, which is a parameter of the matrices \eqref{laxmatA}.  The parameter $\ccqa$ may be regarded as (a component of) a parameter associated with the extension of the equation \eqref{laxeq} into a direction orthogonal to the two-dimensional lattice associated to the parameters $\al$ and $\gm$, and is identified as the spectral parameter in analogy with the parameter that arises in a similar way for CAC equations \cite{NijhoffQ4Lax,BobSurQuadGraphs}.

Next recall that there are three types of equations denoted as types-A, -B, and -C, which satisfy CAFCC in the two forms given in \eqref{CAFCCeqsA}, \eqref{CAFCCeqsBC}.  The construction of compatible Lax matrices that is presented above is valid for either type-A or type-C equations, because only the symmetry \eqref{refsym} has been assumed.  Note that although type-B equations also satisfy the same symmetry, they cannot be used here since there are no known sets of CAFCC equations for which the equations \eqref{laxiniteqs} are of type-B.  In terms of the graphical representation of Figure \ref{3fig4quad}, the diagram of Figure \ref{laxfig1} would be the matrix interpretation of a type-A equation, and the diagram of Figure \ref{laxfig3} below would be the matrix interpretation for a type-C equation.


\begin{figure}[htb]
\centering

\begin{tikzpicture}[scale=0.7]

\fill (0,2.7) circle (0.01pt)
node[above=1pt]{$\C{x}{x_a}{x_b}{x_c}{x_d}{\al}{\bt}$};

\draw[-,double,very thick] (-4.0,0.0)--(4.0,0.0); \draw[-,very thick] (0.0,0.0)--(-2.0,2.0); \draw[-,very thick] (0.0,0.0)--(2.0,2.0);

\fill (-4,0) circle (3.5pt)
node[left=1pt]{$x_c$};
\fill(-4.2,-0.2) circle (0.01pt)
node[below=10pt]{$\psi$};
\fill (4,0) circle (3.5pt)
node[right=1pt]{$x_d$};
\fill(4.2,-0.2) circle (0.01pt)
node[below=10pt]{$\psi_\L$};
\fill (0,0) circle (3.5pt)
node[below=1pt]{$x$};
\fill (-2,2) circle (3.5pt)
node[left=1pt]{$x_a$};
\fill (2,2) circle (3.5pt)
node[right=1pt]{$x_b$};

\fill (0,-1.2) circle (0.01pt)
node[below=1pt]{$\L(x;x_a,x_b;\al,\bt)$};

\draw[-latex,double,very thick] (-3.7,-1.0)--(3.7,-1.0);

\begin{scope}[xshift=350pt]

\fill (0,2.7) circle (0.01pt)
node[above=1pt]{$\C{x}{x_b}{x_a}{x_d}{x_c}{\al}{\hat{\bt}}$};

\draw[-,double,very thick] (-4.0,0.0)--(4.0,0.0); \draw[-,very thick] (0.0,0.0)--(-2.0,2.0); \draw[-,very thick] (0.0,0.0)--(2.0,2.0);

\fill (-4,0) circle (3.5pt)
node[left=1pt]{$x_d$};
\fill(-4.2,-0.2) circle (0.01pt)
node[below=10pt]{$\psi_\L$};
\fill (4,0) circle (3.5pt)
node[right=1pt]{$x_c$};
\fill(4.2,-0.2) circle (0.01pt)
node[below=10pt]{$\psi$};
\fill (0,0) circle (3.5pt)
node[below=1pt]{$x$};
\fill (-2,2) circle (3.5pt)
node[left=1pt]{$x_b$};
\fill (2,2) circle (3.5pt)
node[right=1pt]{$x_a$};

\fill (0,-1.2) circle (0.01pt)
node[below=1pt]{$\L(x;x_a,x_b;\al,\bt)$};

\draw[latex-,double,very thick] (-3.7,-1.0)--(3.7,-1.0);

\end{scope}

\end{tikzpicture}

\caption{The case of Figure \ref{laxfig1} for when the matrix \eqref{Lentries} is derived from a type-C equation, using the edge configuration of Figure \ref{3fig4quad}.}  
\label{laxfig3}
\end{figure}

The compatibility diagram of Figure \ref{LaxfigA} is also shown in Figure \ref{LaxfigA2}, for the type-C matrices of Figure \ref{laxfig3}.  It is seen in Figure \ref{LaxfigA2} that the double edges of Figure \ref{laxfig3} only appear on the boundary, and according to Figure \ref{3fig4quad} the central equation corresponding to \eqref{laxeq} is a type-A equation, as was the case for Figure \ref{LaxfigA}.  Thus through the approach of this subsection, type-A and type-C equations both give Lax matrices \eqref{Lentries} for type-A equations.

\begin{figure}[htb]
\centering

\begin{tikzpicture}[scale=0.7]

\draw[-,double,very thick] (0,3)--(3,0)--(0,-3)--(-3,0)--(0,3);
\draw[-,very thick] (1.5,1.5)--(1.5,-1.5)--(-1.5,1.5)--(-1.5,-1.5)--(1.5,1.5);

\fill (0,3) circle (3.7pt)
node[above=1pt]{$\cczb$};
\fill (3,0) circle (3.7pt)
node[right=1pt]{$\ccyc$}
node[right=25pt]{$\psi_1$};
\fill (0,-3) circle (3.7pt)
node[below=1pt]{$\ccyb$}
node[below=15pt]{$\psi$};
\fill (-3,0) circle (3.7pt)
node[left=1pt]{$\cczc$}
node[left=25pt]{$\psi_2$};
\fill (0,0) circle (3.7pt)
node[below=5pt]{$\ccya$};
\fill (1.5,1.6) circle (0.01pt)
node[right=1pt]{$\ccf$};
\fill (1.5,1.5) circle (3.7pt);
\fill (1.5,-1.6) circle (0.01pt)
node[right=1pt]{$\ccy$};
\fill (1.5,-1.5) circle (3.7pt);
\fill (-1.5,1.6) circle (0.01pt)
node[left=1pt]{$\cca$};
\fill (-1.5,1.5) circle (3.7pt);
\fill (-1.5,-1.6) circle (0.01pt)
node[left=1pt]{$\ccb$};
\fill (-1.5,-1.5) circle (3.7pt);

\fill (-0.8,3.9) circle (0.01pt)
node[above=1pt]{$\psi_{24}$};
\fill (0.8,3.9) circle (0.01pt)
node[above=1pt]{$\psi_{13}$};

\fill (-2.4,-2.6) circle (0.01pt)
node[left=1pt]{$\M$};
\fill (2.4,-2.6) circle (0.01pt)
node[right=1pt]{$\LL$};
\fill (-2.4,2.6) circle (0.01pt)
node[left=1pt]{$\Lb$};
\fill (2.4,2.6) circle (0.01pt)
node[right=1pt]{$\Mb$};

\draw[-latex,double,very thick] (-0.7,-3.9)--(-4.2,-0.4);
\draw[-latex,double,very thick] (0.7,-3.9)--(4.2,-0.4);
\draw[-latex,double,very thick] (-4.2,0.4)--(-0.8,3.8);
\draw[-latex,double,very thick] (4.2,0.4)--(0.8,3.8);


\end{tikzpicture}

\caption{The same compatibility condition \eqref{laxcomp} as given in Figure \ref{LaxfigA}, for the case when the Lax matrix \eqref{Lentries} is computed in terms of type-C equations and matrices of Figure \ref{laxfig3}.  Note that the equation \eqref{laxeq} centered at $\ccya$ is type-A (according to Figure \ref{3fig4quad}), as was the case for Figure \ref{LaxfigA}.}  
\label{LaxfigA2}
\end{figure}

\subsection{Face-centered quad equation as Lax matrix (approach 2)}\label{sec:laxmethod2}

Due to the non-symmetry of the type-C face-centered quad equations, a different construction of Lax matrices can be used which will lead to Lax pairs for the type-B equations, instead of Lax pairs for type-A equations which were found in Section \ref{sec:laxmethod1}.  
This is done by choosing a different pair of variables to convert into the vectors $\psi$, and $\psi_\L$.  Specifically $x_d$ and $x_b$ are used here, instead of $x_d$ and $x_c$ from Section \ref{sec:laxmethod1}.  

In the following, the type-C equation will be denoted as usual by
\begin{align}\label{refeq3}
    \C{x}{x_a}{x_b}{x_c}{x_d}{\al}{\bt}=0.
\end{align}
This type-C equation may be solved for $x_b$, which due to linearity may be written as
\begin{align}\label{xbsol}
    x_b=\frac{\C{x}{x_a}{0}{x_c}{x_d}{\al}{\bt}}
              {\sum_{j=0}^1(1-2j)\C{x}{x_a}{j}{x_c}{x_d}{\al}{\bt}}.
\end{align}
Then with the substitutions
\begin{align}\label{eqtolaxsubs2}
    x_d=\frac{f}{g},\qquad x_b=\frac{f_L}{g_L},
\end{align}
\eqref{xbsol} may be written in a matrix form
\begin{align}\label{mateq2}
    \psi_\L=\L(x;x_a,x_c;\al,\bt)\psi,
\end{align}
where
\begin{align}
    \psi=\left(\!\!\begin{array}{c} f \\ g \end{array}\!\!\right)\!,\qquad 
    \psi_\L=\left(\!\!\begin{array}{c} f_L \\ g_L \end{array}\!\!\right)\!,
\end{align}
and $\L(x;x_a,x_c;\al,\bt)$ is the $2\times2$ matrix given by
\begin{align}\label{Lentries2}
\begin{split}
\L&(x;x_a,x_c;\al,\bt) \\
  &=D_L\left(\!\!\begin{array}{cc}
  \ds\sum_{j=0}^1(2j-1)\C{x}{x_a}{0}{x_c}{j}{\al}{\bt} &
  \ds\C{x}{x_a}{0}{x_c}{0}{\al}{\bt} \\
  \ds\sum_{j,k=0}^1(2j-1)(1-2k)\C{x}{x_a}{j}{x_c}{k}{\al}{\bt} &
  \ds\sum_{j=0}^1(1-2j)\C{x}{x_a}{j}{x_c}{0}{\al}{\bt}
  \end{array}\!\!\right)\!.
\end{split}
\end{align}

Such a Lax matrix \eqref{Lentries2} constructed from type-C equations is shown diagrammatically in Figure \ref{laxfig4} ({\it c.f.} Figure \ref{laxfig3}), using the representation of the type-C equation in Figure \ref{3fig4quad}.

\begin{figure}[htb!]
\centering

\begin{tikzpicture}[scale=0.7]

\fill (0,2.7) circle (0.01pt)
node[above=1pt]{$\C{x}{x_a}{x_b}{x_c}{x_d}{\al}{\bt}$};

\draw[-,double,very thick] (-4.0,0.0)--(0.0,0.0); \draw[-,very thick] (-0.0,0.0)--(4.0,0.0); \draw[-,double,very thick] (0.0,0.0)--(-2.0,2.0); \draw[-,very thick] (0.0,0.0)--(2.0,2.0);

\fill (-4,0) circle (3.5pt)
node[left=1pt]{$x_d$};
\fill(-4.2,-0.2) circle (0.01pt)
node[below=10pt]{$\psi$};
\fill (4,0) circle (3.5pt)
node[right=1pt]{$x_b$};
\fill(4.2,-0.2) circle (0.01pt)
node[below=10pt]{$\psi_\L$};
\fill (0,0) circle (3.5pt)
node[below=1pt]{$x$};
\fill (-2,2) circle (3.5pt)
node[left=1pt]{$x_c$};
\fill (2,2) circle (3.5pt)
node[right=1pt]{$x_a$};

\fill (0,-1.2) circle (0.01pt)
node[below=1pt]{$\L(x;x_a,x_c;\al,\bt)$};

\draw[-latex,double,dashed,very thick] (-3.7,-1.0)--(3.7,-1.0);




\begin{scope}[xshift=350pt]

\fill (0,2.7) circle (0.01pt)
node[above=1pt]{$\C{x}{x_b}{x_a}{x_d}{x_c}{\al}{\hat{\bt}}$};

\draw[-,very thick] (-4.0,0.0)--(0.0,0.0); \draw[-,double,very thick] (-0.0,0.0)--(4.0,0.0); \draw[-,very thick] (0.0,0.0)--(-2.0,2.0); \draw[-,double,very thick] (0.0,0.0)--(2.0,2.0);

\fill (-4,0) circle (3.5pt)
node[left=1pt]{$x_b$};
\fill(-4.2,-0.2) circle (0.01pt)
node[below=10pt]{$\psi_\L$};
\fill (4,0) circle (3.5pt)
node[right=1pt]{$x_d$};
\fill(4.2,-0.2) circle (0.01pt)
node[below=10pt]{$\psi$};
\fill (0,0) circle (3.5pt)
node[below=1pt]{$x$};
\fill (-2,2) circle (3.5pt)
node[left=1pt]{$x_a$};
\fill (2,2) circle (3.5pt)
node[right=1pt]{$x_c$};

\fill (0,-1.2) circle (0.01pt)
node[below=1pt]{$\L(x;x_a,x_c;\al,\bt)$};

\draw[latex-,double,dashed,very thick] (-3.7,-1.0)--(3.7,-1.0);

\end{scope}

\end{tikzpicture}

\caption{The type-C equation \eqref{refeq3}, reinterpreted as equation \eqref{mateq2} for the matrix \eqref{Lentries2}.   The diagrams on the left and right are equivalent, due to the symmetry \eqref{refsym}.} 
\label{laxfig4}
\end{figure}
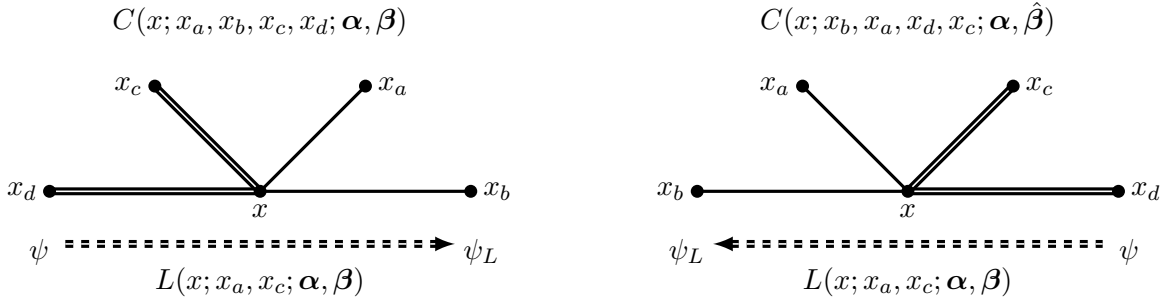

The procedure described above may be repeated to find the matrix for the inverse transition associated to $x_b\to x_d$.  Similarly to the previous case of Section \ref{sec:laxmethod1}, this results in a matrix proportional to the inverse of \eqref{Lentries2}.  However, this time the inverse matrix is not simply proportional to the matrix \eqref{Lentries2} with $x_a\leftrightarrow x_c$, because type-C equations don't satisfy the symmetry \eqref{symAB} which exchanges these variables.  Thus, for convenience the second Lax matrix will be denoted separately by $\overline{\L}(x;x_a,x_c;\al,\bt)$, where
\begin{align}\label{Lentries3}
\overline{\L}(x;x_a,x_c;\al,\bt)=L(x;x_a,x_c;\al,\bt)^{-1}.
\end{align}

Note that type-A equations satisfy the additional symmetries \eqref{symAB}, \eqref{symA}, which means that the Lax matrix \eqref{Lentries2} would be equivalent to \eqref{Lentries} up to a relabelling of the variables and parameters.  However, the type-C equations don't satisfy these additional symmetries, and the corresponding matrix \eqref{Lentries2} will thus be non-trivially different from \eqref{Lentries}.  

\subsubsection{CAFCC and compatibility of Lax matrices for type-B equations}

The compatibility of Lax matrices \eqref{Lentries2} can be derived from CAFCC in a similar way to the previous case of Section \ref{sec:laxmethod1}, but using a different choice of equations from \eqref{8corner} that are used to define the matrices \eqref{Lentries2}.  Specifically, the equations centered at $\ccy$ and $\ccf$ will again be used for one evolution, but this time they will be used to take $\ccyb\to \ccya\to \cczb$ (instead of $\ccyb\to\ccyc\to\cczb$), and for the second evolution the equations centered at $\ccd$ and $\cce$ will be used (instead of equations at $\ccb$, $\cca$) to take $\ccyb\to \ccza\to \cczb$.

Specifically, this corresponds to the following four equations from \eqref{8corner} and \eqref{CAFCCeqsBC}
\begin{align}\label{laxinit2}
\begin{array}{rrr}
&\C{\ccd}{\cce}{\ccza}{\ccyc}{\ccyb}{(\ccqb,\ccrb)}{(\ccra,\ccpb)}=0,\quad
&\C{\ccy}{\ccf}{\ccya}{\ccyc}{\ccyb}{(\ccqa,\ccrb)}{(\ccra,\ccpb)}=0, \\[0.1cm]
&\C{\cce}{\ccd}{\ccza}{\ccyc}{\cczb}{(\ccqb,\ccra)}{(\ccrb,\ccpb)}=0,\quad
&\C{\ccf}{\ccy}{\ccya}{\ccyc}{\cczb}{(\ccqa,\ccra)}{(\ccrb,\ccpb)}=0,
\end{array}
\end{align}
being respectively reinterpreted as four matrices for transitions associated to $\ccyb\to\ccza$, $\ccyb\to\ccya$, $\ccza\to\cczb$, $\ccya\to\cczb$.  The resulting matrices are respectively given in terms of \eqref{Lentries2} and \eqref{Lentries3}, by
\begin{align}\label{laxmatb}
\begin{array}{rrr}
&\LL=\L(\ccd;\cce,\ccyc;(\ccqb,\ccrb),(\ccra,\ccpb)), \qquad
&\M=\L(\ccy;\ccf,\ccyc;(\ccqa,\ccrb),(\ccra,\ccpb)), \\[0.1cm]
&\Mb=\overline{\L}(\cce;\ccd,\ccyc;(\ccqb,\ccra),(\ccrb,\ccpb)), \qquad
&\Lb=\overline{\L}(\ccf;\ccy,\ccyc;(\ccqa,\ccra),(\ccrb,\ccpb)).
\end{array}
\end{align}
The configuration of equations \eqref{laxinit2} and matrices \eqref{laxmatb} is shown diagrammatically in Figure \ref{LaxfigC}.  Note that according to Figure \ref{3fig4quad}, the equation centered at $\ccyc$ is of type-B.

\begin{figure}[tbh]
\centering

\begin{tikzpicture}[scale=0.7]

\draw[-,very thick] (-1.5,-1.5)--(-1.5,1.5)--(-3,0)--(-1.5,-1.5);\draw[-,very thick] (1.5,-1.5)--(1.5,1.5)--(3,0)--(1.5,-1.5);
\draw[-,double,very thick] (1.5,-1.5)--(-1.5,1.5)--(0,3)--(1.5,1.5)--(-1.5,-1.5)--(0,-3)--(1.5,-1.5);

\fill (0,3) circle (3.7pt)
node[above=1pt]{$\cczb$};
\fill (3,0) circle (3.7pt)
node[right=1pt]{$\ccza$}
node[right=25pt]{$\psi_1$};
\fill (0,-3) circle (3.7pt)
node[below=1pt]{$\ccyb$}
node[below=15pt]{$\psi$};
\fill (-3,0) circle (3.7pt)
node[left=1pt]{$\ccya$}
node[left=25pt]{$\psi_2$};
\fill (0,0) circle (3.7pt)
node[below=5pt]{$\ccyc$};
\fill (1.5,1.6) circle (0.01pt)
node[right=1pt]{$\cce$};
\fill (1.5,1.5) circle (3.7pt);
\fill (1.5,-1.6) circle (0.01pt)
node[right=1pt]{$\ccd$};
\fill (1.5,-1.5) circle (3.7pt);
\fill (-1.5,1.6) circle (0.01pt)
node[left=1pt]{$\ccf$};
\fill (-1.5,1.5) circle (3.7pt);
\fill (-1.5,-1.6) circle (0.01pt)
node[left=1pt]{$\ccy$};
\fill (-1.5,-1.5) circle (3.7pt);

\fill (-0.8,3.9) circle (0.01pt)
node[above=1pt]{$\psi_{24}$};
\fill (0.8,3.9) circle (0.01pt)
node[above=1pt]{$\psi_{13}$};

\fill (-2.4,-2.6) circle (0.01pt)
node[left=1pt]{$\M$};
\fill (2.4,-2.6) circle (0.01pt)
node[right=1pt]{$\LL$};
\fill (-2.4,2.6) circle (0.01pt)
node[left=1pt]{$\Lb$};
\fill (2.4,2.6) circle (0.01pt)
node[right=1pt]{$\Mb$};

\draw[-latex,double,dashed,very thick] (-0.7,-3.9)--(-4.2,-0.4);
\draw[-latex,double,dashed,very thick] (0.7,-3.9)--(4.2,-0.4);
\draw[-latex,double,dashed,very thick] (-4.2,0.4)--(-0.8,3.8);
\draw[-latex,double,dashed,very thick] (4.2,0.4)--(0.8,3.8);


\end{tikzpicture}

\caption{The four type-C equations \eqref{laxinit2} on the face-centered cube, centered at $\ccd$, $\ccy$, $\cce$, $\ccf$, are reinterpreted  in \eqref{laxmatb} as four matrices $\LL$, $\M$, $\Mb$, $\Lb$, respectively, using the graphical interpretation given in Figure \ref{3fig4quad}.  The compatibility condition is $\psi_{13}\doteq\psi_{24}$, implying for the matrices $\Lb\M-\Mb\LL\doteq 0$, where $\doteq$ indicates equality on solutions of the equation \eqref{laxeq2} centered at $\ccyc$.  Note that in terms of Figure \ref{3fig4quad}, this equation centered at $\ccyc$ is of type-B.}
\label{LaxfigC}
\end{figure}

Then denoting as before
\begin{align}
    \psi_{13}=\Mb\psi_1=\Mb\LL\psi,\qquad \psi_{24}=\Lb\psi_2=\Lb\M\psi,
\end{align}
the compatibility condition is
\begin{align}
    \psi_{13}\doteq\psi_{24},
\end{align}
where  $\doteq$ indicates equality on solutions of the equation centered at $\ccyc$ in Figure \ref{LaxfigC}, corresponding to
\begin{align}\label{laxeq2}
    \B{\ccyc}{\ccf}{\cce}{\ccy}{\ccd}{\ccrr}{\ccqq}=0,
\end{align}
from \eqref{6face} and \eqref{CAFCCeqsBC}.  In terms of the Lax matrices \eqref{laxmatb}, this compatibility condition is
\begin{align}\label{laxcomp2}
\Lb\M-\Mb\LL\doteq 0.
\end{align}
This is the final equation which reinterprets the CAFCC property of equations \eqref{laxinit2}, in terms of the compatibility of the Lax matrix defined in \eqref{Lentries2} (and its inverse \eqref{Lentries3}).  In this case, the parameter $\ccpb$ is identified as the spectral parameter resulting from extending the equation \eqref{laxeq2} into a third lattice direction.

\section{Expressions for CAFCC Lax matrices}\label{sec:laxexamples}


Using the two approaches of Section \ref{sec:laxmethods}, the expressions for compatible Lax matrices \eqref{Lentries} and \eqref{Lentries2}, for type-A and type-B equations respectively, will be given here for the face-centered quad equations listed in Appendix \ref{app:equations}.  However, there still remains to determine a choice for the normalisation factor $D_\L$, which will be done based on the expressions for the determinants of the respective Lax matrices \eqref{Lentries} and \eqref{Lentries2}.  Note that while it is possible to fix the normalisation of these matrices such that $\det(\L) =1$, this typically leads to normalisation factors $D_\L$ which involve factors of square roots, which makes the analysis for the Lax equations relatively more complicated.  This is analogous to the situation for Lax matrices derived from consistency-around-a-cube, which are also sometimes found to involve square root factors {\it e.g.} the cases of $Q4$ \cite{NijhoffQ4Lax} and each of $Q3_{(\delta=1)}$, $Q2$, $H3$, $H2$ \cite{BHQKLax}.

Fortunately, for most cases of face-centered quad equations that are listed in Appendix \ref{app:equations}, there can be found some simpler choices of normalisation of the Lax matrices which are given in a polynomial form.  Hence it is desirable to use such normalisations where possible over the irrational normalisations given by $\det(\L)=1$, and this section will focus on such cases.  The strategy for obtaining the normalisations is motivated by an observation of Bridgman {\it et.~al.} \cite{BHQKLax} for Lax matrices derived from CAC, where if a combination of determinants of the matrices can be factored into a certain ratio of polynomials, then sometimes these polynomials can be taken as valid choices of normalisation to arrive at a compatible Lax pair.

For the cases of CAFCC in this section, the determinants of the matrices \eqref{Lentries} or \eqref{Lentries2} can typically be factored into irreducible polynomials of both the variables and parameters, and the different combinations of these polynomials are used as candidates for the normalisation.  Here the simplest form of the normalisation  which results in a compatible Lax pair is chosen, which is sometimes simply $D_\L=1$, or otherwise a combination of several factors in the expressions for the (unnormalised) determinants.  With this strategy, a simpler normalisation than $\det(\L)=1$ was found for each of the equations in Appendix \ref{app:equations}, with the exception of the four cases of $A3_{(\delta=1)}$, $A2_{(\delta_1=1;\,\delta_2=1)}$, $B3_{(\delta_1=\frac{1}{2};\,\delta_2=\frac{1}{2};\,\delta_3=0)}$, and $B2_{(\delta_1=1;\,\delta_2=1;\,\delta_3=0)}$. This situation is similar to the cases of Lax pairs for ABS equations \cite{BHQKLax}, where this special choice of normalisation doesn’t work for more complicated equations such as $Q2$, $Q3_{(\delta=1)}$, and $H3$.  It would be interesting to understand why this special choice of normalisation only works for some equations, and whether it could be modified to extend to all equations including the more complicated examples.   Note that there is also another type-A equation called $A4$ \cite{Kels:2020zjn}, which isn't considered explicitly in this paper due to its complicated expression.

\subsection{Lax matrices for type-A equations (approach 1)}\label{sec:LaxA}

Recall from Section \ref{sec:laxmethod1} that there are two types of Lax matrices of the form \eqref{Lentries}, for type-A equations \eqref{laxeq}.  The first type of Lax matrix is constructed from the type-A equations themselves, and the second type of Lax matrix is constructed from type-C equations.

The face-centered quad equations of Appendix \ref{app:equations} are at most quadratic in the face variable $x$,  and the Lax matrix \eqref{Lentries} may thus be written in terms of coefficients of $x^i$, $i=0,1,2$, as
\begin{align}\label{Lentriesex}
    L(x;x_a,x_b;\al,\bt)=D_\L\bigl(x^2\Lxa+x\Lxb+\Lxc+\delta(x^2\Dxa+x\Dxb+\Dxc)\bigr).
\end{align}
Here $\delta$ is a parameter appearing for either a type-A or type-C equation, and each of the $\Lxa(x_a,x_b;\al,\bt)$, $\Lxb(x_a,x_b;\al,\bt)$, $\Lxc(x_a,x_b;\al,\bt)$, $\Dxa(x_a,x_b;\al,\bt)$, $\Dxb(x_a,x_b;\al,\bt)$, and $\Dxc(x_a,x_b;\al,\bt)$ are $2\times2$ matrices.  In the following, the Lax matrix \eqref{Lentriesex} (equivalently \eqref{Lentries}) will be given by specifying the six matrices $\Lxa,\Lxb,\Lxc$, $\Dxa,\Dxb,\Dxc$, as well as a valid normalisation factor $D_L$.

\subsubsection{\texorpdfstring{$A3_{(\delta)}$}{A3(delta)}}

The type-A equation is given by $A3_{(\delta)}$ in \eqref{a3d}.  For the equation \eqref{a3d}, the Lax matrix \eqref{Lentries} may be written in the form \eqref{Lentriesex}, where
\begin{gather}\label{a3mata}
    \Lxa=4\alpha_1\alpha_2\beta_1\beta_2\left(\!\!\begin{array}{cc}
    \tfrac{\beta_1}{\alpha_1}-\tfrac{\alpha_1}{\beta_1} & \bigl(\tfrac{\alpha_2}{\beta_1}-\tfrac{\beta_1}{\alpha_2}\bigr) x_a + \bigl(\tfrac{\beta_2}{\alpha_2}-\tfrac{\alpha_2}{\beta_2}\bigr) x_b \\[0.15cm] 0 & \tfrac{\beta_2}{\alpha_1}-\tfrac{\alpha_1}{\beta_2}
    \end{array}\!\!\right)\!,
\\[0.25cm]
    \Lxb=4\alpha_1\alpha_2\beta_1\beta_2\left(\!\!\begin{array}{cc}
    \bigl(\tfrac{\alpha_1}{\alpha_2}-\tfrac{\alpha_2}{\alpha_1}\bigr)x_a + \bigl(\tfrac{\alpha_1 \alpha_2}{\beta_1\beta_2} - \tfrac{\beta_1 \beta_2}{\alpha_1\alpha_2}\bigr)x_b & \bigl(\tfrac{\beta_1}{\beta_2}-\tfrac{\beta_2}{\beta_1})x_a x_b \\[0.15cm] \tfrac{\beta_1}{\beta_2}-\tfrac{\beta_2}{\beta_1} & \bigl(\tfrac{\alpha_1 \alpha_2}{\beta_1\beta_2}-\tfrac{\beta_1 \beta_2}{\alpha_1\alpha_2}\bigr)x_a + \bigl(\tfrac{\alpha_1}{\alpha_2}-\tfrac{\alpha_2}{\alpha_1}\bigr)x_b
    \end{array}\!\!\right)\!,
\end{gather} 
\begin{gather}\label{a3matab}
    \Lxc=4\alpha_1\alpha_2\beta_1\beta_2\left(\!\!\begin{array}{cc}
    \bigl(\tfrac{\beta_2}{\alpha_1}-\tfrac{\alpha_1}{\beta_2}\bigr)x_a x_b & 0 \\[0.15cm] \bigl(\tfrac{\beta_2}{\alpha_2}-\tfrac{\alpha_2}{\beta_2}\bigr)x_a + \bigl(\tfrac{\alpha_2}{\beta_1}-\tfrac{\beta_1}{\alpha_2}\bigr)x_b & \bigl(\tfrac{\beta_1}{\alpha_1}-\tfrac{\alpha_1}{\beta_1}\bigr)x_a x_b
    \end{array}\!\!\right)\!,
\\[0.3cm]
    \Dxa=0,\qquad\Dxb=\left(\!\!\begin{array}{cc}
    0 & (\alpha_2^2-\alpha_1^2)(\beta_1^2-\beta_2^2)(\tfrac{\alpha_1\alpha_2}{\beta_1\beta_2} - \tfrac{\beta_1\beta_2}{\alpha_1\alpha_2}) \\ 0 & 0
    \end{array}\!\!\right)\!,
\\[0.3cm]
\label{a3matb}
    \Dxc=\left(\!\!\begin{array}{cc}
    \frac{(\alpha_2^2-\beta_1^2)(\alpha_2^2-\beta_2^2)(\beta_2^2-\alpha_1^2)}{\alpha_2 \beta_2}  & \frac{(\alpha_1^2-\beta_1^2)(\alpha_1^2-\beta_2^2)\bigl(\beta_1(\alpha_2^2-\beta_2^2)x_a+\beta_2(\beta_1^2-\alpha_2^2)x_b\bigr)}{\alpha_1 \beta_1 \beta_2} \\[0.15cm] 
    0 & \frac{(\alpha_2^2-\beta_1^2)(\beta_1^2-\alpha_1^2)(\alpha_2^2-\beta_2^2)}{\alpha_2 \beta_1}
    \end{array}\!\!\right)\!.
\end{gather}
For this case, $\delta$ in \eqref{Lentries3} is identified with the parameter $\delta$ from the equation $A3_{(\delta)}$.

The determinant of \eqref{Lentriesex} with \eqref{a3mata}--\eqref{a3matb} is given by
\begin{align}
\begin{split}
\det(\L)=16D_\L^2(\alpha_1+\beta_1)(\alpha_1-\beta_1)
\bigl((\beta_1 x-\alpha_2 x_a)(\beta_1 x_a-\alpha_2 x)-\delta\tfrac{\alpha_2\beta_1}{4}(\tfrac{\alpha_2}{\beta_1}-\tfrac{\beta_1}{\alpha_2})^2\bigr) \phantom{.} \\
\times(\alpha_1+\beta_2)(\alpha_1-\beta_2)\bigl((\beta_2 x-\alpha_2 x_b)(\beta_2 x_b-\alpha_2x)-\delta\tfrac{\alpha_2\beta_2}{4}(\tfrac{\alpha_2}{\beta_2}-\tfrac{\beta_2}{\alpha_2})^2\bigr).
\end{split}
\end{align}
Four choices of factors on the right hand side of the above will be used for the normalisation factor $D_\L$ to give compatible Lax matrices as follows.

\begin{prop}\label{prop:a30}

For the case $\delta=0$, the Lax matrix \eqref{Lentriesex} defined with \eqref{a3mata}--\eqref{a3matab}, and the normalisation chosen as either
\begin{align}\label{a3norm1}
    D_L=\bigl((\alpha_1- \varepsilon \beta_1)(\alpha_1+ \varepsilon \beta_2)(\alpha_2 x - \beta_1 x_a)(\beta_2 x - \alpha_2 x_b)\bigr)^{-1},\qquad \varepsilon=\pm1,\quad \mbox{or}
\\
\label{a3norm2}
    D_L=\bigl((\alpha_1- \varepsilon \beta_1)(\alpha_1+ \varepsilon \beta_2)(\alpha_2 x - \beta_2 x_b)(\beta_1 x - \alpha_2 x_a)\bigr)^{-1},\qquad \varepsilon=\pm1,\phantom{\quad \mbox{or}}
\end{align}
satisfies the Lax equation \eqref{laxcomp} on solutions of $A3_{(\delta)}(\ccya;\ccf,\ccy,\cca,\ccb;\al,\gm)=0$.  

\end{prop}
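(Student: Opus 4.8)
The plan is to verify the matrix identity \eqref{laxcomp} directly, using the fact that $\LL,\M,\Mb,\Lb$ in \eqref{laxmatA} are all specialisations of the single matrix \eqref{Lentriesex}, whose entries at $\delta=0$ are read off from \eqref{a3mata}--\eqref{a3matab}. I would split the argument into a conceptual reduction, which brings the claim down to fixing one scalar, and a computational step that fixes that scalar through the determinant.

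First I would use that each of the four matrices is, by construction, the matrix of the M\"obius transformation obtained by solving one of the corner equations \eqref{laxiniteqs} for its outgoing face variable, so that the products $\Mb\LL$ and $\Lb\M$ both represent the transition $\ccyb\to\cczb$ along the two paths of Figure \ref{LaxfigA}. On solutions of the central equation \eqref{laxeq}, the CAFCC property of the $A3_{(\delta)}$ equations forces these two transitions to coincide as M\"obius maps, so the two products agree projectively and there is a scalar $\lambda$, rational in the variables and parameters, with
\[ \Lb\M \doteq \lambda\,\Mb\LL . \]
The entire content of the proposition is therefore that the chosen normalisation makes $\lambda=1$.

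Next I would pin $\lambda$ down by determinants. Taking $\det$ of the relation above gives
\[ \lambda^2 \doteq \frac{\det(\Lb)\,\det(\M)}{\det(\Mb)\,\det(\LL)}, \]
and each factor comes from the displayed factorisation of $\det(\L)$, which at $\delta=0$ reads
\[ \det(\L)=16D_\L^2(\alpha_1^2-\beta_1^2)(\alpha_1^2-\beta_2^2)(\beta_1x-\alpha_2x_a)(\beta_1x_a-\alpha_2x)(\beta_2x-\alpha_2x_b)(\beta_2x_b-\alpha_2x), \]
evaluated at the four argument sets of \eqref{laxmatA}. With the normalisation \eqref{a3norm1} the squared factors $(\alpha_1-\varepsilon\beta_1)^2(\alpha_1+\varepsilon\beta_2)^2$ and the two chosen linear-in-$x$ factors cancel exactly half of each determinant, and I expect the surviving parameter prefactors and linear factors to pair up between numerator and denominator so that the ratio collapses to $1$; the sign choices $\varepsilon=\pm1$ and the two forms \eqref{a3norm1}/\eqref{a3norm2} simply amount to selecting complementary pairs among the four variable factors of $\det(\L)$, so a single representative computation suffices. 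This yields $\lambda^2\doteq1$, and to fix $\lambda=+1$ I would compare one entry of $\Lb\M$ and $\Mb\LL$, taking the lower-left entry since it carries no $x^2$ term in \eqref{a3mata}--\eqref{a3matab} and is thus the simplest.

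The \emph{main obstacle} is the computational core, where the explicit equation $A3_{(\delta=0)}$ must finally be used: establishing the determinant collapse $\lambda^2\doteq1$ on solutions, and reducing the single chosen entry of $\Lb\M-\Mb\LL$ modulo $A3_{(\delta=0)}(\ccya;\ccf,\ccy,\cca,\ccb;\al,\gm)=0$ to fix the sign $\lambda=+1$. Once the explicit $2\times2$ products are formed from \eqref{a3mata}--\eqref{a3matab} under the substitutions \eqref{laxmatA}, every matrix is affine-linear in its corner variables and at most quadratic in its face variable, and all arguments lie among $\ccya,\ccf,\ccy,\cca,\ccb$, so the reduction is finite though heavy; the natural device is the four-leg factorisation \eqref{4leg} (with the $a$-function from Table \ref{table-A}), which renders the central relation multiplicative and lets the cancellations be organised factor by factor. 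Once $\lambda=1$ is secured, the projective compatibility of the first step guarantees that the remaining entries of $\Lb\M-\Mb\LL$ vanish automatically, so that \eqref{laxcomp} holds on solutions, with \eqref{a3norm2} and each sign handled by the identical computation on the complementary determinant factors.
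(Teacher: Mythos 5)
Your strategy is sound but genuinely different from the paper's. The paper proves Proposition \ref{prop:a30} by brute force: it exhibits $\Lb\M-\Mb\LL$ \emph{exactly} as an explicit prefactor times $A3_{(\delta)}(\ccya;\ccf,\ccy,\cca,\ccb;\al,\gm)$ times an explicit rank-one matrix, so vanishing on solutions is read off immediately (and the exact identity shows more, namely that the Lax equation is essentially equivalent to the quad equation rather than merely implied by it). You instead run the classical consistency argument: projective agreement of the two transition matrices from CAFCC, determinants to force $\lambda^2\doteq1$, and one entry to fix the sign. This route does work here, and for the reason you anticipate: with \eqref{a3norm1} each normalised determinant reduces to a constant times a product of factors of the form $(\beta x-\alpha y)/(\alpha x-\beta y)$, the parameter prefactors cancel identically in the cross-ratio $\det(\Lb)\det(\M)/\bigl(\det(\Mb)\det(\LL)\bigr)$, and the surviving variable-dependent factors assemble into precisely the multiplicative four-leg form \eqref{4leg} of the central equation with $a$ from Table \ref{table-A}. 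What your approach buys is an explanation of \emph{why} these normalisations work; what the paper's buys is a one-line verification and a stronger (exact, not just on-shell) identity.

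Two points need tightening. First, be careful with the phrase ``the ratio collapses to $1$'': the individual normalised determinants are \emph{not} constants (e.g.\ $\det(\LL)$ retains the factor $(\alpha_2\ccy-\gamma_2\ccya)/(\alpha_2\ccya-\gamma_2\ccy)$ and its companions), and the four-fold ratio collapses only to the four-leg expression, hence to $1$ only modulo $A3_{(\delta=0)}(\ccya;\ccf,\ccy,\cca,\ccb;\al,\gm)=0$; nothing cancels unconditionally. Second, the claim that ``a single representative computation suffices'' for all sign and normalisation choices is an overstatement: passing from \eqref{a3norm1} to \eqref{a3norm2}, or flipping $\varepsilon$, rescales each of $\LL,\M,\Mb,\Lb$ by a \emph{different} non-constant factor $\mu_i$, and the Lax equation survives only if $\mu_4\mu_2/(\mu_3\mu_1)\doteq1$ — which is again a four-leg identity that must be checked for each choice, as must the sign $\lambda=+1$ via the entry comparison. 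These are finite computations of the same type you already describe, so the proof closes, but as written they are asserted rather than carried out, whereas they constitute the entire content of the paper's displayed identities.
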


\begin{proof}

Using the definitions \eqref{laxmatA}, the Lax equation of Proposition \ref{prop:a30} for the first normalisation \eqref{a3norm1} may be written as
\begin{align}
    \Lb\M-\Mb\LL=\frac{\frac{16\alpha_1\alpha_2\gamma_1\gamma_2}{(\beta_1 + \varepsilon \gamma_1) (\beta_1 - \varepsilon \gamma_2)} A3_{(\delta)}(\ccya;\ccf,\ccy,\cca,\ccb;\al,\gm)\left(\!\!\begin{array}{cc}
    -\gamma_1 \gamma_2 \ccya & \beta_1 \gamma_1 \ccya^2 \\ -\beta_1 \gamma_2 & \beta_1^2 \ccya 
    \end{array}\!\!\right)}
    { (\alpha_1 \ccya - \gamma_2 \ccb) (\alpha_2 \ccya - \gamma_2 \ccy) (\gamma_1 \ccya - \alpha_2 \ccf) (\gamma_1 \ccya - \alpha_1 \cca) }. 
\end{align}
A similar computation with the second normalisation \eqref{a3norm2} instead of \eqref{a3norm1} gives
\begin{align}
    \Lb\M-\Mb\LL=\frac{\frac{16\alpha_1\alpha_2\gamma_1\gamma_2}{(\beta_1 + \varepsilon \gamma_1) (\beta_1 - \varepsilon \gamma_2)} A3_{(\delta)}(\ccya;\ccf,\ccy,\cca,\ccb;\al,\gm)\left(\!\!\begin{array}{cc}
    -\beta_1^2 \ccya & \beta_1 \gamma_2 \ccya^2 \\ -\beta_1 \gamma_1 &  \gamma_1 \gamma_2 \ccya
    \end{array}\!\!\right)}
    { (\alpha_1 \ccya - \gamma_1 \cca) (\alpha_2 \ccya - \gamma_1 \ccf) (\gamma_2 \ccya - \alpha_2 \ccy) (\gamma_2 \ccya - \alpha_1 \ccb) }. 
\end{align}

\end{proof}

Another Lax matrix for $A3_{(\delta)}$ may be obtained from the type-C equation $C3_{(\delta_1;\,\delta_2;\,\delta_3)}$, given in \eqref{c3ddd}.  For \eqref{c3ddd}, the Lax matrix \eqref{Lentries} may be written in the form \eqref{Lentriesex}, where
\begin{align}\label{a3mataC}
    \Lxa=-\alpha_2\left(\!\!\begin{array}{cc}
    \beta_2 &\hspace{-0.15cm} 0 \\ 0 &\hspace{-0.15cm} \beta_1
    \end{array}\!\!\right)\!,
\;\;
    \Lxb=\left(\!\!\begin{array}{cc}
    \beta_1 \beta_2 x_a + \alpha_2^2  x_b &\hspace{-0.15cm} 0 \\ 0 &\hspace{-0.15cm} \alpha_2^2  x_a + \beta_1 \beta_2 x_b
    \end{array}\!\!\right)\!,
\;\;
    \Lxc=-\alpha_2 x_a x_b\left(\!\!\begin{array}{cc}
    \beta_1 &\hspace{-0.15cm} 0 \\ 0 &\hspace{-0.15cm} \beta_2
    \end{array}\!\!\right)\!,
\end{align}
\begin{gather}
    \Dxa=\frac{2\delta_3\beta_1\beta_2}{\alpha_1}\left(\!\!\begin{array}{cc}
    0 & \alpha_2^2 \bigl(\tfrac{x_a}{\beta_1} - \tfrac{x_b}{\beta_2}\bigr) + (\beta_2 x_b - \beta_1 x_a) \\ 0 & 0
    \end{array}\!\!\right)\!,
\\[0.3cm]
    \Dxb=\frac{2(\beta_1^2-\beta_2^2)\alpha_2}{\alpha_1}\left(\!\!\begin{array}{cc}
    0 & {\delta_3}x_a x_b - \tfrac{\alpha_1^2}{2\beta_1\beta_2} \\
    {\delta_2} & 0
    \end{array}\!\!\right)\!,
\\[0.3cm]
\label{a3matbC}
    \Dxc=\left(\!\!\begin{array}{cc}
    - {\delta_2}(\alpha_2^2 - \beta_1^2)(\tfrac{\alpha_2}{\beta_2} - \tfrac{\beta_2}{\alpha_2}) &
    \alpha_1\bigl(\beta_1 x_b - \beta_2 x_a + \alpha_2^2(\tfrac{x_a}{\beta_2} - \tfrac{x_b}{\beta_1})\bigr) \\[0.1cm]
    \tfrac{2 {\delta_2}}{\alpha_1}\bigl(\beta_1 \beta_2(\beta_2 x_a - \beta_1 x_b) + \alpha_2^2 (\beta_2 x_b - \beta_1 x_a)\bigr) &
    -{\delta_2}(\tfrac{\alpha_2}{\beta_1} - \tfrac{\beta_1}{\alpha_2})(\alpha_2^2 - \beta_2^2)
    \end{array}\!\!\right)\!.
\end{gather}
The parameter $\delta$ in \eqref{Lentriesex} is identified with the parameter $\delta_1$ of $C3_{(\delta_1;\,\delta_2;\,\delta_3)}$, while the above matrices also depend explicitly on the two other parameters $\delta_2,\delta_3$, of this equation.

The determinant of \eqref{Lentriesex} with \eqref{a3mataC}--\eqref{a3matbC} is given by
\begin{align}
\begin{split}
\det(\L)=D_\L^2
\bigl((\beta_1 x-\alpha_2 x_a)(\beta_1 x_a-\alpha_2 x)-\delta_2\tfrac{\alpha_2\beta_1}{2}(\tfrac{\alpha_2}{\beta_1}-\tfrac{\beta_1}{\alpha_2})^2\bigr) \phantom{.} \\
\times
\bigl((\beta_2 x-\alpha_2 x_b)(\beta_2 x_b-\alpha_2x)-\delta_2\tfrac{\alpha_2\beta_2}{2}(\tfrac{\alpha_2}{\beta_2}-\tfrac{\beta_2}{\alpha_2})^2\bigr).
\end{split}
\end{align}

\begin{prop}\label{prop:a30C}

For the three cases $(\delta_1,\delta_2,\delta_3)=(0,0,0),(1,0,0),(\tfrac{1}{2},0,\tfrac{1}{2})$, the Lax matrix \eqref{Lentriesex} defined with \eqref{a3mataC}--\eqref{a3matbC}, and the normalisation chosen as either
\begin{align}\label{a3norm1C}
    D_\L=\bigl((\beta_1 x - \alpha_2 x_a)(\beta_2 x_b - \alpha_2 x)\bigr)^{-1},\quad\mbox{or}
\\
\label{a3norm2C}
    D_\L=\bigl((\beta_1 x_a - \alpha_2 x)(\beta_2 x - \alpha_2 x_b)\bigr)^{-1},\phantom{\quad\mbox{or}}
\end{align}
satisfies the Lax equation \eqref{laxcomp} on solutions of $A3_{(\delta)}(\ccya;\ccf,\ccy,\cca,\ccb;\al,\gm)=0$.

\end{prop}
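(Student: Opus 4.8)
The plan is to establish the proposition by a direct matrix computation, exactly parallel to the proof of Proposition~\ref{prop:a30}, the only difference being that the underlying Lax matrix $\L$ is now the one built from the type-C equation $C3_{(\delta_1;\,\delta_2;\,\delta_3)}$ via \eqref{Lentries}, with entries \eqref{a3mataC}--\eqref{a3matbC}. Since type-C equations satisfy the reflection symmetry \eqref{refsym}, the approach-1 construction of Section~\ref{sec:laxmethod1} applies without change, so the four matrices $\LL,\M,\Mb,\Lb$ are again given by \eqref{laxmatA}, and by Figure~\ref{LaxfigA2} the central face equation of the configuration remains the type-A equation \eqref{laxeq}, namely $A3_{(\delta)}(\ccya;\ccf,\ccy,\cca,\ccb;\al,\gm)=0$ with $\delta=\delta_1$.

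First I would substitute the entries \eqref{a3mataC}--\eqref{a3matbC} into the four matrices of \eqref{laxmatA}, reading off the variable and parameter assignments for each of $\LL,\M,\Mb,\Lb$ directly from \eqref{laxmatA}, and then form $\Lb\M-\Mb\LL$ with the normalisation $D_\L$ inserted into each factor. The key simplification is that all three listed triples have $\delta_2=0$, so that the determinant displayed just above reduces (up to the $D_\L^2$ prefactor) to the product of the four linear factors $(\beta_1 x-\alpha_2 x_a)(\beta_1 x_a-\alpha_2 x)(\beta_2 x-\alpha_2 x_b)(\beta_2 x_b-\alpha_2 x)$. The two normalisations \eqref{a3norm1C} and \eqref{a3norm2C} simply select the two complementary pairs among these four factors, realising the determinant-factorisation strategy described at the start of Section~\ref{sec:laxexamples}, which is what allows the denominators introduced by $D_\L$ to cancel against the polynomial numerators of the matrix entries.

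The target of the computation is to show that, after this cancellation, $\Lb\M-\Mb\LL$ equals an explicit rank-one matrix (which has vanishing determinant, as the two matrices displayed in the proof of Proposition~\ref{prop:a30} do) times the scalar $A3_{(\delta)}(\ccya;\ccf,\ccy,\cca,\ccb;\al,\gm)$, divided by a product of four linear factors in the face variables. Once this factorised form is reached, the right-hand side vanishes identically whenever $A3_{(\delta)}(\ccya;\ccf,\ccy,\cca,\ccb;\al,\gm)=0$, which is the meaning of $\doteq$, so the Lax equation \eqref{laxcomp} holds; the two normalisations \eqref{a3norm1C} and \eqref{a3norm2C} then yield two such identities with the surviving linear factors and the explicit matrix interchanged, exactly as \eqref{a3norm1} and \eqref{a3norm2} did in Proposition~\ref{prop:a30}.

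The step requiring the most care is confirming that the scalar prefactor is exactly the type-A polynomial $A3_{(\delta)}$ and not some other combination, since here $\L$ comes from the type-C polynomial $C3$ rather than from $A3$ itself; this must be tracked through the specific parameter-exchange pattern of \eqref{laxmatA}, whose consistency with a type-A centre is precisely what Figure~\ref{LaxfigA2} asserts. The accompanying point to verify is that the restriction to the three triples with $\delta_2=0$ is essential: for $\delta_2\neq0$ the determinant retains the irreducible quadratic factors appearing in $\det(\L)$, the linear normalisations \eqref{a3norm1C} and \eqref{a3norm2C} no longer clear the denominators, and one is pushed back to the $\det(\L)=1$ normalisation discussed earlier. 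Establishing these two facts is the crux; the remaining algebra, though lengthy, is a mechanical verification best carried out with computer algebra.
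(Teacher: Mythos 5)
Your overall route is the paper's own: form $\LL,\M,\Mb,\Lb$ from \eqref{laxmatA} using the type-C-derived matrix \eqref{Lentriesex} with \eqref{a3mataC}--\eqref{a3matbC}, insert the normalisation, and verify by direct computation that $\Lb\M-\Mb\LL$ equals the central type-A polynomial times an explicit rank-one matrix divided by a product of four linear factors in the face variables. Your observation that \eqref{a3norm1C} and \eqref{a3norm2C} are the two complementary pairs of linear factors into which $\det(\L)$ splits when $\delta_2=0$ is also exactly the determinant-factorisation strategy the paper uses.

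There is, however, one concrete error in your setup: you identify the central face equation \eqref{laxeq} as $A3_{(\delta)}$ with $\delta=\delta_1$. According to Appendix \ref{app:b3}, the $A3$/$B3$/$C3$ system satisfies CAFCC in the form \eqref{CAFCCeqsBC} with the type-A parameter given by $\delta=2\delta_2$, not $\delta_1$. Since all three admissible triples in Proposition \ref{prop:a30C} have $\delta_2=0$, the equation on whose solutions \eqref{laxcomp} holds is $A3_{(\delta=0)}$ in every case. For the triples $(1,0,0)$ and $(\tfrac{1}{2},0,\tfrac{1}{2})$ your identification would have you trying to extract $A3_{(1)}$, respectively $A3_{(1/2)}$, as the scalar prefactor of $\Lb\M-\Mb\LL$, and that factorisation will not close: the $\delta$-dependent terms of $A3_{(\delta)}$ are generically nonzero, so the expression cannot be divisible by both $A3_{(0)}$ and $A3_{(\delta_1)}$, and indeed $A3_{(\delta=1)}$ is listed in Section \ref{sec:laxexamples} among the cases for which no polynomial normalisation was found. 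You rightly flag the identification of the scalar prefactor as the delicate step, so carrying out the algebra would expose the mistake, but as written the plan targets the wrong equation for two of the three cases; with $\delta=2\delta_2=0$ substituted, the remainder of your argument coincides with the paper's verification.
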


\begin{proof}

Using the definitions \eqref{laxmatA}, the Lax equation of Proposition \ref{prop:a30C} for the first normalisation \eqref{a3norm1C} may be written as
\begin{align}
    \Lb\M-\Mb\LL=\frac{(\alpha_1\alpha_2\gamma_1\gamma_2) A3_{(\delta)}(\ccya;\ccf,\ccy,\cca,\ccb;\al,\gm) \left(\!\begin{array}{cc} -\ccya & \delta_1\tfrac{\beta_1}{\gamma_2} + \delta_3\tfrac{\gamma_2}{\beta_1}\ccya^2 \\ 0 & 0 \end{array}\!\right)}
    {(\alpha_1\ccya - \gamma_1\cca)(\alpha_2 \ccya - \gamma_1 \ccf)(\gamma_2 \ccya - \alpha_2 \ccy)(\gamma_2 \ccya - \alpha_1 \ccb)},
\end{align}
while the Lax equation for the second normalisation \eqref{a3norm2C} may be written as
\begin{align}
    \Lb\M-\Mb\LL=\frac{(\alpha_1\alpha_2\gamma_1\gamma_2) A3_{(\delta)}(\ccya;\ccf,\ccy,\cca,\ccb;\al,\gm) \left(\!\begin{array}{cc} 0 & \delta_1\tfrac{\beta_1}{\gamma_1} + \delta_3\tfrac{\gamma_1}{\beta_1}\ccya^2 \\ 0 & \ccya \end{array}\!\right)}
    {(\alpha_1\ccya - \gamma_2\ccb)(\alpha_2 \ccya - \gamma_2 \ccy)(\gamma_1 \ccya - \alpha_2 \ccf)(\gamma_1 \ccya - \alpha_1 \cca)}.
\end{align}

\end{proof}

\subsubsection{\texorpdfstring{$A2_{(\delta_1;\,\delta_2)}$}{A2(delta1;delta2)}}


For convenience, the difference variable
\begin{align}\label{thtdef1}
    \theta_{ij}=\theta_i-\theta_j, \qquad i,j\in\{1,2,3,4\},
\end{align}
will be used in the following, where $\theta_i$, $i\in\{1,2,3,4\}$, represents one of the four components $\alpha_1,\alpha_2,\beta_1,\beta_2$ of $\al$ and $\bt$, as
\begin{align}\label{thtdef2}
    (\theta_1,\theta_2,\theta_3,\theta_4)=(\alpha_1,\alpha_2,\beta_1,\beta_2).
\end{align}

The type-A equation $A2_{(\delta_1;\,\delta_2)}$ is given by \eqref{a2dd}.  For the equation \eqref{a2dd}, the Lax matrix \eqref{Lentries} may be written in the form \eqref{Lentriesex}, where
\begin{gather}\label{a2mata}
    \Lxa=\left(\!\!\begin{array}{cc}
    \theta_{31} & \theta_{23}x_a -\theta_{24}x_b \\ 0 & \theta_{41}
    \end{array}\!\!\right)\!,
\quad
    \Lxb=\left(\!\!\begin{array}{cc}
    x_a \theta_{12} + x_b(\theta_{13}+\theta_{24}) & \theta_{34} x_a x_b \\ \theta_{34} & x_b \theta_{12} + x_a(\theta_{13}+\theta_{24})
    \end{array}\!\!\right)\!,
\\[0.3cm]
    \Lxc=\left(\!\!\begin{array}{cc}
    x_a x_b \theta_{41} & 0 \\ x_a \theta_{42} + x_b \theta_{23} & x_a x_b \theta_{31}
    \end{array}\!\!\right)\!,
\qquad 
    \Dxa=\left(\!\!\begin{array}{cc}
    0 & {\delta_2} \theta_{12} \theta_{34}(\theta_{13} + \theta_{24}) \\ 0 & 0
    \end{array}\!\!\right)\!,
\end{gather}
and entries of $\Dxb$ given by
\begin{align}
\begin{split}
    (\Dxb)_{11}=&\, {\delta_2}\bigl(2 \theta_{13}\theta_{24}^2 + \theta_{12}\theta_{34}(\theta_{12}-\theta_{34})\bigr), \\
    (\Dxb)_{12}=&\,
    \theta_{12} \theta_{34}(\theta_{13} + \theta_{24})(x_a+x_b-\theta_{13}^2-\theta_{24}^2-\theta_{12}\theta_{34})^{\delta_2} -2{\delta_2}\theta_{13} \theta_{14}(x_a \theta_{24} + x_b \theta_{32}), \\
    (\Dxb)_{21}=&\, 0, \\
    (\Dxb)_{22}=&\, {\delta_2}\bigl(2 \theta_{14}\theta_{23}^2-\theta_{12}\theta_{34}(\theta_{12}+\theta_{34})\bigr),
\end{split}
\end{align}
and entries of $\Dxc$ given by
\begin{align}\label{a2matb}
\begin{split}
    (\Dxc)_{11}=&\,
    \theta_{14} \theta_{23} \theta_{24}(x_b - \theta_{12} \theta_{34} - \theta_{24}^2)^{\delta_2} - {\delta_2} \theta_{14}(x_a \theta_{12} \theta_{24} - x_b \theta_{23} \theta_{13} ), \\
    (\Dxc)_{12}=&\, \theta_{13} \theta_{14} \Bigl(x_b \theta_{23}(\theta_{12} \theta_{43}- \theta_{13}^2)^{\delta_2} - x_a \theta_{24}(\theta_{12} \theta_{34} - \theta_{14}^2)^{\delta_2} \\
    &\phantom{\theta_{13} \theta_{14} \Bigl(} - \delta_2 \theta_{34} \bigl(x_a x_b-(\theta_{13}+\theta_{24})\theta_{12}\theta_{23}\theta_{24}\bigr)\!\Bigr), \\ 
    (\Dxc)_{21}=&\, {\delta_2} \theta_{23} \theta_{24} \theta_{43}, \\
    (\Dxc)_{22}=&\, \theta_{13} \theta_{23} \theta_{24}(x_a + \theta_{12} \theta_{34} - \theta_{23}^2)^{\delta_2} + {\delta_2} \theta_{13}\bigl(x_a \theta_{14} \theta_{24} - x_b \theta_{12} \theta_{23}\bigr).
    \end{split}
\end{align}
The parameter $\delta$ in \eqref{Lentriesex} is identified with the parameter $\delta_1$ of $A2_{(\delta_1;\,\delta_2)}$, while the above matrices also depend explicitly on the other parameter $\delta_2$ of this equation.

The determinant of \eqref{Lentriesex} with \eqref{a2mata}--\eqref{a2matb} is given by
\begin{align}
\begin{split}
\det(\L)=D_\L^2\theta_{13}\theta_{14}
\Bigl((x-x_a)^2-\delta_1\theta_{23}^2 \bigl(2(x+x_a)-\theta_{23}^2\bigr)^{\delta_2}\Bigr)\phantom{.} \\
\times\Bigl((x-x_b)^2-\delta_1\theta_{24}^2 \bigl(2(x+x_b)-\theta_{24}^2\bigr)^{\delta_2}\Bigr).
\end{split}
\end{align}
Note that for $\delta_2=0$, two of the above factors may be written as
\begin{align}\label{factorA2}
\begin{split}
    \bigl((x-x_a)^2-\delta_1\theta_{23}^2\bigr)&\bigl((x-x_b)^2-\delta_1\theta_{24}^2\bigr) =\prod_{\varepsilon\in\{-1,1\}}(x-x_a+\varepsilon\delta_1\theta_{23})(x-x_b+\varepsilon\delta_1\theta_{24}).
\end{split}
\end{align}
Two choices of the four factors on the right hand side will be used in the normalisation factor $D_\L$ to give compatible Lax matrices as follows.

\begin{prop}\label{prop:a20}

For the two cases $(\delta_1,\delta_2)=(0,0),(1,0)$,  the Lax matrix \eqref{Lentriesex} defined with \eqref{a2mata}--\eqref{a2matb}, and the normalisation
\begin{align}
    D_\L=\bigl((\alpha_1 + \tfrac{\varepsilon_1-1}{2}\beta_1  - \tfrac{\varepsilon_1+1}{2} \beta_2 )(x_a-x- \varepsilon_2 \delta_1 \theta_{23})(x_b-x+ \varepsilon_2 \delta_1 \theta_{24})\bigr)^{-1},\qquad \varepsilon_1,\varepsilon_2=\pm1,
\end{align}
satisfies the Lax equation \eqref{laxcomp} on solutions of $A2_{(\delta_1;\,\delta_2)}(\ccya;\ccf,\ccy,\cca,\ccb;\al,\gm)=0$.  
\end{prop}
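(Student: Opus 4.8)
The plan is to verify the Lax equation \eqref{laxcomp} by the same direct route used in the proofs of Propositions \ref{prop:a30} and \ref{prop:a30C}. I would substitute the explicit entries of the four matrices $\LL,\M,\Mb,\Lb$ defined in \eqref{laxmatA} into the combination $\Lb\M-\Mb\LL$, and show that after carrying out the two matrix products and subtracting, the result equals an explicit rational scalar times the polynomial $A2_{(\delta_1;\,\delta_2)}(\ccya;\ccf,\ccy,\cca,\ccb;\al,\gm)$ times a fixed (and, as in the earlier propositions, rank-one) $2\times2$ matrix. Because this polynomial is the central equation \eqref{laxeq}, such a factorised expression vanishes exactly on its solutions, which is the content of $\doteq 0$. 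Each matrix is read off from \eqref{Lentriesex} with entries \eqref{a2mata}--\eqref{a2matb}, after applying the variable and parameter replacements prescribed by \eqref{laxmatA}, so that the face variable, the two corner variables, and the four components of $(\al,\bt)$ become the indicated functions of $\ccya,\ccf,\ccy,\cca,\ccb$ and of the components of $\al,\bt,\gm$.

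First I would exploit that both cases have $\delta_2=0$, which collapses the entries \eqref{a2mata}--\eqref{a2matb}: every term carrying an explicit factor $\delta_2$ drops out and every factor raised to the power $\delta_2$ becomes $1$, so that only $\Lxa,\Lxb,\Lxc$ survive when $\delta_1=0$, while these together with the simplified $\Dxb,\Dxc$ survive when $\delta_1=1$. With $\delta_2=0$ the determinant factorises as in \eqref{factorA2} into four linear factors (up to the parameter factor $\theta_{13}\theta_{14}$), and the chosen $D_\L$ is precisely the reciprocal of one parameter factor and of two of these four linear factors, selected by the signs $\varepsilon_1,\varepsilon_2$. I would treat the two cases in turn: for $\delta_1=0$ only the polynomial part $x^2\Lxa+x\Lxb+\Lxc$ contributes and the shifts in $D_\L$ vanish, whereas for $\delta_1=1$ the $\Delta$-matrices enter and the linear factors in $D_\L$ carry their $\varepsilon_2\theta_{23}$ and $\varepsilon_2\theta_{24}$ shifts.

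The conceptual reason such a factorised identity must exist is the CAFCC property established in \cite{Kels:2020zjn}: the two evolutions encoded by $\Mb\LL$ and $\Lb\M$ agree on solutions of the central equation \eqref{laxeq}, forcing $\Lb\M-\Mb\LL$ to vanish there. The substantive work, and the step I expect to be the main obstacle, is to pin down the prefactor and to confirm the total cancellation with the polynomial normalisation rather than with $\det(\L)=1$. Concretely, once the normalisation denominators contributed by all four matrices are cleared, one must check that every monomial not proportional to $A2_{(\delta_1;\,\delta_2)}(\ccya;\ccf,\ccy,\cca,\ccb;\al,\gm)$ cancels and that no spurious poles remain. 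This cancellation is structurally controlled by the symmetry \eqref{refsym} — which underlies the matrix construction and the proportionality of $\L(x;x_b,x_a;\al,\hat{\bt})$ to the inverse of $\L(x;x_a,x_b;\al,\bt)$, and hence the correct composition of the matrices along the two evolution paths — together with the additional type-A symmetries \eqref{symAB} and \eqref{symA}; and it is exactly the choice of which two determinant factors enter $D_\L$, encoded by the admissible $\varepsilon_1,\varepsilon_2$, that leaves a clean multiple of the equation. I would conclude by recording $\Lb\M-\Mb\LL$ in each case as a rational scalar times $A2_{(\delta_1;\,\delta_2)}(\ccya;\ccf,\ccy,\cca,\ccb;\al,\gm)$ times a rank-one matrix, in the explicit form displayed in Propositions \ref{prop:a30} and \ref{prop:a30C}.
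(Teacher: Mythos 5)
Your proposal matches the paper's proof in approach: the paper likewise verifies the Lax equation by direct substitution of the matrices \eqref{laxmatA} and records $\Lb\M-\Mb\LL$ explicitly as a rational prefactor times $A2_{(\delta_1;\,\delta_2)}(\ccya;\ccf,\ccy,\cca,\ccb;\al,\gm)$ times a rank-one (outer-product) matrix, which vanishes on solutions of the central equation. The surrounding remarks about $\delta_2=0$, the determinant factorisation \eqref{factorA2}, and the choice of normalisation are consistent with the paper's setup, so there is nothing to correct.
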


\begin{proof}

Using the definitions \eqref{laxmatA}, the Lax equation of Proposition \ref{prop:a20} may be written as
\begin{align}
\begin{split}
    \Lb\M-\Mb\LL=&\frac{A2_{(\delta_1;\,\delta_2)}(\ccya;\ccf,\ccy,\cca,\ccb;\al,\gm)}
    {(\beta_1 - \tfrac{\varepsilon_1+1}{2}\gamma_1 + \tfrac{\varepsilon_1-1}{2} \gamma_2) (\ccya-\ccy+\delta_1 \varepsilon_2(\alpha_2-\gamma_2)) (\ccya-\ccb+\delta_1 \varepsilon_2(\alpha_1-\gamma_2))} \\[0.2cm]
    & \times
    \frac{-\left(\!\!\begin{array}{c} \delta_1 \varepsilon_2(\gamma_1-\beta_1) + \ccya \\ 1 \end{array}\!\!\right)\otimes
    \left(\!\!\begin{array}{c} 1 \\ \delta_1 \varepsilon_2 (\gamma_2-\beta_1) - \ccya  \end{array}\!\!\right)}
    {(\ccya-\ccf+\delta_1 \varepsilon_2(\gamma_1-\alpha_2)) (\ccya-\cca+\delta_1 \varepsilon_2(\gamma_1-\alpha_1))}.
\end{split}
\end{align}

\end{proof}


The type-C equation for this case is $C2_{(\delta_1;\,\delta_2;\,\delta_3)}$, given in \eqref{c2ddd}.  For \eqref{c2ddd}, the Lax matrix \eqref{Lentries} may be written in the form \eqref{Lentriesex}, where
\begin{gather}\label{a2mataC}
    \Lxa=-\left(\!\!\begin{array}{cc}
    1 & \theta_{43} \\ 0 & 1
    \end{array}\!\!\right)\!,
\quad
    \Lxb=\left(\!\!\begin{array}{cc}
    x_a + x_b & (\theta_{23}+\theta_{24})(x_b - x_a) \\ 0 & x_a + x_b
    \end{array}\!\!\right)\!,
\quad
    \Lxc=-x_ax_b\left(\!\!\begin{array}{cc}
    1 & \theta_{34} \\ 0 & 1
    \end{array}\!\!\right)\!,
\\[0.2cm]
    \Dxa=\left(\!\!\begin{array}{cc}
    0 & {\delta_3}\bigl(\theta_{34}(\theta_{13}+\theta_{14}-1) + 2(x_a \theta_{23} - x_b \theta_{24}) \bigr) \\ 0 & 0
    \end{array}\!\!\right)\!,
\end{gather}
and the components of $\Dxb$ are given by
\begin{align}
    \begin{split}
(\Dxb)_{11}=&\, \theta_{34}(2 \theta_{13})^{\delta_2} + {\delta_2}(\theta_{23}^2 + \theta_{24}^2), \\
(\Dxb)_{12}=&\, \theta_{43}\bigl(\theta_{13}^{1+\delta_2+\delta_3}+\theta_{14}^{1+\delta_2+\delta_3} + 2 {\delta_3}(\theta_{23}\theta_{24} - x_a x_b)\bigr) \\
& - {\delta_3}(\theta_{13}+\theta_{14} -1)(\theta_{23}+\theta_{24})(x_a-x_b), \\
(\Dxb)_{21}=&\, 2 {\delta_2} \theta_{34}, \\
(\Dxb)_{22}=&\, \theta_{43}(2 \theta_{14})^{\delta_2} + {\delta_2}(\theta_{23}^2+\theta_{24}^2),
    \end{split}
\end{align}
and the components of $\Dxc$ are given by
\begin{align}\label{a2matbC}
    \begin{split}
(\Dxc)_{11}=&\, \theta_{23}\theta_{24}(x_b-2\theta_{14}\theta_{34} - \theta_{23}\theta_{24})^{\delta_2} - x_a \theta_{24}(\theta_{12}+\theta_{14})^{\delta_2} + x_b \theta_{23}(\theta_{13}+\theta_{14})^{\delta_2}, \\
(\Dxc)_{12}=&\, \theta_{23}\theta_{24}\theta_{34}(2\theta_{13}\theta_{14} - \theta_{23}\theta_{24} + x_a + x_b)^{\delta_2}(\theta_{13}+\theta_{14})^{\delta_3}  \\
& + (\theta_{13}^{1+\delta_2+\delta_3}+\theta_{14}^{1+\delta_2+\delta_3})(x_a \theta_{24} - x_b \theta_{23}) + {\delta_3}\theta_{34}(1 - \theta_{13}-\theta_{14})x_a x_b, \\
(\Dxc)_{21}=&\, 2 {\delta_2} (\theta_{42} x_a + \theta_{23} x_b - \theta_{23}\theta_{24}\theta_{34}), \\
(\Dxc)_{22}=&\, \theta_{23}\theta_{24}(x_a + 2\theta_{13}\theta_{34}-\theta_{23}\theta_{24})^{\delta_2}  + x_a \theta_{24}( \theta_{13} + \theta_{14})^{\delta_2} - x_b \theta_{23}(\theta_{12}+\theta_{13})^{\delta_2}.
    \end{split}
\end{align}
The parameter $\delta$ in \eqref{Lentriesex} is identified with the parameter $\delta_1$ of $C2_{(\delta_1;\,\delta_2;\,\delta_3)}$, while the above matrices also depend explicitly on the two other parameters $\delta_2,\delta_3$, of this equation.

The determinant of \eqref{Lentriesex} with \eqref{a2mataC}--\eqref{a2matbC} is given by
\begin{align}
\begin{split}
\det(\L)=D_\L^2\Bigl((x-x_a)^2-\delta_1\theta_{23}^2\bigl(2(x+x_a)-\theta_{23}^2\bigr)^{\delta_2}\Bigr)
\Bigl((x-x_b)^2-\delta_1 \theta_{24}^2\bigl(2(x+x_b)-\theta_{24}^2\bigr)^{\delta_2}\Bigr).
\end{split}
\end{align}
For $\delta_2=0$, this factorises as in \eqref{factorA2}.

\begin{prop}\label{prop:a20C}

For the three cases $(\delta_1,\delta_2,\delta_3)=(0,0,0),(1,0,0),(1,0,1)$, the Lax matrix \eqref{Lentriesex} defined with \eqref{a2mataC}--\eqref{a2matbC}, and the normalisation
\begin{align}
    D_\L=\Bigl(\bigl(x-x_a-\varepsilon\delta_1\theta_{23}\bigr)\bigl(x-x_b+\varepsilon\delta_1\theta_{24}\bigr)\Bigr)^{-1},\qquad \varepsilon=\pm1,
\end{align}
satisfies the Lax equation \eqref{laxcomp} on solutions of $A2_{(\delta_1;\,\delta_2)}(\ccya;\ccf,\ccy,\cca,\ccb;\al,\gm)=0$.
\end{prop}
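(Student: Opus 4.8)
The plan is to verify \eqref{laxcomp} by direct computation, in complete parallel with the proof of Proposition \ref{prop:a20}, the only difference being that the constituent Lax matrices are now built from the type-C equation $C2_{(\delta_1;\,\delta_2;\,\delta_3)}$ rather than from the type-A equation itself. Following the type-C construction depicted in Figure \ref{LaxfigA2}, I would first assemble the four matrices $\LL,\M,\Mb,\Lb$ of \eqref{laxmatA} by substituting the indicated face-centered-cube variables and parameter pairs into the explicit matrix \eqref{Lentriesex} specified by \eqref{a2mataC}--\eqref{a2matbC}. Since all three parameter triples have $\delta_2=0$, the determinant splits according to \eqref{factorA2}, and because $\delta_1\in\{0,1\}$ each quadratic factor $(x-x_a)^2-\delta_1\theta_{23}^2$ and $(x-x_b)^2-\delta_1\theta_{24}^2$ factors into the linear pieces appearing in the stated $D_\L$; this is what makes the chosen normalisation rational rather than irrational.

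Next I would form the two products $\Lb\M$ and $\Mb\LL$ and subtract. The substantive claim, which is all the proof really asserts, is that the resulting $2\times2$ matrix is proportional to the single scalar $A2_{(\delta_1;\,\delta_2)}(\ccya;\ccf,\ccy,\cca,\ccb;\al,\gm)$ times a rank-one matrix, with denominator the product of linear factors inherited from the four normalisations. Concretely I expect an identity of the shape
\begin{align*}
\Lb\M-\Mb\LL=\frac{A2_{(\delta_1;\,\delta_2)}(\ccya;\ccf,\ccy,\cca,\ccb;\al,\gm)}{(\ast)}\,
\left(\!\!\begin{array}{c} u_1 \\ u_2 \end{array}\!\!\right)\otimes
\left(\!\!\begin{array}{c} v_1 \\ v_2 \end{array}\!\!\right),
\end{align*}
where $(\ast)$ is an explicit product of linear denominators in $\ccya,\ccy,\ccb,\ccf,\cca$ and the entries $u_i,v_i$ are linear in $\ccya$ and in the shift combinations $\delta_1\varepsilon(\gamma_i-\beta_1)$, exactly as in Proposition \ref{prop:a20}. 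Since the numerator contains the factor $A2_{(\delta_1;\,\delta_2)}(\ccya;\ldots)$, the right-hand side vanishes identically on solutions of \eqref{laxeq}, which is precisely \eqref{laxcomp}.

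The hard part is certifying the factorisation of the numerator: a priori each of the four entries of $\Lb\M-\Mb\LL$ is a polynomial in the determined variables $\ccya,\ccf,\ccy,\cca,\ccb$ and the six parameter components whose degree is large enough that the required cancellations are not visible term by term. I would establish divisibility by reducing $\Lb\M-\Mb\LL$ modulo the $A2$ polynomial, exploiting the affine-linearity of $A2_{(\delta_1;\,\delta_2)}$ in each corner variable to solve for one corner variable cleanly and substitute, then verifying that the remainder is identically zero and reading off the quotient as the displayed rank-one matrix. In practice this divisibility check, together with the extraction of $(\ast)$ and of $u,v$, is carried out by computer algebra separately for each of the three triples $(\delta_1,\delta_2,\delta_3)=(0,0,0),(1,0,0),(1,0,1)$ and each sign $\varepsilon=\pm1$; the fact that a common rank-one numerator carrying the factor $A2_{(\delta_1;\,\delta_2)}$ survives in every case is the content of the proposition.
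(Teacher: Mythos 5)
Your proposal is correct and follows essentially the same route as the paper: one assembles $\LL,\M,\Mb,\Lb$ from \eqref{laxmatA} with the type-C-derived matrix \eqref{a2mataC}--\eqref{a2matbC}, and verifies by direct (computer-algebra) computation that $\Lb\M-\Mb\LL$ equals $A2_{(\delta_1;\,\delta_2)}(\ccya;\ccf,\ccy,\cca,\ccb;\al,\gm)$ times an explicit rank-one matrix divided by a product of linear factors in $\ccya,\ccy,\ccb,\ccf,\cca$, exactly as the paper displays. The only minor inaccuracy is your guess that the numerator entries are linear in $\ccya$: in the paper's explicit result the off-diagonal entry carries an exponent $1+\delta_3$ and is therefore quadratic in $\ccya$ when $\delta_3=1$, but this does not affect the argument.
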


\begin{proof}

Using the definitions \eqref{laxmatA}, the Lax equation of Proposition \ref{prop:a20C} may be written as
\begin{align}
\begin{split}
    \Lb\M-\Mb\LL=&\frac{(2) A2_{(\delta_1;\,\delta_2)}(\ccya;\ccf,\ccy,\cca,\ccb;\al,\gm)}
    {\bigl(\ccya-\ccy+(\gamma_2 - \alpha_2)\delta_1\varepsilon\bigr)\bigl(\ccya-\ccb+(\gamma_2-\alpha_1)\delta_1\varepsilon\bigr)} \\[0.2cm]
    &\times\frac{\left(\!\begin{array}{cc} 
    -\delta_1\tfrac{1+\varepsilon}{2} & \bigl((\beta_1  + \tfrac{\varepsilon-1}{2}\gamma_1 - \tfrac{\varepsilon+1}{2}\gamma_2)\delta_1-\ccya\bigr)^{1+\delta_3} \\[0.15cm] 0 & \delta_1\tfrac{1-\varepsilon}{2}
    \end{array}\!\right)}
    {\bigl(\ccya-\ccf+(\alpha_2 - \gamma_1)\delta_1\varepsilon\bigr) \bigl(\ccya-\cca+(\alpha_1 - \gamma_1)\delta_1\varepsilon\bigr)}\!.
\end{split}
\end{align}

\end{proof}

Finally, for $(\delta_1,\delta_2)=(0,0)$, the type-A equation $A2_{(\delta_1;\,\delta_2)}$ also satisfies CAFCC together with the type-C equation given by \eqref{c1}.  Then using \eqref{c1} results in another Lax matrix \eqref{Lentries} for $A2_{(\delta_1=0;\,\delta_2=0)}$.

\begin{prop}\label{prop:C1}

The matrix
\begin{align}
\begin{split}
   \L=\frac{\left(\!\!\begin{array}{cc} (x-x_a)(x-x_b) & 2 \bigl(\beta_2 (x - x_a) + \beta_1 (x - x_b) + \alpha_2 (x_a+x_b-2 x)\bigr) \\ 0 & -(x-x_a)(x-x_b) \end{array}\!\!\right)}
   {(x-x_a)(x-x_b)}\!,
\end{split}
\end{align}
satisfies the Lax equation \eqref{laxcomp} on solutions of $A2_{(\delta_1=0;\,\delta_2=0)}(\ccya;\ccf,\ccy,\cca,\ccb;\al,\bt)$.

\end{prop}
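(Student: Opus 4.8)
The plan is to recognise the matrix $\L$ in the statement as the Lax matrix \eqref{Lentries} (equivalently \eqref{Lentriesex}) produced by approach~1 of Section~\ref{sec:laxmethod1} from the type-C equation \eqref{c1}, with the normalisation taken to be $D_\L=\bigl((x-x_a)(x-x_b)\bigr)^{-1}$. First I would verify this identification directly: evaluating the polynomial of \eqref{c1} at the four combinations $x_c,x_d\in\{0,1\}$ and assembling the entries according to \eqref{Lentries} should reproduce, after multiplying by $D_\L$, precisely the matrix displayed in the proposition. The chosen $D_\L$ cancels the common factor $(x-x_a)(x-x_b)$ from both diagonal entries, leaving the $\pm1$ diagonal and the single rational off-diagonal term, while the vanishing lower-left entry reflects the degenerate form of \eqref{c1}. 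Then, following the definitions \eqref{laxmatA}, the four matrices $\LL,\M,\Mb,\Lb$ are obtained by the prescribed substitutions of the vertex variables and parameters into $\L$.

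With the four matrices in hand, the next step is to compute the combination $\Lb\M-\Mb\LL$ appearing on the left of \eqref{laxcomp}. In direct analogy with Propositions~\ref{prop:a30}--\ref{prop:a20C}, I expect this combination to factor as a scalar prefactor times $A2_{(\delta_1=0;\,\delta_2=0)}(\ccya;\ccf,\ccy,\cca,\ccb;\al,\gm)$ times an explicit, essentially rank-one, $2\times2$ matrix, so that the whole expression vanishes precisely when the central type-A equation \eqref{laxeq} holds. Displaying this factored form then establishes $\Lb\M-\Mb\LL\doteq0$ on solutions, which is the assertion of the proposition.

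The routine part is the rational-function algebra of multiplying out $\Lb\M$ and $\Mb\LL$ and simplifying their difference, which is cleanly handled with computer algebra. The genuine content — and the main thing to confirm — is that $A2_{(\delta_1=0;\,\delta_2=0)}$ divides the resulting numerator, equivalently that the type-C equation \eqref{c1} and the type-A equation \eqref{a2dd} do fit together as a CAFCC system of the form \eqref{CAFCCeqsBC}. This is exactly the consistency guaranteed by the six-step algorithm of Section~\ref{sec:CAFCC}: the four corner equations centered at $\ccy,\ccb,\ccf,\cca$ that define \eqref{laxmatA}, together with the central face equation \eqref{laxeq}, form a consistent subsystem on the face-centered cube, and it is this consistency that forces the commutator to collapse onto a multiple of \eqref{laxeq}. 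Once that structural fact is in place, the factorisation is automatic and the vanishing on solutions follows.
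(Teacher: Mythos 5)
Your proposal is correct and follows essentially the same route as the paper: the paper's proof simply records the result of the computation you describe, namely that with the four matrices defined via \eqref{laxmatA} one finds $\Lb\M-\Mb\LL$ equal to $-2A2_{(\delta_1=0;\,\delta_2=0)}(\ccya;\ccf,\ccy,\cca,\ccb;\al,\bt)$ times the nilpotent matrix $\left(\begin{smallmatrix}0&1\\0&0\end{smallmatrix}\right)$ divided by $(\ccya-\ccy)(\ccya-\ccb)(\ccya-\ccf)(\ccya-\cca)$, which vanishes exactly on solutions of the central type-A equation. Your preliminary identification of $\L$ as the \eqref{Lentries} matrix built from \eqref{c1} with $D_\L=\bigl((x-x_a)(x-x_b)\bigr)^{-1}$ is also right (note in particular that $C1$ depends on $x_c,x_d$ only through $x_c+x_d$, which is why the lower-left entry vanishes), so the only content beyond the paper's one-line verification is your additional, accurate remark that the factorisation is guaranteed by the CAFCC consistency of the $A2$--$D1$--$C1$ system.
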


\begin{proof}

Using the definitions \eqref{laxmatA}, the Lax equation of Proposition \ref{prop:C1}, may be written as
\begin{align}
    \Lb\M-\Mb\LL=\frac{-2A2_{(\delta_1=0;\,\delta_2=0)}(\ccya;\ccf,\ccy,\cca,\ccb;\al,\bt)}{(\ccya-\ccy)(\ccya-\ccb)(\ccya-\ccf)(\ccya-\cca)}
    \left(\!\!\begin{array}{cc} 0 & 1 \\ 0 & 0\end{array}\!\!\right)\!.
\end{align}

\end{proof}

\subsection{Lax matrices for type-B equations (approach 2)}\label{sec:LaxB}

Following the approach of Section \ref{sec:laxmethod2}, this section will give Lax matrices \eqref{Lentries2} which are constructed only from type-C CAFCC equations, where the compatibility condition \eqref{laxcomp2} this time is satisfied on solutions of type-B equations.  Similarly to \eqref{Lentriesex}, the Lax matrix \eqref{Lentries2} will be given here in the form
\begin{align}\label{Lentriesex2}
    L(x;x_a,x_c;\al,\bt)=D_\L\bigl(x^2\Lxa+x\Lxb+\Lxc+\delta_1(x^2\Dxa+x\Dxb+\Dxc)\bigr),
\end{align}
where $\delta_1$ is a parameter appearing for type-C equations, and each of the $\Lxa(x_a,x_c;\al,\bt)$, $\Lxb(x_a,x_c;\al,\bt)$, $\Lxc(x_a,x_c;\al,\bt)$, $\Dxa(x_a,x_c;\al,\bt)$, $\Dxb(x_a,x_c;\al,\bt)$, $\Dxc(x_a,x_c;\al,\bt)$, are $2\times2$ matrices.  The Lax matrix \eqref{Lentriesex2} will be given by specifying each of the $\Lxa$, $\Lxb$, $\Lxc$, $\Dxa$, $\Dxb$, $\Dxc$, as well as a valid normalisation factor $D_\L$.

\subsubsection{\texorpdfstring{$B3_{(\delta_1;\,\delta_2;\,\delta_3)}$}{B3(delta1;delta2;delta3)}}

On the face-centered cube, a set of CAFCC equations in the form \eqref{CAFCCeqsBC} is collectively given by the type-A equation $A3_{(\delta)}$ in \eqref{a3d}, the type-B equation $B3_{(\delta_1;\,\delta_2;\,\delta_3)}$ in \eqref{b3ddd}, and the type-C equation $C3_{(\delta_1;\,\delta_2;\,\delta_3)}$ in \eqref{c3ddd}.  For the type-B and type-C equations, the values of $(\delta_1,\delta_2,\delta_3)$ can be $(0,0,0)$, $(1,0,0)$, $(\tfrac{1}{2},0,\tfrac{1}{2})$, $(\tfrac{1}{2},\tfrac{1}{2},0)$, and then for the type-A equation $\delta=2\delta_2$.

By the method of Section \ref{sec:laxmethod2}, the type-C equation \eqref{c3ddd} can be used to obtain a Lax matrix for the type-B equation \eqref{b3ddd}.  For equation \eqref{c3ddd}, the Lax matrix \eqref{Lentries2} may be written in the form \eqref{Lentriesex2}, where
\begin{gather}
\label{b3mata}
    \Lxa= \alpha_2\left(\!\!\begin{array}{cc}
    -\beta_1 & \beta_2 x_c \\ 0 & 0 
    \end{array}\!\!\right)\!,
\quad
    \Lxb=-\left(\!\!\begin{array}{cc}
    -\alpha_2^2 x_a & \beta_1 \beta_2 x_a x_c \\[0.1cm] \beta_1 \beta_2 & -\alpha_2^2 x_c
    \end{array}\!\!\right)\!,
\quad
    \Lxc=\alpha_2 x_a\left(\!\!\begin{array}{cc}
    0 & 0 \\[0.1cm] \beta_2 & -\beta_1 x_c
    \end{array}\!\!\right)\!,
\\[0.0cm]
    \Dxa=\frac{2 {\delta_3}}{\alpha_1}\left(\!\!\begin{array}{cc}
    0 & (\beta_1^2 - \alpha_2^2)\beta_2 x_a \\[0.1cm] 0 & (\beta_2^2 - \alpha_2^2)\beta_1
    \end{array}\!\!\right)\!,
\quad
    \Dxb=\frac{\alpha_2(\beta_1^2-\beta_2^2)}{\alpha_1}\left(\!\!\begin{array}{cc}
    2 {\delta_2} x_c & \tfrac{\alpha_1^2}{\beta_1\beta_2} \\ 0 & 2 {\delta_3}x_a
    \end{array}\!\!\right)\!,
\\[0.3cm]
\label{b3matb}
    \Dxc=\left(\!\!\begin{array}{cc}
    -{\delta_2}\tfrac{(\alpha_2^2 - \beta_2^2)(\alpha_1(\alpha_2^2 - \beta_1^2) + 2 \alpha_2 \beta_1^2 x_a x_c)}{\alpha_1 \alpha_2 \beta_1} & 
    (\beta_2-\tfrac{\alpha_2^2}{\beta_2})\bigl(\alpha_1 x_a - {\delta_2}(\alpha_2 - \tfrac{\beta_1^2}{\alpha_2})x_c\bigr) \\
    2 {\delta_2}(\beta_1^2 - \alpha_2^2)x_c \tfrac{\beta_2}{\alpha_1} & \alpha_1(\beta_1-\tfrac{\alpha_2^2}{\beta_1})
    \end{array}\!\!\right)\!.
\end{gather}

Note that the above matrices have an additional dependence on the parameters $\delta_2,\delta_3$ of \eqref{c3ddd}.

The determinant of \eqref{Lentriesex2} with \eqref{b3mata}--\eqref{b3matb}, is given by
\begin{align}\label{detb3}
\begin{split}
\det(\L)=D_\L^2(\alpha_2^2-\beta_2^2)\bigl((\alpha_2 x - \beta_1 x_a)(\alpha_2 x_a - \beta_1 x) - \delta_2 \tfrac{\alpha_2 \beta_1}{2}(\tfrac{\alpha_2}{\beta_1}-\tfrac{\beta_1}{\alpha_2})^2\bigr)\phantom{.} \\
\times\bigl(x x_c - \delta_1 \tfrac{\alpha_1}{\beta_1} - \delta_2 \tfrac{\beta_1}{\alpha_1} x_c^2 - \delta_3 \tfrac{\beta_1}{\alpha_1} x^2\bigr).
\end{split}
\end{align}

For a variable $x_i$, let ${x_i}^-$ and ${x_i}^+$ respectively denote
\begin{align}
    {x_i}^-=x_i-(x_i^2-1)^{\frac{1}{2}},\qquad {x_i}^+=x_i+(x_i^2-1)^{\frac{1}{2}}.
\end{align}
To obtain a valid normalisation factor $D_\L$, note that for $(\delta_1,\delta_2,\delta_3)=(\frac{1}{2},0,\frac{1}{2})$, the last factor in the determinant \eqref{detb3} may be written as
\begin{align}
    -2\alpha_1\beta_1(x x_c-\tfrac{1}{2}\tfrac{\alpha_1}{\beta_1}-\tfrac{1}{2}\tfrac{\beta_1}{\alpha_1}x^2)=
    \bigl(\beta_1 x - \alpha_1x_c^-\bigr) \bigl(\beta_1 x - \alpha_1x_c^+\bigr).
\end{align}
One of the two factors on the right hand side will be used in the normalisation factor $D_\L$ to give compatible Lax matrices as follows.

\begin{prop}\label{prop:b30}

For the three cases $(\delta_1,\delta_2,\delta_3)=(0,0,0),(1,0,0),(\frac{1}{2},0,\frac{1}{2})$, the Lax matrix \eqref{Lentriesex2} defined with \eqref{b3mata}--\eqref{b3matb}, and the normalisation
\begin{align}\label{normb3}
    D_\L=(\beta_1 x - \alpha_1\overline{x}_c)^{-2\delta_3},\qquad \overline{x}_c=x_c^+\mbox{ or }x_c^-,
\end{align}
satisfies the Lax equation \eqref{laxcomp2} on solutions of $B3_{(\delta_1;\,\delta_2;\,\delta_3)}(\ccyc;\ccf,\cce,\ccy,\ccd;\gm,\bt)=0$.
\end{prop}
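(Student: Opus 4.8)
The plan is to follow the same direct computational strategy used in Propositions~\ref{prop:a30}--\ref{prop:C1}: substitute the explicit matrices into the left-hand side of the Lax equation~\eqref{laxcomp2} and exhibit the relevant type-B polynomial as an overall factor. Concretely, I would take the four matrices $\LL,\M,\Mb,\Lb$ from their definitions in~\eqref{laxmatb}, where each instance of $\L$ is the matrix~\eqref{Lentries2} built from the type-C polynomial $C3_{(\delta_1;\,\delta_2;\,\delta_3)}$ and given componentwise by~\eqref{b3mata}--\eqref{b3matb}, and where $\Mb$ and $\Lb$ are the inverse matrices $\overline{\L}=\L^{-1}$ of~\eqref{Lentries3}. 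Forming the combination $\Lb\M-\Mb\LL$ then reduces to a purely algebraic manipulation in the variables and parameters attached to the vertices and edges adjacent to $\ccyc$ in Figure~\ref{LaxfigC}.

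The first point to address is the appearance of the inverses $\overline{\L}$. Each such inverse carries a factor $1/\det(\L)$, with $\det(\L)$ given by~\eqref{detb3}, so I would clear denominators and track the three irreducible factors of~\eqref{detb3} separately. The normalisation~\eqref{normb3} is tailored to the last of these factors: for $\delta_3=\tfrac12$ that factor is proportional to $(\beta_1 x-\alpha_1 x_c^-)(\beta_1 x-\alpha_1 x_c^+)$, and taking $D_\L=(\beta_1 x-\alpha_1\overline{x}_c)^{-1}$ cancels exactly one of these two factors, while for $\delta_3=0$ the normalisation is trivial. I would then check that, once $D_\L$ is inserted into each of the four matrices, the determinant contributions from the $\overline{\L}$ factors combine with the numerators so that no spurious poles remain.

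With the denominators controlled, the core step is to show that the remaining matrix is proportional to $B3_{(\delta_1;\,\delta_2;\,\delta_3)}(\ccyc;\ccf,\cce,\ccy,\ccd;\gm,\bt)$ times a fixed $2\times2$ matrix, exactly as in the earlier proofs. This requires expanding all entries and collecting terms, using the edge identifications of~\eqref{4legc} and Figure~\ref{3fig4quad} that relate $C3$ to $B3$, until the single factor $B3=0$ becomes visible. I expect this collection to be the main obstacle: unlike the type-A computations of Section~\ref{sec:LaxA}, here the two matrices composing each evolution are genuinely distinct (one $\L$ and one $\overline{\L}$), because $C3$ does not obey the symmetry~\eqref{symAB}, so the cancellations are far less automatic and the intermediate expressions much larger. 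The radicals $x_c^{\pm}$ in the normalisation add a further subtlety, since one must verify that the irrational parts cancel consistently for each choice $\overline{x}_c=x_c^{\pm}$ and for each of the three parameter specialisations $(\delta_1,\delta_2,\delta_3)$. In practice I would perform this factorisation with computer algebra, treating $(x_c^2-1)^{1/2}$ as a formal symbol subject to its defining relation, and confirm that the coefficient of $B3$ matches the prefactor and matrix to be reported.
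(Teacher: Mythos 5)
Your proposal matches the paper's own proof in both strategy and substance: the paper verifies the proposition by direct computation, splitting into the $\delta_3=0$ cases (where $D_\L=1$) and the $(\tfrac12,0,\tfrac12)$ case, and in each case exhibiting $\Lb\M-\Mb\LL$ explicitly as $B3_{(\delta_1;\,\delta_2;\,\delta_3)}(\ccyc;\ccf,\cce,\ccy,\ccd;\gm,\bt)$ times an explicit $2\times2$ matrix over a product of denominator factors, with the radical $\znb=z_n^\pm$ tracked according to the choice of $\overline{x}_c=x_c^\pm$. The points you flag as delicate (the $1/\det(\L)$ factors from $\overline{\L}$, the role of the normalisation in cancelling one factor of the determinant, and the consistent treatment of $x_c^\pm$) are exactly the ones the paper's displayed formulas resolve, so your plan is correct and essentially identical to the paper's argument.
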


\begin{proof}

First consider the case of $\delta_3=0$, for which the normalisation in \eqref{normb3} is $D_\L=1$.  Then using the definitions \eqref{laxmatb}, the Lax equation of Proposition \ref{prop:b30} for the two cases $(\delta_1,\delta_2,\delta_3)=(0,0,0),(1,0,0)$, may be written as
\begin{align}
    \Lb\M-\Mb\LL=\frac{B3_{(\delta_1;\,\delta_2;\,\delta_3)}(\ccyc;\ccf,\cce,\ccy,\ccd;\gm,\bt)}{\ccyc(\alpha_2^2 - \gamma_1^2)(\ccf - \delta_1\tfrac{\beta_1}{\gamma_2\ccyc})(\cce - \delta_1\tfrac{\beta_2}{\gamma_2\ccyc})}
    \left(\!\begin{array}{cc} \gamma_2 \ccyc \\ \alpha_2 \end{array}\!\right)\otimes
    \left(\!\begin{array}{cc} -\gamma_1 \\ \alpha_2\ccyc \end{array}\!\right)\!.
\end{align}

Similarly, using the definitions \eqref{laxmatb} the Lax equation of Proposition \ref{prop:b30} for the case $(\delta_1,\delta_2,\delta_3)=(\frac{1}{2},0,\frac{1}{2})$ may be written as
\begin{align}\label{dum1}
\begin{split}
    \Lb\M&-\Mb\LL= \\
    &\frac{\beta_1 \beta_2 \alpha_2 \gamma_2 \ccyc B3_{(\delta_1;\,\delta_2;\,\delta_3)}(\ccyc;\ccf,\cce,\ccy,\ccd;\gm,\bt) E}
    {(\alpha_2^2 - \gamma_1^2)(\gamma_1 \ccy - \beta_1 \overline{\ccyc})(\gamma_1 \ccd - \beta_2 \overline{\ccyc})(\beta_1^2 + \gamma_2^2 \ccf^2 - 2 \beta_1 \gamma_2 \ccf \ccyc)(\beta_2^2 + \gamma_2^2 \cce^2 - 2 \beta_2 \gamma_2 \cce \ccyc)},
\end{split}
\end{align}
where the entries of the matrix $E$ are given by
\begin{align}\label{dum2}
    \hspace{-0.13cm}\begin{split}
E_{11}=&\, \tfrac{2 \gamma_1}{\alpha_2}\Bigl(\! \bigl(\znb(\alpha_2^2-\gamma_2^2) -2 \alpha_2^2 \ccyc\bigr) (\ccf \cce \gamma_2^2 - \beta_1 \beta_2) + \bigl(\gamma_2(\beta_2 \ccf+ \beta_1 \cce) - 2 \beta_1 \beta_2 \ccyc\bigr)(\gamma_2^2 \znb^2 + \alpha_2^2)\! \Bigr), \\
E_{12}=&\, (\alpha_2^2+\tfrac{\gamma_1^2 \gamma_2^2}{\alpha_2^2})(\gamma_2 \ccf -\beta_1 \znb)(\gamma_2 \cce-\beta_2 \znb) + \gamma_1^2 \bigl((\beta_1 - 2 \gamma_2 \ccf \ccyc)(\beta_2 - 2 \gamma_2 \cce \ccyc) -2 \ccf \cce \gamma_2^2\bigr) \\
& + \gamma_2 \znb \bigl( (\gamma_1^2 - \gamma_2^2 \znb^2)(\beta_2 \ccf+ \beta_1 \cce) + \gamma_2(\gamma_2^2-\gamma_1^2)\ccf \cce \znb + \beta_1 \beta_2 \gamma_2 \znb^3\bigr), \\
E_{21}=&\, 4 \gamma_1 \gamma_2\bigl(\beta_1 \beta_2 + \gamma_2 \ccf(\beta_2 \znb-\gamma_2 \cce) + \beta_1 \znb (\gamma_2 \cce - 2 \beta_2 \ccyc)\bigr), \\
E_{22}=&\, \tfrac{2\gamma_2}{ \alpha_2} \Bigl(\! \bigl(\znb(\alpha_2^2-\gamma_1^2) + 2 \gamma_1^2 \ccyc\bigr)(\ccf \cce \gamma_2^2 - \beta_1 \beta_2) - \bigl(\gamma_2(\beta_2 \ccf +\beta_1 \cce) - 2 \beta_1 \beta_2 \ccyc\bigr)(\alpha_2^2 \znb^2 + \gamma_1^2)\! \Bigr),
    \end{split}
\end{align}
where $\znb=z_n^\pm$, according to the choice of $\overline{x}_c=x_c^\pm$ in \eqref{normb3}.

\end{proof}


\subsubsection{\texorpdfstring{$B2_{(\delta_1;\,\delta_2;\,\delta_3)}$}{B2(delta1;delta2;delta3)}}

For convenience, the definitions \eqref{thtdef1}, \eqref{thtdef2}, will be used in the following.

On the face-centered cube, a set of CAFCC equations in the form \eqref{CAFCCeqsBC} is collectively given by the type-A equation $A2_{(\delta_1;\,\delta_2)}$ in \eqref{a2dd}, the type-B equation $B2_{(\delta_1;\,\delta_2;\,\delta_3)}$ in \eqref{b2ddd}, and the type-C equation $C2_{(\delta_1;\,\delta_2;\,\delta_3)}$ in \eqref{c2ddd}. The values of $(\delta_1,\delta_2,\delta_3)$ can be $(0,0,0)$, $(1,0,0)$, $(1,0,1)$, $(1,1,0)$.

For the type-C equation \eqref{c2ddd}, the Lax matrix \eqref{Lentries2}, may be written in the form \eqref{Lentriesex2}, where
\begin{gather}\label{b2mata}
    \Lxa=\left(\!\!\begin{array}{cc}
    1 & \theta_{34}-x_c \\ 0 & 0 
    \end{array}\!\!\right)\!,
\;\;
    \Lxb=-\left(\!\!\begin{array}{cc}
    x_a & x_a (\theta_{23} + \theta_{24} - x_c) \\ -1 & \theta_{23} + \theta_{24} + x_c
    \end{array}\!\!\right)\!,
\;\;
    \Lxc=\left(\!\!\begin{array}{cc}
    0 & 0 \\ -x_a & x_a (x_c + \theta_{34})
    \end{array}\!\!\right)\!,
\\[0.3cm]
    \Dxa=\left(\!\!\begin{array}{cc}
    -2({\delta_2}+{\delta_3}) & 2 {\delta_2}(x_c-\theta_{34}) + {\delta_3}\bigl(2(x_c - x_a \theta_{23}) -\theta_{34}(\theta_{13}+\theta_{14}+1)\bigr) \\ 0 & -2 {\delta_3} \theta_{24}
    \end{array}\!\!\right)\!,
\end{gather}
and the entries of $\Dxb$ are given by
\begin{align}
    \begin{split}
(\Dxb)_{11}=&\, \theta_{34}\bigl(2(x_c-\theta_{14})\bigr)^{\delta_2}(-1)^{\delta_3} + ({\delta_2}+{\delta_3})2 x_a + {\delta_2}(\theta_{23}^2+\theta_{24}^2), \\
(\Dxb)_{12}=&\, (\theta_{34} + {\delta_3} x_a)\bigl(x_c(\theta_{31}+\theta_{41})^{\delta_2}(-1)^{\delta_3} +2 {\delta_3} \theta_{12}^2 + (\theta_{31}^{1+{\delta_2}}+\theta_{41}^{1+{\delta_2}}) (\theta_{32}+\theta_{42})^{\delta_3}\bigr) \\
   &+ 2{\delta_2}\bigl(x_a(\theta_{24} - x_c) + \theta_{23}(x_a - x_c \theta_{24}) \bigr) -  {\delta_3}x_a\bigl(x_c + 2 \theta_{12}^2 - (\theta_{23} + \theta_{24})\bigr) , \\
(\Dxb)_{21}=&\, -2({\delta_2}+{\delta_3}), \\
(\Dxb)_{22}=&\, 2({\delta_2}+{\delta_3})x_c + 2{\delta_3} x_a \theta_{34} + (2 {\delta_2}+{\delta_3})(\theta_{23}+\theta_{24})(\theta_{13}+\theta_{14}+1)^{\delta_3},
    \end{split}
\end{align}
and the entries of $\Dxc$ are given by
\begin{align}\label{b2matb}
    \begin{split}
(\Dxc)_{11}=&\, \theta_{42}\Bigl(\theta_{23}\bigl(\theta_{43}(\theta_{12}+\theta_{13}-2 x_c) +\theta_{23}^2 -x_a\bigr)^{\delta_2} + x_a(2 x_c -\theta_{13}-\theta_{14})^{\delta_2}\Bigr)(-1)^{\delta_3}, \\
(\Dxc)_{12}=&\, \theta_{42}\biggl(\theta_{23}\bigl(\theta_{43}( \theta_{23} \theta_{24} -2 \theta_{13} \theta_{14})^{\delta_2}(\theta_{31}+\theta_{41})^{\delta_3} - x_c(2 \theta_{14}\theta_{34} + \theta_{23}\theta_{24})^{\delta_2} (-1)^{\delta_3}\bigr) \\[-0.15cm]
 & + x_a\Bigl((\theta_{31}+\theta_{41})^{1+{\delta_2}+{\delta_3}} +{\delta_2}\theta_{23}\theta_{34} - ({\delta_2} +{\delta_3}) 2 \theta_{13}\theta_{14} + x_c(\theta_{21}+\theta_{41})^{\delta_2} (-1)^{\delta_3}\Bigr)\! \biggr), \\[-0.1cm]
(\Dxc)_{21}=&\, \theta_{32}(2 x_c -\theta_{12}-\theta_{13})^{\delta_2}(-1)^{\delta_3} + ({\delta_2}+{\delta_3})2 x_a, \\
(\Dxc)_{22}=&\, \theta_{23}\bigl(-(\theta_{31}^{1+{\delta_2}+{\delta_3}}+\theta_{41}^{1+{\delta_2}+{\delta_3}})  - x_c(2 \theta_{41}-\theta_{23})^{\delta_2}(-1)^{\delta_3} + {\delta_2} \theta_{24}\theta_{34} \bigr) \\
& - (2 {\delta_2} + {\delta_3})(x_c+\theta_{34})x_a - {\delta_3}\bigl(x_c + \theta_{34}(\theta_{13}+\theta_{14})\bigr)x_a.
    \end{split}
\end{align}
Note that the above matrices have an additional dependence on the parameters $\delta_2,\delta_3$ of \eqref{c2ddd}.

The determinant of \eqref{Lentriesex2} with \eqref{b2mata}--\eqref{b2matb} is given by
\begin{align}
\begin{split}
\det(\L)=2 D_\L^2\theta_{24}\bigl(x(2 \theta_{13}-x)^{\delta_3}+\delta_1 x_c(2 \theta_{13}-x_c)^{\delta_2}-\delta_1 \theta_{13}^{1+\delta_2+\delta_3}\bigr)\phantom{.} \\
\times\bigl(\delta_1 \theta_{23}^2 (2(x+x_a)-\theta_{23}^2)^{\delta_2}-(x-x_a)^2\bigr).
\end{split}
\end{align}
To obtain a valid normalisation factor $D_\L$, note that for $(\delta_1,\delta_2,\delta_3)=(1,0,1)$, the last factor on the first line of the above determinant may be written as
\begin{align}
    -\bigl(x(2\theta_{13}-x)+x_c-\theta_{13}^2\bigr)=(x+\beta_1-\alpha_1+x_c^{\frac{1}{2}})(x+\beta_1-\alpha_1-x_c^{\frac{1}{2}}).
\end{align}
One of the two factors on the right hand side will be used in the normalisation factor $D_\L$, to give compatible Lax matrices as follows.

\begin{prop}\label{prop:b20}

For the three cases $(\delta_1,\delta_2,\delta_3)=(0,0,0),(1,0,0),(1,0,1)$, the Lax matrix \eqref{Lentriesex2} defined with \eqref{b2mata}--\eqref{b2matb}, and the normalisation
\begin{align}
    D_\L=\bigl(x+\beta_1-\alpha_1 + \varepsilon x_c^{\frac{1}{2}}\bigr)^{-\delta_3},\qquad\varepsilon=\pm1,
\end{align}
satisfies the Lax equation \eqref{laxcomp2} on solutions of $B2_{(\delta_1;\,\delta_2;\,\delta_3)}(\ccyc;\ccf,\cce,\ccy,\ccd;\gm,\bt)=0$.

\end{prop}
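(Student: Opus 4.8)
The plan is to verify the compatibility condition \eqref{laxcomp2} by direct substitution, exactly paralleling the proof of Proposition \ref{prop:b30}. Using the definitions \eqref{laxmatb}, each of the four matrices $\LL$, $\M$, $\Mb$, $\Lb$ is an instance of the type-C Lax matrix \eqref{Lentriesex2} built from \eqref{b2mata}--\eqref{b2matb}, with the face variable, the corner variables and the parameters specialised to the arguments listed in \eqref{laxmatb}; the two matrices $\Mb$ and $\Lb$ are the inverses \eqref{Lentries3} of the corresponding forward matrices, and hence each carries a factor $1/\det(\L)$. First I would form the combination $\Lb\M-\Mb\LL$, clear the common denominators coming from these two determinants (whose explicit factorisation is the one recorded just above the statement), and expand the numerator as a polynomial in the face variable $\ccyc$ and the corner variables $\ccf,\cce,\ccy,\ccd$. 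The goal is to show that this numerator is divisible by the type-B polynomial $B2_{(\delta_1;\,\delta_2;\,\delta_3)}(\ccyc;\ccf,\cce,\ccy,\ccd;\gm,\bt)$, so that the whole expression vanishes precisely on its zero locus.

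For the two cases $(\delta_1,\delta_2,\delta_3)=(0,0,0)$ and $(1,0,0)$ the normalisation is $D_\L=1$, so no irrational factors appear and the computation is purely rational. Here I would expand $\Lb\M-\Mb\LL$ entrywise, factor each of the four resulting entries, and extract the common factor $B2_{(\delta_1;\,\delta_2;\,\delta_3)}(\ccyc;\ccf,\cce,\ccy,\ccd;\gm,\bt)$. I expect the quotient matrix to be rank one (a single outer product, as in Proposition \ref{prop:b30}), the remaining denominator being a product of linear factors in $\ccyc,\ccf,\cce,\ccy,\ccd$ supplied by the determinants of the two inverse matrices $\Mb$ and $\Lb$. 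Writing the result in the form (the $B2$ polynomial) $\times$ (explicit matrix) $/$ (product of these linear factors) then confirms the claim for these two cases.

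The main obstacle is the case $(\delta_1,\delta_2,\delta_3)=(1,0,1)$, where $\delta_3=1$ forces the normalisation $D_\L=(x+\beta_1-\alpha_1+\varepsilon x_c^{1/2})^{-1}$ to carry the square root $x_c^{1/2}$. As in the $\delta_3=\tfrac12$ subcase of Proposition \ref{prop:b30}, each of the four matrices acquires such a root evaluated at its own specialised argument, so the raw combination $\Lb\M-\Mb\LL$ is a priori an algebraic, not rational, expression in the lattice variables. The delicate step is to show that these roots recombine: after substituting $x_c^{1/2}$ consistently for either choice $\varepsilon=\pm1$ and rationalising, the surds must cancel so that the numerator is once more a polynomial multiple of $B2_{(1;\,0;\,1)}$. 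I anticipate that, exactly as the explicit matrix $E$ of \eqref{dum1}--\eqref{dum2} absorbed the analogous roots $z_n^\pm$ in Proposition \ref{prop:b30}, here the quotient matrix will again be a full $2\times2$ matrix given explicitly in terms of $x_c^{1/2}$, with the overall scalar proportional to $B2_{(1;\,0;\,1)}$ divided by a product of the linear and square-root factors from the two inverse matrices. Verifying this cancellation is the computational heart of the proof, and is most cleanly carried out by treating $x_c^{1/2}$ as a formal variable subject to $(x_c^{1/2})^2=x_c$ and reducing the numerator modulo this relation.
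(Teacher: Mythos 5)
Your proposal is correct and takes essentially the same approach as the paper: the proof there is exactly the direct computation you describe, displaying $\Lb\M-\Mb\LL$ as the polynomial $B2_{(\delta_1;\,\delta_2;\,\delta_3)}(\ccyc;\ccf,\cce,\ccy,\ccd;\gm,\bt)$ times an explicit matrix over a product of linear (and, for $\delta_3=1$, square-root) factors, with the surd handled by setting $\znb=-\varepsilon\ccyc^{\frac{1}{2}}$ just as you suggest. The only minor difference is that the resulting quotient matrix turns out to be a rank-one outer product uniformly in all three cases, including $(1,0,1)$, rather than the full $2\times2$ matrix you anticipated by analogy with the matrix $E$ in Proposition \ref{prop:b30}.
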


\begin{proof}

Using the definitions \eqref{laxmatb}, the Lax equation of Proposition \ref{prop:b20} may be written as
\begin{align}
\begin{split}
    \Lb\M-\Mb\LL=&\frac{B2_{(\delta_1;\,\delta_2;\,\delta_3)}(\ccyc;\ccf,\cce,\ccy,\ccd;\gm,\bt)}
    {\bigl(\ccf+\delta_1(\gamma_2-\beta_1+\znb^{\delta_3}\ccyc^{1-\delta_3})\bigr)\bigl(\cce+\delta_1(\gamma_2-\beta_2+\znb^{\delta_3}\ccyc^{1-\delta_3})\bigr)} \\[0.2cm]
    &\times\frac{\left(\!\!\begin{array}{c} 
    (\gamma_2-\alpha_2+\znb^{\delta_3}\ccyc^{1-\delta_3})^{1+\delta_3} \\ 1
    \end{array}\!\!\right)\otimes
    \left(\!\!\begin{array}{c} 
    1 \\ -(\alpha_2 - \gamma_1 +\znb^{\delta_3}\ccyc^{1-\delta_3})^{1+\delta_3} 
    \end{array}\!\!\right)}
    {2(\alpha_2-\gamma_1)(\ccy+\gamma_1-\beta_1 - \znb)^{\delta_3}(\ccd+\gamma_1-\beta_2 -  \znb)^{\delta_3}},
\end{split}
\end{align}
where $\znb=-\varepsilon\ccyc^{\frac{1}{2}}$.

\end{proof}

\begin{remark}
Note that for $(\delta_1,\delta_2,\delta_3)=(0,0,0)$, the type-B equations here and in the preceding case are equivalent to a CAC quad equation known as $D4$ \cite{BollThesis}.  Thus the equations $B3_{(\delta_1;\delta_2;\delta_3)}$, and $B2_{(\delta_1;\delta_2;\delta_3)}$, could be regarded as two different extensions of $D4$, which satisfy CAFCC rather than CAC.
\end{remark}

\subsubsection{\texorpdfstring{$D1$}{D1}}

On the face-centered cube, a set of CAFCC equations in the form \eqref{CAFCCeqsBC} is collectively given by the type-A equation $A2_{(\delta_1=0;\,\delta_2=0)}$ in \eqref{a2dd}, the type-B equation $D1$ in \eqref{d1}, and the type-C equation $C1$ in \eqref{c1}.  Using the method of Section \ref{sec:laxmethod2}, the type-C equation \eqref{c1} can be found to give a Lax matrix for the type-B equation \eqref{d1}, as follows.

\begin{prop}\label{prop:D1}

The matrix
\begin{align}
\begin{split}
   \L=\frac{\left(\!\!\begin{array}{cc} x(x-x_a) &  (x-x_a)x x_c - 2 \beta_2 x_a + 2 (\beta_1+\beta_2)x +2 \alpha_2(x_a-2x) \\ 
   x-x_a & (x-x_a)x_c +2(\beta_1-\alpha_2) \end{array}\!\!\right)}
   {(x-x_a)}\!,
\end{split}
\end{align}
satisfies the Lax equation \eqref{laxcomp2} on solutions of $D1(\ccf,\cce,\ccy,\ccd)=0$.

\end{prop}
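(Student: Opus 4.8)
The plan is to follow the approach of Section~\ref{sec:laxmethod2} and verify the compatibility condition \eqref{laxcomp2} by direct computation, in close parallel with the proof of Proposition~\ref{prop:C1}. First I would identify the matrix $\L$ displayed in the statement as the Lax matrix \eqref{Lentries2} obtained from the type-C equation $C1$ in \eqref{c1} by solving for $x_b$ as in \eqref{xbsol}, with the normalisation taken to be $D_\L=(x-x_a)^{-1}$, i.e.\ the denominator shown. Since $D1$ appears as the type-B partner of $A2_{(\delta_1=0;\,\delta_2=0)}$ and $C1$ in the CAFCC triple \eqref{CAFCCeqsBC}, the construction of Section~\ref{sec:laxmethod2} guarantees that this $\L$ is the matrix to test and that compatibility must hold on solutions of the central type-B equation $D1(\ccf,\cce,\ccy,\ccd)=0$.

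Next I would assemble the four transition matrices $\LL,\M,\Mb,\Lb$ from the definitions \eqref{laxmatb}, substituting the variables $\ccd,\cce,\ccyc,\ccy,\ccf$ together with the indicated parameter pairs into $\L(x;x_a,x_c;\al,\bt)$ and into its inverse $\overline{\L}=\L^{-1}$ of \eqref{Lentries3}. The first genuinely new step relative to the type-A case is the computation of the two inverses entering $\Mb$ and $\Lb$; for the present equation the entries of $\L$ are of low degree, so these inverses are explicit and their determinant denominators factor in a controllable way.

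I would then form the combination $\Lb\M-\Mb\LL$ and simplify, clearing the common determinant denominators. The aim is to show that the whole matrix carries the type-B polynomial $D1(\ccf,\cce,\ccy,\ccd)$ as an overall factor, yielding an identity of the form
\begin{align}
\Lb\M-\Mb\LL=\frac{(\text{const})\,D1(\ccf,\cce,\ccy,\ccd)}{(\text{product of linear denominators})}\,M_0
\end{align}
for a fixed numerical matrix $M_0$, exactly as in Proposition~\ref{prop:C1}. Once this factored form is exhibited, the relation $\Lb\M-\Mb\LL\doteq0$ on solutions of $D1=0$ follows at once, which is the required condition \eqref{laxcomp2}.

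The step I expect to be the main obstacle is purely computational: carrying out the product $\Lb\M-\Mb\LL$ with the inverse matrices present and then recognizing the common factor $D1$. Unlike the type-A construction of Section~\ref{sec:laxmethod1}, the type-C equations do not satisfy the symmetry \eqref{symAB}, so $\Lb$ and $\Mb$ are genuine inverses rather than simple relabelings, which makes the bookkeeping heavier. The cancellation of all terms other than the multiple of $D1$ is guaranteed a priori by CAFCC for the triple $\{A2_{(\delta_1=0;\,\delta_2=0)},D1,C1\}$; the practical work is to confirm this cancellation and to read off $M_0$ and the explicit denominators.
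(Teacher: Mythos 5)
Your proposal is correct and follows essentially the same route as the paper: the paper's proof simply exhibits $\Lb\M-\Mb\LL$ as $D1(\ccf,\cce,\ccy,\ccd)/(2(\gamma_1-\alpha_2))$ times an explicit residual matrix, exactly the factored form you aim for. The only small inaccuracy is your expectation of a \emph{fixed numerical} matrix $M_0$ as in Proposition~\ref{prop:C1}; in the actual computation the residual matrix depends on $\ccyc$, $\ccy$, $\ccf$, $\ccd$, $\cce$ and the $\gamma_i$, but this does not affect the argument since only the overall factor of $D1$ is needed.
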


\begin{proof}

Using the definitions \eqref{Lentries2}, the Lax equation for Proposition \ref{prop:D1} may be written as
\begin{align}
    \Lb\M-\Mb\LL=\frac{D1(\ccf,\cce,\ccy,\ccd)}{2(\gamma_1-\alpha_2)}
    \left(\!\!\begin{array}{cc} -\ccyc & \frac{8(\gamma_1-\gamma_2)^2}{(\ccy-\ccf)(\ccd-\cce)} -\ccyc^2 \\ 1 & \ccyc\end{array}\!\!\right)\!.
\end{align}

\end{proof}

\begin{remark}
Note that this gives a Lax matrix for $D1$, which is also equivalent to a regular quad equation which satisfies CAC \cite{BollThesis}.
\end{remark}


\begin{appendices}
\numberwithin{equation}{section}

\section{List of CAFCC equations}\label{app:equations}

\subsection{\texorpdfstring{$A3_{(\delta)}$}{A3(delta)}, \texorpdfstring{$B3_{(\delta_1;\,\delta_2;\,\delta_3)}$}{B3(delta1;delta2;delta3)}, \texorpdfstring{$C3_{(\delta_1;\,\delta_2;\,\delta_3)}$}{C3(delta1;delta2;delta3)}}\label{app:b3}

\begin{align}\label{a3d}
\begin{split}
A3&_{(\delta)}(x;x_a,x_b,x_c,x_d;\al,\bt)= \\ 
& x\bigl((\tfrac{\beta_1}{\beta_2}-\tfrac{\beta_2}{\beta_1})(x_ax_b-x_cx_d) + (\tfrac{\alpha_1}{\alpha_2}-\tfrac{\alpha_2}{\alpha_1})(x_ax_c-x_bx_d) - (\tfrac{\alpha_1\alpha_2}{\beta_1\beta_2}-\tfrac{\beta_1\beta_2}{\alpha_1\alpha_2})(x_ax_d-x_bx_c)\bigr)  \\
&+(\tfrac{\alpha_2}{\beta_1}-\tfrac{\beta_1}{\alpha_2})(x_a x^2 - x_bx_cx_d) - (\tfrac{\alpha_2}{\beta_2}-\tfrac{\beta_2}{\alpha_2})(x_b x^2 - x_ax_cx_d)   
 - (\tfrac{\alpha_1}{\beta_1}-\tfrac{\beta_1}{\alpha_1})(x_c x^2 - x_ax_bx_d)   \\
 & + (\tfrac{\alpha_1}{\beta_2}-\tfrac{\beta_2}{\alpha_1})(x_d x^2 - x_ax_bx_c) - \delta(\tfrac{\alpha_1}{\alpha_2}-\tfrac{\alpha_2}{\alpha_1})(\tfrac{\beta_1}{\beta_2}-\tfrac{\beta_2}{\beta_1})\bigl(\tfrac{\alpha_1\alpha_2}{\beta_1\beta_2}-\tfrac{\beta_1\beta_2}{\alpha_1\alpha_2}\bigr)x \\
& + \delta\Bigl((\tfrac{\alpha_1}{\beta_1}-\tfrac{\beta_1}{\alpha_1})(\tfrac{\alpha_2}{\beta_2}-\tfrac{\beta_2}{\alpha_2})\bigl((\tfrac{\alpha_1}{\beta_2}-\tfrac{\beta_2}{\alpha_1})x_a + (\tfrac{\alpha_2}{\beta_1}-\tfrac{\beta_1}{\alpha_2})x_d\bigr)
 \\
& \phantom{+\delta\Bigl(} - (\tfrac{\alpha_1}{\beta_2}-\tfrac{\beta_2}{\alpha_1})(\tfrac{\alpha_2}{\beta_1}-\tfrac{\beta_1}{\alpha_2})\bigl((\tfrac{\alpha_1}{\beta_1}-\tfrac{\beta_1}{\alpha_1})x_b + (\tfrac{\alpha_2}{\beta_2}-\tfrac{\beta_2}{\alpha_2})x_c\bigr) 
\Bigr) =0.
\end{split}
\end{align}

The above equation satisfies CAFCC in the form \eqref{CAFCCeqsA}, for the two values $\delta=0,1$.

\begin{align}\label{b3ddd}
    \begin{split}
B3&_{(\delta_1;\,\delta_2;\,\delta_3)}(x;x_a,x_b,x_c,x_d;\al,\bt)= \\ 
& x_b x_c-x_a x_d + \tfrac{\delta_2}{2}(\tfrac{\alpha_2}{\alpha_1}-\tfrac{\alpha_1}{\alpha_2})(\tfrac{\beta_1}{\beta_2}-\tfrac{\beta_2}{\beta_1}) + 
{\delta_2}(\tfrac{\alpha_1}{\beta_2} x_a - \tfrac{\alpha_1}{\beta_1} x_b - \tfrac{\alpha_2}{\beta_2} x_c + \tfrac{\alpha_2}{\beta_1} x_d)x \\
& +{\delta_1}(\tfrac{\beta_2}{\alpha_1} x_a - \tfrac{\beta_1}{\alpha_1} x_b - \tfrac{\beta_2}{\alpha_2} x_c + \tfrac{\beta_1}{\alpha_2} x_d)x^{-1}+ 
{\delta_3}\bigl(x_a x_b \alpha_2(\tfrac{x_d}{\beta_2}-\tfrac{x_c}{\beta_1})+x_c x_d \alpha_1(\tfrac{x_a}{\beta_1}-\tfrac{x_b}{\beta_2})\bigr)x^{-1}=0.
    \end{split}
\end{align}

\begin{align}\label{c3ddd}
    \begin{split}
&C3_{(\delta_1;\,\delta_2;\,\delta_3)}(x;x_a,x_b,x_c,x_d;\al,\bt)= \\ 
& \Bigl(\alpha_2 (\beta_1 x_d - \beta_2 x_c) -{\delta_3}\bigl(\alpha_2^2 (\beta_1 x_b-\beta_2 x_a) + \beta_1 \beta_2(\beta_1 x_a - \beta_2 x_b)\bigr)\alpha_1^{-1}\Bigr)x^2  \\
& +\Bigl(\alpha_2^2 (x_b x_c - x_a x_d) + \beta_1 \beta_2 (x_a x_c - x_b x_d) + \alpha_2(\tfrac{\beta_2}{\beta_1}-\tfrac{\beta_1}{\beta_2})\bigl({\delta_1} \alpha_1 - {\delta_3}\tfrac{\beta_1 \beta_2}{\alpha_1}x_a x_b  +  {\delta_2}\tfrac{\beta_1 \beta_2}{\alpha_1} x_c x_d\bigr)\Bigr)x \\
& +\alpha_2 x_a x_b (\beta_2 x_d-\beta_1 x_c) + {\delta_1}\alpha_1\bigl(\beta_1 x_b-\beta_2 x_a + \alpha_2^2 (\tfrac{x_a}{\beta_2} - \tfrac{x_b}{\beta_1})\bigr)  \\
& + {\delta_2}\Bigl(\tfrac{(\alpha_2^2 - \beta_1^2)(\alpha_2^2 - \beta_2^2)}{2\alpha_2\beta_1\beta_2}(\beta_2 x_d-\beta_1 x_c) + \tfrac{x_c x_d}{\alpha_1}\bigl(\beta_1 \beta_2 (\beta_1 x_b-\beta_2 x_a) + \alpha_2^2 (\beta_1 x_a - \beta_2 x_b)\bigr)\Bigr)=0.
    \end{split}
\end{align}

The equations \eqref{a3d}, \eqref{b3ddd}, \eqref{c3ddd}, collectively satisfy CAFCC in the form \eqref{CAFCCeqsBC}, for the four values of $(\delta_1,\delta_2,\delta_3)=(0,0,0)$, $(1,0,0)$, $(\tfrac{1}{2},0,\tfrac{1}{2}),(\tfrac{1}{2},\tfrac{1}{2},0)$, where the parameter for \eqref{a3d} is $\delta=2\delta_2$.

\subsection{\texorpdfstring{$A2_{(\delta_1;\,\delta_2)}$}{A2(delta1;delta2)}, \texorpdfstring{$B2_{(\delta_1;\,\delta_2;\,\delta_3)}$}{B2(delta1;delta2;delta3)}, \texorpdfstring{$C2_{(\delta_1;\,\delta_2;\,\delta_3)}$}{C2(delta1;delta2;delta3)}}\label{app:b2}

Recall the definitions \eqref{thtdef1} and \eqref{thtdef2} given by
\begin{align}
    \theta_{ij}=\theta_i-\theta_j, \qquad i,j\in\{1,2,3,4\},\qquad (\theta_1,\theta_2,\theta_3,\theta_4)=(\alpha_1,\alpha_2,\beta_1,\beta_2).
\end{align}
\begin{align}\label{a2dd}
\begin{split}
 A2&_{(\delta_1;\,\delta_2)}(x;x_a,x_b,x_c,x_d;\al,\bt)= \\ 
& \theta_{23}(x_ax^2 -x_bx_cx_d) - \theta_{24}(x_bx^2 -x_ax_cx_d) - \theta_{13}(x_cx^2 -x_ax_bx_d) + \theta_{14}(x_dx^2 -x_ax_bx_c)  \\
& + \Bigl(\theta_{34}(x_ax_b-x_cx_d) + \theta_{12}(x_ax_c-x_bx_d) - (\theta_{13}+\theta_{24})(x_ax_d-x_bx_c)  \Bigr)x \\
& + {\delta_1}\Bigl(\theta_{14}\theta_{23}(\theta_{13}x_b+\theta_{24}x_c)(2x-\theta_{12}\theta_{34})^{\delta_2} - \theta_{13}\theta_{24}(\theta_{13}x_a+\theta_{23}x_d)(2x+\theta_{12}\theta_{34})^{\delta_2}
   \Bigr) \\
& + {\delta_1}x\theta_{12}\theta_{34}(\theta_{13}+\theta_{24})\bigl(x+x_a+x_b+x_c+x_d - \theta_{12}^2-\theta_{13}\theta_{23}-\theta_{14}\theta_{24}\bigr)^{\delta_2}  \\
& + {\delta_2}\Bigl( x_a \theta_{13}\theta_{14}(\theta_{24}\theta_{14}^2-\theta_{34}x_b) - x_b\theta_{13}\theta_{23}(\theta_{14}\theta_{13}^2-\theta_{12}x_d) - x_c\theta_{14}\theta_{24}(\theta_{23}\theta_{24}^2+\theta_{12}x_a)     \\
& \phantom{+\delta_2} + x_d\theta_{23}\theta_{24}(\theta_{13}\theta_{23}^2 +\theta_{34}x_c) +
     \bigl(x_a x_d\theta_{13}\theta_{42} + x_b x_c\theta_{23}\theta_{14} + {\textstyle \prod\limits_{1\leq i<j\leq4}}\theta_{ij}\bigr)(\theta_{13}+\theta_{24})\!\Bigr)\! =0.
\end{split}
\end{align}

The above equation satisfies CAFCC in the form \eqref{CAFCCeqsA}, for the three values $(\delta_1,\delta_2)=(0,0),(1,0), (1,1)$.

\begin{align}\label{b2ddd}
    \begin{split}
B2&_{(\delta_1;\,\delta_2;\,\delta_3)}(x;x_a,x_b,x_c,x_d;\al,\bt)= \\ 
&    {\delta_1}\Bigl( \theta_{12}\theta_{43}(\theta_{12}^2-\theta_{13}\theta_{14}-\theta_{23}\theta_{24})^{\delta_2}(-x_a-x_b-x_c-x_d)^{\delta_3} \\
&\phantom{\delta} +\bigl(x_a(x+\theta_{14}\theta_{41}^{\delta_3})^{1+{\delta_2}}-x_b(x+\theta_{13}\theta_{31}^{\delta_3})^{1+{\delta_2}}-x_c(x+\theta_{24}\theta_{42}^{\delta_3})^{1+{\delta_2}}+x_d(x+\theta_{23}\theta_{32}^{\delta_3})^{1+{\delta_2}}\bigr)\!\Bigr)   \\
& + {\delta_3}\bigl((x_a x_c-x_b x_d)\theta_{12} + (x_a x_b-x_c x_d)\theta_{34} - x_bx_c(x_a+x_d) + x_ax_d(x_b+x_c)\bigr) \\
& + 2{\delta_2}( \theta_{12}\theta_{34})x^2 + (2{\delta_2} x +{\delta_3})\theta_{12}\theta_{34}(\theta_{13}+\theta_{24})+ (x_a x_d - x_b x_c) (\theta_{31}+\theta_{42})^{\delta_3}(-1)^{\delta_2} =0.
    \end{split}
\end{align}

\begin{align}\label{c2ddd}
    \begin{split}
&C2_{(\delta_1;\,\delta_2;\,\delta_3)}(x;x_a,x_b,x_c,x_d;\al,\bt)= \\ 
& (x_d -x_c)(x^2 + x_a x_b) + \theta_{34} (x^2 - x_a x_b)(\theta_{13}+\theta_{14})^{\delta_3} +2 {\delta_3}(\theta_{23} x_a - \theta_{24} x_b)x^2 \\
& + \Bigl((x_a + x_b + 2 {\delta_2} \theta_{23}\theta_{24})(x_c - x_d) - (x_a-x_b)(\theta_{23}+\theta_{24})(\theta_{13}+\theta_{14})^{\delta_3} + 2 {\delta_3} \theta_{34} x_a x_b \Bigr)x \\
& + {\delta_1}\bigl(x_a \theta_{24} -x_b \theta_{23} + \theta_{34}(\delta_2\theta_{23}\theta_{24}-x)\bigr)
\Bigl(\theta_{13}^{1+{\delta_2}+{\delta_3}}+\theta_{14}^{1+{\delta_2}+{\delta_3}} + 2 {\delta_2} x_c x_d -(x_c+x_d)(\theta_{13}+\theta_{14})^{\delta_2}\Bigr) \\
& +\delta_1 \theta_{23}\theta_{24}\bigl(x_c-x_d+\theta_{34}(\theta_{13}+\theta_{14}- 2 x)^{{\delta_3}}\bigr)(x_a+x_b-\theta_{34}^2-\theta_{23}\theta_{24})^{\delta_2} =0.
    \end{split}
\end{align}

The equations \eqref{a2dd}, \eqref{b2ddd}, \eqref{c2ddd}, collectively satisfy CAFCC in the form \eqref{CAFCCeqsBC}, with the four values $(\delta_1,\delta_2,\delta_3)=(0,0,0),(1,0,0),(1,0,1),(1,1,0)$.

\subsection{\texorpdfstring{$A2_{(\delta_1=0;\,\delta_2=0)}$, $D1$, $C1$}{A1(delta1=0;delta2=0), D1, C1}}\label{app:b1}

\begin{align}\label{d1}
 D1(x_a,x_b,x_c,x_d)= x_a-x_b-x_c+x_d=0,
\end{align}
 \begin{align}\label{c1}
\begin{split}
C1(x;x_a,x_b,x_c,x_d;\al,\bt)=\,
&(x_c+x_d)x^2 +  \bigl(2(\beta_1+\beta_2-2\alpha_2)-(x_a + x_b) (x_c + x_d)\bigr)x  \\
& +2 \bigl(\alpha_2(x_a+x_b) - \beta_2x_a - \beta_1x_b\bigr) + x_ax_b(x_c+x_d)=0.
\end{split}
\end{align}
The type-B equation \eqref{d1}, and type-C equation \eqref{c1}, along with the type-A equation \eqref{a2dd} with $(\delta_1,\delta_2)=(0,0)$, collectively satisfy CAFCC in the form \eqref{CAFCCeqsBC}.

\subsection{Four-leg expressions}

The above equations have equivalent four-leg type expressions that are given respectively in \eqref{4leg}--\eqref{4legc}, with the functions given in Tables \ref{table-A} and \ref{table-BC2} below.  The abbreviation {\it add.}, indicates an additive form of one of the equations \eqref{4leg}, \eqref{4legb}, \eqref{4legc}, given respectively by
\begin{align}
    a(x;x_a;\alpha_2,\beta_1)+a(x;x_d;\alpha_1,\beta_2)-a(x;x_b;\alpha_2,\beta_2)-a(x;x_c;\alpha_1,\beta_1)=0, \\
    b(x;x_a;\alpha_2,\beta_1)+b(x;x_d;\alpha_1,\beta_2)-b(x;x_b;\alpha_2,\beta_2)-b(x;x_c;\alpha_1,\beta_1)=0, \\
    a(x;x_a;\alpha_2,\beta_1)+c(x;x_d;\alpha_1,\beta_2)-a(x;x_b;\alpha_2,\beta_2)-c(x;x_c;\alpha_1,\beta_1)=0.
\end{align}

\begin{table}[htb!]
\centering
\begin{tabular}{c|c}

 Type-A & $a(x;y;\alpha,\beta)$ 
 
 \\
 
 \hline
 
 \\[-0.4cm]



$A3_{(\delta=1)}$ & $\displaystyle
\frac{\alpha^2+\beta^2\overline{x}^2-2\alpha\beta \overline{x}y}
     {\beta^2+\alpha^2\overline{x}^2-2\alpha\beta \overline{x}y}$

\\[0.4cm]

$A3_{(\delta=0)}$ & $\displaystyle
\frac{\beta x-\alpha y}{\alpha x-\beta y}$

\\[0.4cm]

$A2_{(\delta_1=1;\,\delta_2=1)}$ & $\displaystyle
\frac{(\sqrt{x}+\alpha-\beta)^2-y}{(\sqrt{x}-\alpha+\beta)^2-y}$

\\[0.4cm]

$A2_{(\delta_1=1;\,\delta_2=0)}$ & $\displaystyle
\frac{-x+y+\alpha-\beta}{x-y+\alpha-\beta}$

\\[0.4cm]

$A2_{(\delta_1=0;\,\delta_2=0)}$ & $\displaystyle
\phantom{(add.) }\quad  \frac{\alpha-\beta}{x-y} \quad (add.)$

\\[0.25cm]

\hline 
\end{tabular}
\caption{A list of the $a(x;y;\alpha,\beta)$ in \eqref{4leg} for type-A equations \eqref{a3d}, \eqref{a2dd}.  Here $\overline{x}=x+\sqrt{x^2-1}$.}
\label{table-A}
\end{table}


\begin{table}[htb!]
\centering
\begin{tabular}{c|c}

Type-B  & $b(x;y;\alpha,\beta)$ 
 
 \\
 
 \hline
 
 \\[-0.4cm]

$B3_{(\delta_1=\frac{1}{2};\,\delta_2=\frac{1}{2};\,\delta_3=0)}$ &  $\displaystyle
\beta^2+\alpha^2x^2-2\alpha\beta xy$

\\[0.2cm]

$B3_{(\delta_1=\frac{1}{2};\,\delta_2=0;\,\delta_3=\frac{1}{2})}$  & $\displaystyle
\frac{\alpha y-\beta\overline{x}}{\alpha\overline{x}y-\beta}$

\\[0.2cm]

$B3_{(\delta_1=1;\,\delta_2=0;\,\delta_3=0)}$ &  $\displaystyle
\beta-\alpha xy$

\\[0.2cm]

$B3_{(\delta_1=0;\,\delta_2=0;\,\delta_3=0)}$ ($D4$)  & $\displaystyle
y$

\\[0.2cm]

$B2_{(\delta_1=1;\,\delta_2=1;\,\delta_3=0)}$ & $\displaystyle
(x+\alpha-\beta)^2-y$

\\[0.2cm]

$B2_{(\delta_1=1;\,\delta_2=0;\,\delta_3=1)}$  & $\displaystyle
\frac{\sqrt{x}+y+\alpha-\beta}{-\sqrt{x}+y+\alpha-\beta}$

\\[0.2cm]

$B2_{(\delta_1=1;\,\delta_2=0;\,\delta_3=0)}$ & $\displaystyle
x+y+\alpha-\beta$

\\[0.2cm]

$B2_{(\delta_1=0;\,\delta_2=0;\,\delta_3=0)}$ ($D4$) & $\displaystyle
y$

\\[0.2cm]

$D1$  & $\displaystyle
\phantom{(add.) }\quad  y \quad (add.)$

\\[0.05cm]

\hline 
\end{tabular}
\hspace{0.0cm}
\begin{tabular}{c|c}

 Type-C & $c(x;y;\alpha,\beta)$ 
 
 \\
 
 \hline
 
 \\[-0.4cm]

 $C3_{(\delta_1=\frac{1}{2};\,\delta_2=\frac{1}{2};\,\delta_3=0)}$  
&
$\displaystyle
\frac{\alpha-\beta\overline{x}y}{\alpha\overline{x}-\beta y}$

\\[0.2cm]

$C3_{(\delta_1=\frac{1}{2};\,\delta_2=0;\,\delta_3=\frac{1}{2})}$
&
$\bigl(\tfrac{\alpha^2}{\beta}+\beta x^2-2\alpha x y\bigr)$

\\[0.2cm]

$C3_{(\delta_1=1;\,\delta_2=0;\,\delta_3=0)}$      
&
$xy-\tfrac{\alpha}{\beta}$

\\[0.2cm]

$C3_{(\delta_1=0;\,\delta_2=0;\,\delta_3=0)}$ 
&
$y$

\\[0.2cm]

$C2_{(\delta_1=1;\,\delta_2=1;\,\delta_3=0)}$
&
$\displaystyle
\frac{-\sqrt{x}+y-\alpha+\beta}{\sqrt{x}+y-\alpha+\beta}$

\\[0.2cm]

$C2_{(\delta_1=1;\,\delta_2=0;\,\delta_3=1)}$ 
&
$(x-\alpha+\beta)^2-y$

\\[0.2cm]

$C2_{(\delta_1=1;\,\delta_2=0;\,\delta_3=0)}$ 
&
$\displaystyle
x+y-\alpha+\beta$

\\[0.2cm]

$C2_{(\delta_1=0;\,\delta_2=0;\,\delta_3=0)}$ 
&
$\phantom{(a) }$ \,  $-\frac{y+\beta}{2x}  \quad (add.)$

\\[0.2cm]

$C1$ 
& 
$\phantom{(add.) }\quad  y \quad (add.)$

\\[0.05cm]

\hline 
\end{tabular}
\caption{Left: A list of the $b(x;y;\alpha,\beta)$ in \eqref{4legb} for the type-B equations \eqref{b3ddd}, \eqref{b2ddd}. Right: A list of the $c(x;y;\alpha,\beta)$ in \eqref{4legc} for the type-C equations \eqref{c3ddd}, \eqref{c2ddd}.  For $C3_{(\delta_1;\,\delta_2;\,\delta_3)}$ the $a(x;y;\alpha,\beta)$ is given by $A3_{(2\delta_2)}$, for $C2_{(\delta_1;\,\delta_2;\,\delta_3)}$ the $a(x;y;\alpha,\beta)$ is given by $A2_{(\delta_1;\,\delta_2)}$, and for $C1$ the $a(x;y;\alpha,\beta)$ is given by $A2_{(\delta_1=0;\,\delta_2=0)}$.  Here $\overline{x}=x+\sqrt{x^2-1}$.}
\label{table-BC2}
\end{table}



\end{appendices}

{

\bibliographystyle{utphys}

\providecommand{\href}[2]{#2}\begingroup\raggedright\endgroup

}

\end{document}